\renewcommand{\hat}[1]{\widehat{#1}}
\renewcommand{\bar}[1]{\overline{#1}}
\newcommand{\calC}{\mathcal{C}}
\newcommand{\calD}{\mathcal{D}}
\newcommand{\bbP}{\mathbb{P}}
\newcommand{\bbC}{\mathbb{C}}
\newcommand{\bbN}{\mathbb{N}}
\newcommand{\bbZ}{\mathbb{Z}}
\newcommand{\eps}{\epsilon}
\newcommand{\Ness}{\mathrm{N}_{\mathrm{ess}}}
\newcommand{\Ann}{\operatorname{Ann}}
\newcommand{\Span}{\operatorname{Span}}
\newcommand{\sat}{\operatorname{sat}}
\newcommand{\rank}{\mathrm{rank}}
\newcommand{\WR}{\mathsf{WR}}
\newcommand{\bwr}{\underline{\mathsf{WR}}}
\newcommand{\maxr}{\mathsf{maxR}}
\newcommand{\x}{\boldsymbol{x}}
\newcommand{\VBP}{\mathsf{VBP}}
\newcommand{\VP}{\mathsf{VP}}
\newcommand{\VWaring}{\mathsf{VWaring}}
\newcommand{\VNP}{\mathsf{VNP}}
\newcommand{\VF}{\mathsf{VF}}
\newcommand{\abpw}{\textup{\textsf{abpw}}}
\newcommand{\tr}{\operatorname{Tr}}
\newcommand{\CR}{\mathsf{CR}}
\newcommand{\bCR}{\underline{\mathsf{CR}}}
\newcommand{\SR}{\mathsf{SR}}
  \title{Fixed-parameter debordering of Waring rank} %TODO Please add
  \author{Pranjal Dutta}{School of Computing, National University of Singapore (NUS), Singapore \and \url{https://sites.google.com/view/pduttashomepage}}{pranjal@nus.edu.sg}{https://orcid.org/0000-0001-9137-9025}{Funded under the project ``Foundation of Lattice-based Cryptography'', by NUS-NCS Joint Laboratory for Cyber Security.}
  \author{Fulvio Gesmundo}{Institut de Math{\'e}matiques de Toulouse, Universit{\'e} Paul Sabatier, Toulouse, France \and \url{https://fulges.github.io/}}{fgesmund@math.univ-toulouse.fr}{https://orcid.org/0000-0001-6402-021X}{}
  \author{Christian Ikenmeyer}{University of Warwick, Warwick, UK  \and \url{https://www.dcs.warwick.ac.uk/~u2270030/}}{christian.ikenmeyer@warwick.ac.uk}{https://orcid.org/0000-0003-4654-177X}{Supported by EPSRC grant EP/W014882/1.}
  \author{Gorav Jindal}{Max Planck Institute for Software Systems, Saarbr{\"u}cken, Germany \and \url{https://goravjindal.github.io/}}{gjindal@mpi-sws.org}{https://orcid.org/0000-0002-9749-5032}{}
  \author{Vladimir Lysikov}{Ruhr-Universit\"at Bochum, Bochum, Germany \and \url{https://qi.rub.de/lysikov}}{Vladimir.Lysikov@ruhr-uni-bochum.de}{https://orcid.org/0000-0002-7816-6524}{Part of the work was done while V.L. was affiliated with the QMATH Centre, University of Copenhagen. V.L. acknowledges financial support from VILLUM FONDEN via the QMATH Centre of Excellence (Grant No. 10059) and the European Union (ERC Grant Agreements 818761 and 101040907).
  Views and opinions expressed are however those of the author(s) only and do not necessarily reflect those of the European Union or the European Research Council Executive Agency. Neither the European Union nor the granting authority can be held responsible for them.}
  \authorrunning{P. Dutta, F. Gesmundo, C. Ikenmeyer, G. Jindal, and V. Lysikov} 
  \keywords{border complexity, Waring rank, debordering, apolarity}    
\begin{document}

\maketitle

\begin{abstract}
Border complexity measures are defined via limits (or topological closures), so that any function which can approximated arbitrarily closely by low complexity functions itself has low border complexity.
Debordering is the task of proving an upper bound on some non-border complexity measure in terms of a border complexity measure, thus getting rid of limits.

Debordering is at the heart of understanding the difference between Valiant's determinant vs permanent conjecture, and Mulmuley and Sohoni's variation which uses border determinantal complexity.
The debordering of matrix multiplication tensors by Bini played a pivotal role in the development of efficient matrix multiplication algorithms. Consequently, debordering finds applications in both establishing computational complexity lower bounds and facilitating algorithm design.
Currently, very few debordering results are known.

In this work, we study the question of debordering the border Waring rank of polynomials.
Waring and border Waring rank are very well studied measures in the context of invariant theory, algebraic geometry, and matrix multiplication algorithms.
For the first time, we obtain a Waring rank upper bound that is exponential in the border Waring rank and only {\em linear} in the degree. All previous known results were exponential in the degree.
For polynomials with constant border Waring rank, our results imply an upper bound on the Waring rank linear in degree, which previously was only known for polynomials with border Waring rank at most 5.
\end{abstract}

\newpage

\section{Introduction}
\label{sec:intro}

Given a homogeneous polynomial $f$ of degree $d$ over $\bbC$, its \emph{Waring rank} $\WR(f)$ is defined as the smallest number $r$ such that there exist homogeneous linear forms $\ell_1, \dots, \ell_r$ with
\[f = \sum_{i = 1}^r \ell_i^d.\]
Equivalently, $\WR(f)$ is the minimal top fanin of a homogeneous $\Sigma \wedge \Sigma$ circuit computing $f$.
In the case of quadratic forms (polynomials of degree $2$), this notion is equivalent to the rank of the symmetric matrix associated with a quadratic form;
hence Waring rank can be regarded as a generalization of the rank of a symmetric matrix.
Unlike the case of matrices, when $d \geq 3$, Waring rank is in general not lower semicontinuous \footnote{A function $f$ is lower semicontinuous at $a$ if $\underset{x \rightarrow a}{\lim \inf}~f(x) \ge f(a)$.}, that is, a limit of a family of polynomials with low Waring rank can have higher Waring rank.
The simplest example is given by the polynomial $x^{d - 1} y$, which has Waring rank $d$ (this is a classical result~\cite{Oldenburger}), but can be presented as a limit
\[
    x^{d - 1} y = \lim_{\eps \to 0} \frac{1}{d \eps} \left[(x + \eps y)^d - x^d\right]
\]
of a family of Waring rank $2$ polynomials (note that we work over $\bbC$, so this expression can be rearranged into a sum of two powers by moving constants inside the parentheses).
The \emph{border Waring rank} is a semicontinuous variation of Waring rank defined as follows:
the border Waring rank of~$f$, denoted $\bwr(f)$, is the smallest $r$ such that $f$ can be written as a limit of a sequence of polynomials of Waring rank at most $r$.
We have $\bwr(x^{d - 1} y) = 2$.

Waring rank was studied already in the eighteenth century \cite{Cay:TheoryLinTransformations,Sylv:PrinciplesCalculusForms,Cleb:TheorieFlachen} in the context of invariant theory, with the aim to determine normal forms for homogeneous polynomials. We mention the famous Sylvester Pentahedral Theorem, stating that a generic cubic form in four variables can be written uniquely as a sum of five cubes. At the beginning of the twentieth century, the early work on secant varieties in classical algebraic geometry \cite{Palatini:SuperficieAlg,Terr:seganti} implicitly commenced the study of border Waring rank. The notion of border rank for tensors was introduced in \cite{BCRL79} to construct faster-than-Strassen matrix multiplication algorithms. In \cite{Bin80}, Bini proved that tensor border rank and tensor rank define the same matrix multiplication exponent. Today this theory is deeply related to the study of Gorenstein algebras \cite{Iarrobino-Kanev,BuczBucz:SecantVarsHighDegVeroneseReembeddingsCataMatAndGorSchemes}, the Hilbert scheme of points \cite{Jeli:Pathologies}, and deformation theory \cite{BB-apolarity,JelMan:LimitsSaturatedIdeals}.

In the context of algebraic complexity theory, Waring rank defines a model of computation known as the homogeneous diagonal depth 3 circuits or homogeneous $\Sigma \wedge \Sigma$ circuits, see e.g.~\cite{sax08}.
This is a very weak computational model (determinants have provably exponential Waring rank~\cite{Gurvits-permanent}).
Nevertheless, it is important as one of the simplest nontrivial computational models and has many unresolved open problems associated with it.
The Waring rank of a generic homogeneous polynomial of degree $d \geq 3$ in $n$ variables is $\lceil \frac{1}{n} \cdot \binom{n+d-1}{d} \rceil = \Omega(\frac{n^{d - 1}}{d!})$
with finite number of exceptional values of $(n, d)$~\cite{AlHir:Poly_interpolation_in_several_variables},
but the best lower bounds obtained are of order $2 n^{\lfloor d/2 \rfloor}$ (from tensor rank lower bounds in~\cite{DBLP:conf/coco/AlexeevFT11}).
For a large class of lower bound methods, so called rank methods, there are barrier results showing that cannot give bounds significantly larger than $n^{\lfloor d/2 \rfloor}$~\cite{EGOW18,DBLP:conf/focs/GargMOW19,Galazka:VectorBundlesGiveEqnsForCactus}.
Waring rank can be useful when the degree of the polynomials considered is constant.
For example, the results of \cite{CHIL18} guarantee that the matrix multiplication exponent is controlled by the Waring rank or border Waring rank of the polynomial $\tr(X^3)$ with $X \in \bbC^{n \times n}$,
which is a symmetrized version of the matrix multiplication tensor.

The lack of semicontinuity is a common phenomenon in algebraic complexity not specific to Waring rank.
Most complexity measures defined in terms of discrete structures (such as circuits or formulas) or in terms of decompositions (such as Waring rank or tensor rank) are not lower semicontinuous.
To any algebraic complexity measure one can define the corresponding \emph{border complexity measure} in the same way as border Waring rank arises from Waring rank: the border complexity of $f$ is the smallest number $s$ such that $f$ can be approximated arbitrarily closely by polynomials of complexity at most $s$.
Border tensor rank appears in the study of the computational complexity of matrix multiplication~\cite{BCRL79,Bin80},
border complexity for algebraic circuits was first discussed in~\cite{MS01} and~\cite{Bur04}.

Replacing a complexity measure by its border measure in a complexity class, we obtain the \emph{closure} of this class. For example, $\overline{\VP}$ is the class of all polynomial sequences with polynomially bounded degree and border circuit size, and $\overline{\VF}$ is defined analogously using formula size.
Formally, the closure $\bar{\calC}$ of a complexity class $\calC$ consists of all polynomial sequences $(f_n)_{n \in \bbN}$
such that there exists a bivariate sequence $(g_{n, m})_{n, m \in \bbN}$ with the property that $(g_{n, m})_{n \in \bbN}$ lies in $\calC$ for every fixed $m$, and $f_n = \lim_{m \to \infty} g_{n, m}$.
The operation of going to the closure is indeed a closure operator in the sense of topology, see \cite{IS22}.

The relationship between border and non-border complexity is far from straightforward.
In some contexts taking a limit can be a very strong operation, which sometimes turns non-universal computational models into universal ones. For example, there are polynomials which cannot be computed by width~$2$ algebraic branching programs~\cite{DBLP:journals/cc/AllenderW16}, but the corresponding border measure is related to border formula size~\cite{BIZ18}, so every polynomial is a limit of width~$2$ ABPs. Kumar~\cite{kum20} gives an even easier example: every polynomial can be presented as a limit of a sum of $2$~products of affine linear forms.
On the other hand, there are examples of complexity measures which are lower semicontinuous, so that there is no difference between border and non-border complexity measures.
A simplest example is the number of monomials in a polynomial (equivalently, top fanin of a $\Sigma\Pi$ circuit).
Other examples are noncommutative ABP width (implicit in~\cite{nis91}) and read-once ABP width~\cite{duttademystifying}.

Semicontinuous complexity measures and closed complexity classes are easier to work with using geometric methods.
Because of this, the geometric complexity theory program~\cite{MS01} proposes to study conjectures $\VNP\not\subseteq\overline{\VBP}$ and $\VNP \not\subseteq \overline{\VP}$
instead of Valiant's conjectures $\VNP \neq \VBP$ and $\VNP \neq \VP$.
The $\VNP\not\subseteq\overline{\VP}$ conjecture was also proposed in~\cite{Bur04}.
These border variants of Valiant's conjecture are now usually referred to as the Mulmuley--Sohoni conjectures.
Mulmuley--Sohoni conjectures are stronger that Valiant's conjectures, but it is not clear how much stronger,
as most questions about the relations between complexity classes and their closures are wide open.
It is unknown even whether or not $\overline{\VF} \subseteq \VNP$.
Theorems of the form $\overline{\calC}\subseteq \calD$ for algebraic complexity classes $\calC$ and $\calD$ are called \emph{debordering} results.
These kind of results can also be proven directly on the complexity measures, by giving an upper bound on a non-border complexity in terms of border complexity.
For example, $\abpw(f) \leq \bwr(f)$, where $\abpw(f)$ is the algebraic branching program width of $f$.
This is proven using semicontinuity of noncommutative ABP width, see~\cite[Thm~4.2]{blaser2020complexity} and \cite{Forbes16}.
In terms of complexity classes, this means $\overline{\VWaring} \subseteq \VBP$, where $\VWaring$ is the class of $p$-families that have polynomially bounded Waring rank.

Forbes~\cite{wact-open-problems} conjectures that $\overline{\VWaring} = \VWaring$.
Since this puts $\overline{\VWaring}$ in $\VF$, a proof of this conjecture will also improve the results of Dutta, Dwivedi and Saxena~\cite{duttademystifying} from $\overline{\Sigma^{[r]}\Pi\Sigma} \subset \VBP$ to $\overline{\Sigma^{[r]}\Pi\Sigma} \subset \VF$.
Ballico and Bernardi~\cite{ballicobernardi2012} propose an even stronger conjecture stating that $\WR(f) \leq (\bwr(f) - 1) \cdot \deg f$.
This was proven by case analysis for small values of border Waring rank: for $\bwr(f) \leq 3$ in~\cite{landsberg2010}, for $\bwr(f) = 4$ in~\cite{ballicobernardi13}, and for $\bwr(f) = 5$ and $\deg f \geq 9$ in~\cite{ballico2019ranks}.

\subsection*{Main result}

We prove the following improved debordering theorem for border Waring rank.
\begin{theorem}[Fixed-parameter debordering]
\label{thm:intro:WR-deborder}
Let $f$ be a homogeneous polynomial with $\deg f = d$ and $\bwr(f) = r$. Then $\WR(f) \le 4^r \cdot d.$
\end{theorem}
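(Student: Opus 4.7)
The plan is to exploit the scheme-theoretic (apolarity) characterization of border Waring rank and reduce to a local analysis at each support point of an apolar scheme. Since cactus rank is bounded above by border Waring rank, $\bwr(f) = r$ implies there is a zero-dimensional subscheme $Z \subset \bbP^n$ of length $r$ whose homogeneous apolar ideal annihilates $f$. Decomposing $Z = Z_1 \sqcup \cdots \sqcup Z_s$ into its local components at distinct points $p_i$ of lengths $r_i$ with $\sum r_i = r$, the Chinese remainder theorem applied to the apolar pairing produces a decomposition $f = f_1 + \cdots + f_s$ with each $f_i$ apolar to $Z_i$. It then suffices to show $\WR(f_i) \le 4^{r_i - 1} \cdot d$, since the elementary inequality $\sum_i 4^{r_i - 1} \le 4^{r-1}$ (valid whenever all $r_i \ge 1$) yields $\WR(f) \le 4^{r-1} \cdot d \le 4^r \cdot d$.

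For each local piece, I would change coordinates so that $p_i = [1{:}0{:}\cdots{:}0]$ and dehomogenize $f_i$ in the chart $x_0 = 1$ to $\tilde f_i(y) = f_i(1, y_1, \ldots, y_n)$. Apolarity becomes the condition that $\tilde f_i$ is annihilated by a local Artinian ideal $J_i \subset \bbC[\partial_1, \ldots, \partial_n]$ of length $r_i$. Two structural consequences follow. First, the socle degree bound $J_i \supseteq \mathfrak{m}^{r_i}$ forces $\deg \tilde f_i \le e_i := r_i - 1$. Second, the perpendicular space $J_i^\perp$ is closed under the $\partial_j$, so $\partial_1 \tilde f_i, \ldots, \partial_n \tilde f_i$ all lie in the $r_i$-dimensional space $J_i^\perp$; since $\tilde f_i \in J_i^\perp$ has strictly higher degree than its partial derivatives, it is linearly independent from them, so the span $\Span(\partial_1 \tilde f_i, \ldots, \partial_n \tilde f_i)$ -- which equals the essential variable count of $\tilde f_i$ -- has dimension at most $r_i - 1$.

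After a further linear change of variables, we may assume $\tilde f_i \in \bbC[y_1, \ldots, y_{r_i-1}]$, and its homogenization $\hat{\tilde f}_i$ to degree $e_i$ is a form of degree $e_i$ in at most $r_i$ variables. Its Waring rank is bounded by the ambient dimension:
\[
\WR(\hat{\tilde f}_i) \;\le\; \dim \Sym^{e_i}(\bbC^{r_i}) \;=\; \binom{e_i + r_i - 1}{e_i} \;\le\; \binom{2(r_i - 1)}{r_i - 1} \;\le\; 4^{r_i - 1}.
\]
Writing $\hat{\tilde f}_i = \sum_{j=1}^{N_i} c_j M_{i,j}^{e_i}$ with $N_i \le 4^{r_i - 1}$ and using the rehomogenization $f_i = L_i^{d - e_i} \hat{\tilde f}_i$ with $L_i = x_0$, one gets $f_i = \sum_j c_j L_i^{d - e_i} M_{i,j}^{e_i}$. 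Each summand is a degree-$d$ binary form in the two linear forms $L_i, M_{i,j}$ and therefore has Waring rank at most $d$; this yields $\WR(f_i) \le 4^{r_i - 1} \cdot d$, and summing over $i$ gives the claimed bound.

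The main obstacle lies in making the apolarity framework rigorous. Border Waring rank corresponds to smoothable Gorenstein schemes rather than arbitrary zero-dimensional schemes, so dropping to the weaker notion of cactus rank is convenient, but one must then verify that the Chinese remainder decomposition $(I_Z)^\perp_d = \bigoplus_i (I_{Z_i})^\perp_d$ holds in the specific degree $d$ rather than merely asymptotically. Points $p_i$ lying on the hyperplane at infinity in a chosen chart also need separate handling, though a generic linear change of coordinates resolves this. Once these foundational issues are settled, the rest of the argument is a routine combination of dimension counting with the classical binary-form Waring rank bound.
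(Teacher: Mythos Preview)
Your approach closely parallels the paper's scheme-theoretic alternative proof of the key decomposition lemma, and your final Waring rank estimate---decompose each $g_k$ into powers of linear forms and bound $\WR(\ell^a m^b) \le \max(a,b)+1 \le d$---matches the paper exactly. However, there is a real gap at the first step.

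The assertion that cactus rank is bounded above by border Waring rank is not justified and is not a standard inequality. The known relations are $\bCR(f) \le \CR(f) \le \SR(f)$ and $\bCR(f) \le \bwr(f) \le \SR(f)$, with no general comparison between $\CR(f)$ and $\bwr(f)$; wild forms, where $\bwr < \SR$, are precisely the candidates for $\CR > \bwr$. What the paper actually invokes (in its scheme-theoretic proof) is the Buczy\'nska--Buczy\'nski result that $\bwr(f) = \SR(f)$ whenever $d \ge r - 1$, which yields a \emph{smoothable} apolar scheme of length $r$ and hence $\CR(f) \le r$ in that regime only. The case $d < r - 1$ must be handled separately; the paper dispatches it trivially via $\Ness(f) \le r$, giving $\WR(f) \le \maxr(r,d) \le \maxr(r,r-1) \le \binom{2r-2}{r-1}$. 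You omit this case split, and your ``obstacle'' paragraph misidentifies the difficulty: the issue is not making the Chinese remainder decomposition work in degree $d$, but getting any length-$r$ apolar scheme at all.

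Two smaller points. The sum $(I_Z)_d^\perp = \sum_i (I_{Z_i})_d^\perp$ holds in every degree since $(\bigcap A_i)^\perp = \sum A_i^\perp$ for subspaces; only the \emph{direct} sum needs a regularity hypothesis, and you do not need uniqueness of the $f_i$. Also, the paper's \emph{primary} proof avoids schemes entirely: it groups the summands of a border decomposition by their projective limits $\lim_{\eps\to 0}[\ell_k(\eps)]$ and shows, by an elementary partial-derivative induction, that each local group converges to something of the form $\ell^{d-r_k+1} g_k$ when $d \ge r-1$, recovering the same generalized additive decomposition without any apolarity machinery.
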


Note that the example of the polynomial $x^{d - 1}y$ with $\bwr(x^{d - 1}y) = 2$ and $\WR(x^{d - 1}y) = d$ shows that any debordering bound must necessarily depend on both border Waring rank $r$ and the degree $d$.
We call our result a \emph{fixed-parameter debordering} because the bound is polynomial (in this case even linear) in $d$, but exponential in the complexity parameter $r$.
In the case of a fixed border Waring rank this gives a bound linear in the degree.
This was previously known only for $\bwr(f) \leq 5$.
Even for $r = O(\log d)$ we obtain an upper bound polynomial in $d$.

To the best of our knowledge, this is the first fixed-parameter debordering result.
Previous methods applied to border Waring rank only allow upper bounds of the order $d^r$ or~$r^d$.
To get $\WR(f) \leq O(d^r)$, note that a polynomial with border Waring rank $r$ can be transformed into a polynomial in only $r$ variables using a linear change of variables (see~\cref{thm:essvar}), and then take the maximal possible Waring rank of an $r$-variate polynomial of degree $d$ as an upper bound.
Alternatively, an upper bound $\WR(f) \leq 2^{d - 1} r^d$ can be obtained by using the previously mentioned debordering into an ABP ($\abpw(f) \leq \bwr(f)$) and writing the ABP as a sum of at most $r^d$ products, one for each path.
Other known debordering techniques, such as the interpolation technique using the bound on the degree of $\eps$ in the approximation from the work of Lehmkuhl and Lickteig~\cite{lehmkuhl1989order} (which is exponential in the degree of the polynomial), or the \textsf{DiDIL} technique from~\cite{duttademystifying} can be applied in the border Waring rank setting, but do not improve over the simpler results discussed above.

\subsection*{Proof ideas}

The main ideas for the proof come from \emph{apolarity theory} and the study of $0$-dimensional schemes in projective space (see~\cref{sec:apolarity}), but we managed to simplify the proof so that it is elementary and {\em does not use} the language of algebraic geometry and is based on partial derivative techniques (see~\cref{sec:borderrank}). 

To prove the debordering, we transform a border Waring rank decomposition for $f$ into a \emph{generalized additive decomposition}~\cite{Iarrobino_1995,BBM-ranks,BerOneTau:SchemesEvincedGAD} of the form
$
f = \sum_{k = 1}^m \ell_k^{d - r_k + 1} g_k,
$
where $\ell_k$~are linear forms, and $g_k$ are homogeneous polynomials of degrees $r_k - 1$.
We then obtain an upper bound on the Waring rank, by first decomposing each $g_k$ with respect to a basis consisting of powers of linear forms, and then using the classical fact (see also~\cite{BBT-monomials}) that $\WR(\ell_1^{a} \ell_2^b) \le \max(a+1,b+1)$. 

To construct a generalized additive decomposition, we divide the summands of a border rank decomposition into several parts such that cancellations happen \emph{only} between summands belonging to the same part; see~\cref{lem:borderrank-gad}. The key insight is that if the degree of polynomials involved is high enough, namely when $\deg f \geq \bwr(f) - 1$, then all parts of the decomposition are ``local'' in the sense that the lowest order term in each summand is a multiple of the same linear form. Each local part gives one term of the form $\ell^{d - r + 1} g$, where $r$ is the number of rank one summands in the part and $\ell$ is the common lowest order linear form; see~\cref{lem:local-limit}. %\pdc{referred lems.}

For example, consider the family of polynomials $f_d = x_0^{d - 1} y_0 + x_1^{d - 1} y_1 + 2 (x_0 + x_1)^{d - 1} y_2$, adapted from~\cite{BB-wild}.
If $d > 3$, then the border Waring rank of $f$ is at most $6$, as evidenced by the decomposition
\begin{equation}\label{eq:bwr-example-highdegree}
f_d = \lim_{\eps \to 0} \frac{1}{d \eps}\left[(x_0 + \eps y_0)^d - x_0^d + (x_1 + \eps y_1)^d - x_1^d + 2 (x_0 + x_1 + \eps y_2)^d - 2 (x_0 + x_1)^d \right],
\end{equation}
and a matching lower bound is obtained by considering the dimension of the space of second order partial derivatives.
The summands of the decomposition~\eqref{eq:bwr-example-highdegree} can be divided into three pairs. The lowest order term of the first pair is $x_0^d$, the one of the second pair is $x_1^d$ and the one of the third pair is $(x_0+x_1)^d$.
For each pair, the sum of the two powers individually converges to a limit as $\eps \to 0$; these three limits are, respectively, $x_0^{d-1} y_0$, $x_1^{d - 1} y_1$, and $2 (x_0 + x_1)^{d - 1} y_2$, which are the summands of a generalized additive decomposition for $f_d$.

When $d = 3$, the polynomial $f_d$ is an example of a ``wild form''~\cite{BB-wild}.
It has border Waring rank $5$ given for example by the decomposition
\begin{multline}\label{eq:bwr-example-wild}
f_3 = \lim_{\eps \to 0} \frac{1}{9 \eps} \left[ 3 (x_0 + \eps y_0)^3 + 3 (x_1 + \eps y_1)^3 + \right. \\ 
    \left. 6 (x_0 + x_1 + \eps y_2)^3 - (x_0 + 2 x_1)^3 - (2x_0 + x_3)^3 \right].
\end{multline}
Unlike the previous decomposition, this one cannot be divided into parts that have limits individually,
and is not local --- all summands have different lowest order terms.
This is only possible if the degree is low.

The condition on the degree is related to algebro-geometric questions about regularity of $0$-dimensional schemes~\cite[Thm.~1.69]{Iarrobino-Kanev}, but for the schemes arising from border rank decompositions, this is ultimately a consequence of the fact that $r$ distinct linear forms have linearly independent $d$-th powers when $d \geq r - 1$. 

\section{Debordering border Waring rank}
\label{sec:deborderingWR}

The goal of this section is to prove~\cref{thm:intro:WR-deborder}. Given a homogeneous degree $d$ polynomial $f$, we provide upper bounds for $\WR(f)$ in terms of $\bwr(f)$ and $d$.

\subsection{Definitions}

In this section we introduce some notation and give a formal definition of Waring rank and border Waring rank.
We work over the field $\bbC$ of complex numbers. The space of homogeneous polynomials of degree $d$ in variables $\x = (x_1, \dots, x_n)$ is denoted by $\bbC[\x]_d$.
We write $f \simeq g$ for $f, g \in \bbC(\eps)[\x]$ if $\lim_{\eps \to 0} f = \lim_{\eps \to 0} g$ (in particular, both limits must exist).
Recall that the projective space $\bbP V$ is defined as the set of lines through the origin in $V$, that is, for each nonzero $v \in V$ we have a corresponding line $[v] \in \bbP V$, and $[v] = [w]$ if and only if $v = \alpha w$ for some $\alpha$.

\begin{definition}
A \emph{Waring rank decomposition} of a homogeneous polynomial $f \in \bbC[\x]_d$ is a decomposition of the form
\[
f = \sum_{k = 1}^r \ell_k^d
\]
for some linear forms $\ell_1, \dots, \ell_r \in \bbC[\x]_1$.
The minimal number of summands in a Waring rank decomposition is called the \emph{Waring rank} of $f$ and is denoted by $\WR(f)$.
\end{definition}
It is known that every homogeneous polynomial over $\bbC$ has finite Waring rank~\cite{Oldenburger}.

\begin{definition}
A \emph{border Waring rank decomposition} of a homogeneous polynomial $f \in \bbC[\x]_d$ is an expression of the form
\[
f = \lim_{\eps \to 0} \sum_{k = 1}^r \ell_k^d
\]
where $\ell_1, \dots, \ell_r \in \bbC(\eps)[\x]_1$, that is, $\ell_i$ are linear forms in $\x$ with coefficients rationally dependent on $\eps$.
The \emph{border Waring rank} $\bwr(f)$ is the minimal number of summands in a border Waring rank decomposition.
\end{definition}
Equivalently, the border Waring rank of $f \in \bbC[\x]_d$ can be defined as the minimal number $r$ such that $f$ lies in the closure of the set $W_{d, r} = \{ g \in \bbC[\x]_d \mid \WR(g) \leq r \}$ of all polynomials with Waring rank at most $r$.
The set $W_{d, r}$ is constructible, so its Zariski and Euclidean closures coincide, see e.g.~\cite[Anh.I.7.2 Folgerung]{Kra85}.
The equivalence to the definition given above was established by Alder~\cite{alder1984} (cited by~\cite[Ch.20]{bcs97}) for a similar notion of tensor rank, the proof remains essentially the same for Waring rank of polynomials.

\subsection{Orbit closure and essential variables}\label{sec:essvardebord}

The number of essential variables of a homogeneous polynomial $f$ is the minimum integer $m$ such that there is a linear change of coordinates after which $f$ can be written as a polynomial in $m$ variables.
Denote the number of essential variables of $f$ by $\Ness(f)$.
It is a classical fact, which already appears in \cite{Sylv:PrinciplesCalculusForms}, that the number of essential variables of $f$ equals the dimension of the linear span of its first order partial derivatives, or equivalently the rank of the first partial derivative map. In particular $\Ness(-)$ is a lower semicontinuous function. We refer to \cite{carlini2006reducing} and~\cite[Lemma~B.1]{kayal2007polynomial} for modern proofs of this result.

An immediate consequence of the semicontinuity of the number of essential variables is the following result.
\begin{lemma}
\label{thm:essvar}
For a homogeneous polynomial $f \in \bbC[\x]_d$ we have $\Ness(f) \leq \bwr(f)$.
\end{lemma}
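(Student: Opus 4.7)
The plan is to combine two ingredients already present in the excerpt: a pointwise bound $\Ness(g)\leq\WR(g)$ for polynomials of finite Waring rank, and the lower semicontinuity of $\Ness$ which is quoted in the paragraph preceding the statement.

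First I would establish the pointwise bound $\Ness(g)\leq\WR(g)$. If $g=\sum_{i=1}^{r}\ell_i^{d}$, then every partial derivative $\partial g/\partial x_j$ is a linear combination of $\ell_1^{d-1},\ldots,\ell_r^{d-1}$, so the linear span of the first-order partial derivatives of $g$ has dimension at most $r$. Using the classical characterization of $\Ness$ as the dimension of this span (recalled just before the statement), one obtains $\Ness(g)\leq r=\WR(g)$ for every homogeneous $g$ with finite Waring rank.

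Next, let $\bwr(f)=r$ and write $f=\lim_{\eps\to 0}f_\eps$ with $f_\eps:=\sum_{i=1}^{r}\ell_i(\eps)^{d}$ and $\ell_i(\eps)\in\bbC(\eps)[\x]_1$. For all $\eps$ in a sufficiently small punctured neighborhood of $0$, each $\ell_i(\eps)$ is well-defined as an element of $\bbC[\x]_1$, so $\WR(f_\eps)\leq r$, and hence $\Ness(f_\eps)\leq r$ by the previous paragraph. Applying the lower semicontinuity of $\Ness$ along the family $\{f_\eps\}$ yields
\[
\Ness(f)\;\leq\;\liminf_{\eps\to 0}\Ness(f_\eps)\;\leq\;r\;=\;\bwr(f),
\]
which is the desired bound.

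There is no real obstacle here: the entire content of the statement is the lower semicontinuity of $\Ness$, which was already quoted above, and the reduction to it is essentially mechanical. The only minor point is the potential presence of poles of $\ell_i(\eps)$ at $\eps=0$, but this is harmless because the definition of $\bwr$ only requires $\lim_{\eps\to 0}f_\eps$ to exist, so the $f_\eps$ are well-defined and have Waring rank at most $r$ on a punctured neighborhood of $0$, where the semicontinuity argument is applied.
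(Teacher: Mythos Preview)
Your proof is correct and follows essentially the same approach as the paper: first establish $\Ness(g)\leq\WR(g)$, then pass to the limit via the lower semicontinuity of $\Ness$. The only cosmetic difference is that you bound $\Ness(g)$ via the span of the $\ell_i^{d-1}$ using the partial-derivative characterization, whereas the paper performs an explicit change of variables onto a basis of the span of the $\ell_i$; both arguments are equally valid and already supported by the discussion preceding the lemma.
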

\begin{proof}
We first prove $\Ness(f) \leq \WR(f)$.
Let $p$ be the dimension of the linear space spanned by the linear forms $\ell_k$ in the decomposition $f = \sum_{k = 1}^r \ell_k^d$. 
Without loss of generality the linear forms $\ell_1, \dots, \ell_p$ are linearly independent, and $\ell_{p + 1}, \dots, \ell_r$ are linear combinations of~$\ell_1, \dots, \ell_p$.
After applying a change of variables such that $y_k = \ell_k(\x)$ for $k = 1, \dots, p$ we see that $\Ness(f) \leq p \leq r$.

The inequality $\Ness(f) \leq \bwr(f)$ now follows from the semicontinuity of $\Ness$: if 
\[
f = \lim_{\eps \to 0} \sum_{k = 1}^r \ell_k^d\;,
\] with $\ell_k \in \bbC(\eps)[\x]$, then $\Ness(f) \leq \lim_{\eps \to 0} \Ness(\sum_{k = 1}^r \ell_k^d(\eps)) \leq r$.
\end{proof}

\subsection{Fixed-parameter debordering}\label{sec:borderrank}

The proof of \cref{thm:intro:WR-deborder} is based on generalized additive decompositions of polynomials, in the sense of \cite{Iarrobino_1995}. These decompositions were studied in algebraic geometry, usually in connection to $0$-dimensional schemes and the notion of cactus rank. We defer the discussion of connections to algebraic geometry to the next section. Here we provide elementary proofs of some statements on generalized additive decompositions based on partial derivatives techniques, without using the language of $0$-dimensional schemes. We bring from geometry a key insight: a border rank decomposition can be separated into \emph{local} parts if the degree of the polynomial is large enough.

To define formally what it means for a border rank decomposition to be local, note that a rational family of linear forms $\ell \in \bbC(\eps)[\x]_1$ always has a limit when viewed projectively.
Specifically, expanding $\ell(\eps)$ as a Laurent series $\ell(\eps) = \sum_{i = q}^{\infty} \eps^{i} \ell_i$ with $\ell_q \neq 0$, we have $\lim_{\eps \to 0} [\ell(\eps)] = \lim_{\eps \to 0} [\sum_{i = 0}^{\infty} \eps^{i} \ell_{q + i}] = [\ell_q]$.
A border Waring rank decomposition is called local if for all summands in the decomposition this limit is the same. More precisely, we give the following definition.

\begin{definition}
Let $f \in \bbC[\x]_d$ be a homogeneous polynomial.
A border Waring rank decomposition
\[
f = \lim_{\eps \to 0} \sum_{k = 1}^r \ell_k^{d}\;,
\]
with $\ell_k \in \bbC(\eps)[\x]_1$
is called a \emph{local border decomposition} if there exists a linear form $\ell \in \bbC[\x]_1$ such that $\lim_{\eps \to 0} [\ell_k(\eps)] = [\ell]$ for all $k \in \{1, \dots, r\}$.
We call the point $[\ell] \in \bbP\bbC[\x]_1$ the \emph{base} of the decomposition.
A local decomposition is called \emph{standard} if $\ell_1 = \eps^q \gamma \ell$ for some $q \in \bbZ$ and $\gamma \in \bbC$.
\end{definition}

\begin{lemma}
If $f$ has a local border decomposition, then it has a standard local border decomposition with the same base and the same number of summands.
\end{lemma}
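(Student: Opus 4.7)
My approach is to act on the given local decomposition by an invertible family of linear changes of variables $A(\eps) \in \GL_n(\bbC(\eps))$ with $A(0) = I$, chosen so that the first summand takes the prescribed standard form. The first observation I will use is that, because $A(0) = I$, substituting $\x \mapsto A(\eps)\x$ in the identity $f(\x) = \lim_{\eps \to 0}\sum_{k=1}^r \ell_k(\eps)(\x)^d$ preserves both the limit and the number of summands; moreover each new summand $\ell_k(\eps)(A(\eps)\x)$ still has projective limit $[\ell]$, since if $\ell_k(\eps) = \eps^{q_k} \hat{\ell}_k(\eps)$ with $\hat{\ell}_k(0)$ a nonzero scalar multiple of $\ell$, then $\hat{\ell}_k(\eps)(A(\eps)\x)$ is regular at $\eps = 0$ with value $\hat{\ell}_k(0)(\x)$. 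So any such substitution automatically produces another local border decomposition of $f$ with base $[\ell]$ and $r$ summands.

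The task then reduces to choosing $A(\eps)$ so that $\ell_1(\eps)(A(\eps)\x) = \eps^q \gamma \ell(\x)$ for some $q \in \bbZ$ and $\gamma \neq 0$. After a constant change of basis I may assume $\ell = x_1$ and write $\ell_1(\eps) = \sum_i b_i(\eps) x_i$. The local hypothesis $[\ell_1(\eps)] \to [x_1]$ gives $\val_\eps(b_1) < \val_\eps(b_i)$ for every $i \geq 2$, so setting $q = \val_\eps(b_1)$ I can factor $b_1(\eps) = \eps^q(\gamma + \eps c(\eps))$ and $b_i(\eps) = \eps^{q+1} d_i(\eps)$ with $c, d_i$ regular at $\eps = 0$ and $\gamma \neq 0$. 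I would then define $A(\eps)$ to fix $x_2, \dots, x_n$ and send $x_1$ to $(\gamma + \eps c(\eps))^{-1}\bigl(\gamma x_1 - \eps \sum_{i \geq 2} d_i(\eps) x_i\bigr)$; a short computation gives $\ell_1(\eps)(A(\eps)\x) = \eps^q \gamma x_1$, and since $\gamma + \eps c(\eps)$ is a unit in $\bbC[[\eps]]$, one has $A(\eps) \in \GL_n(\bbC(\eps))$ with $A(0) = I$.

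The only slightly delicate point will be justifying that substitution $\x \mapsto A(\eps)\x$ inside a convergent border decomposition preserves the limit $f$. I expect this to follow from the fact that a rational family $F(\eps, \y) \in \bbC(\eps) \otimes \bbC[\y]_d$ which has a limit at $\eps = 0$ is already regular there; expanding $F(\eps, \y) = f(\y) + \eps F_1(\y) + \eps^2 F_2(\y) + \cdots$ and evaluating at $\y = A(\eps)\x$ yields $f(A(\eps)\x) + \eps F_1(A(\eps)\x) + \cdots$, which tends to $f(\x)$ as $\eps \to 0$. The remainder of the argument is explicit linear algebra.
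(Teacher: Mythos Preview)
Your proposal is correct and is essentially identical to the paper's proof: both reduce to base $[x_1]$, then replace $x_1$ by $(\gamma + \eps c(\eps))^{-1}\bigl(\gamma x_1 - \eps\sum_{i\ge 2} d_i(\eps)x_i\bigr)$ (the paper writes this as $\hat{x}_1 = \frac{\gamma_1}{\alpha_1}x_1 - \sum_{i\ge 2}\frac{\alpha_i}{\alpha_1}x_i$, which is the same substitution in different notation), and then verify that this change of variables, having $A(0)=I$, preserves both the limit $f$ and the projective limits $[\ell_k(\eps)]\to[x_1]$. Your justification for why the limit is preserved is slightly more explicit than the paper's ``$\hat{x}_1 \simeq x_1$'', but the argument is the same.
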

\begin{proof}
After applying a linear change of variables, we may assume that the base of the local decomposition for $f$ is $[x_1]$.
This means
\[
f = \lim_{\eps \to 0} \sum_{k = 1}^r \ell_k^{d}
\]
with $\ell_k = \eps^{q_k} \cdot \gamma_k x_1 + \sum_{j = q_k + 1}^{\infty} \eps^{j} \ell_{k,j}$.

Write $\ell_1 = \eps^{q_1}\left(\sum_{i = 1}^n \alpha_i x_i \right)$ where $\alpha_i \in \bbC(\eps)$.
Let $\hat{x}_1 = \frac{\gamma_1}{\alpha_1} x_1 - \sum_{i = 2}^n  \frac{\alpha_n}{\alpha_1} x_i$.
Note that $\alpha_1 \simeq \gamma_1$ and $\alpha_i \simeq 0$ for $i > 1$, hence $\hat{x}_1 \simeq x_1$ and
\[
f \simeq f(\hat{x}_1, \dots, x_n) \simeq \ell_1(\hat{x}_1, x_2, \dots, x_n)^d + \sum_{k = 2}^r \ell_k(\hat{x}_1, x_2, \dots, x_n)^d = (\eps^{q_1} \gamma_1 x_1)^d + \sum_{k = 2}^{r} \hat{\ell}_k^d.
\]
where $\hat{\ell}_k(x_1, \dots, x_n) = \ell_k(\hat{x}_1, x_2, \dots, x_n)$.
This defines a new border rank decomposition of~$f$. Moreover, notice that $\lim_{\eps \to 0} [\hat{\ell}_k]= [x_1]$ for every $k$, so the new decomposition is again local with base $[x_1]$.
Since the first summand is $\eps^{q_1} \gamma_1 x_1$, this is the desired standard local border decomposition.
\end{proof}

\begin{lemma}\label{lem:local-limit}
Suppose $f \in \bbC[\x]_d$ has a local border decomposition with $r$ summands based at~$[\ell]$.
If $d \geq r - 1$, then $f = \ell^{d - r + 1} g$ for some homogeneous polynomial $g$ of degree $r - 1$.
\end{lemma}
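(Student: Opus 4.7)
After a linear change of coordinates we may assume $\ell = x_1$; set $\y = (x_2, \dots, x_n)$. The conclusion $f = x_1^{d-r+1}\, g$ with $\deg g = r-1$ is equivalent to $f$ lying in the subspace $O^{r-1} := x_1^{d-r+1}\, \bbC[\x]_{r-1}\subseteq \bbC[\x]_d$, or, equivalently, to $\partial_\y^\alpha f = 0$ for every multi-index $\alpha$ with $|\alpha|=r$. Writing each $\ell_k = \eps^{q_k} L_k(\eps)$ with $L_k \in \bbC[\eps][\x]_1$ and $L_k(0)=\gamma_k x_1$ (this is the locality hypothesis), the task reduces to extracting this divisibility from the identity $f = \lim_{\eps\to 0}\sum_k \eps^{dq_k} L_k(\eps)^d$.

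The core of the plan is a \emph{limit-of-spans} argument. For each $\eps\ne 0$ set $W_\eps := \Span_\bbC\{L_1(\eps)^d, \dots, L_r(\eps)^d\}\subseteq \bbC[\x]_d$. Then $\sum_k \eps^{dq_k}L_k(\eps)^d \in W_\eps$ for every $\eps$, so if we can show that $W_\eps$ has a well-defined limit $W_0$ as $\eps\to 0$ contained in $O^{r-1}$, then the limit $f$ of this family must lie in $W_0 \subseteq O^{r-1}$.

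To compute $W_0$, Taylor-expand each curve as $L_k(\eps)^d = \sum_{j\ge 0} \eps^j P_{k,j}$ with $P_{k,j} \in \bbC[\x]_d$. A direct multinomial calculation, using $L_k(\eps) = \gamma_k x_1 + \eps(\cdots)$, yields the key structural fact: $P_{k,j}\in x_1^{d-j}\bbC[\x]_j$ whenever $j\le d$. Indeed, any monomial of $P_{k,j}$ of pure $\y$-degree $c$ must have ``paid'' at least $c$ units of $\eps$, forcing $c\le j$ and hence at least $d-j$ factors of $x_1$. A divided-differences change of basis inside $W_\eps$ (i.e., iteratively replacing generators by normalized finite differences in $\eps$) then shows that $W_0 \subseteq \Span\{P_{k,j}: 1\le k\le r,\ 0\le j\le r-1\}$. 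Because the subspaces $x_1^{d-j}\bbC[\x]_j$ are nested and increasing in $j$, and the hypothesis $d\ge r-1$ ensures the range $0\le j\le r-1$ stays inside $j\le d$, each such $P_{k,j}$ lies in $x_1^{d-j}\bbC[\x]_j\subseteq O^{r-1}$. Hence $W_0 \subseteq O^{r-1}$, and $f = x_1^{d-r+1}\, g$ for some homogeneous $g$ of degree $r-1$.

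\textbf{Main obstacle.} The delicate step is the divided-differences identification of $W_0$ from the collapsing family $W_\eps$. At $\eps=0$ the $r$ vectors $L_k(\eps)^d$ all land on the single line $\bbC\cdot x_1^d$, so a naive limit loses $r-1$ dimensions; recovering them requires carefully renormalising successive linear combinations by inverse powers of~$\eps$ and showing that the result has an honest limit capturing the first $r-1$ Taylor jets of the family. The hypothesis $d\ge r-1$ is crucial precisely here: it guarantees that these rescaled jets $P_{k,j}$ for $j\le r-1$ stay inside the osculating hierarchy $x_1^{d-j}\bbC[\x]_j$ rather than overflowing into the ambient $\bbC[\x]_d$. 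This is the elementary, partial-derivatives incarnation of the classical fact, highlighted in the paper's proof sketch, that $r$ distinct linear forms have linearly independent $d$-th powers as soon as $d\ge r-1$.
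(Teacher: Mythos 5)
Your plan takes a genuinely different route from the paper: the paper proves this lemma by a double induction on $r$ and $d-(r-1)$, differentiating a \emph{standard} local decomposition with respect to each variable (which either drops a summand or drops the degree) and then recombining via Euler's identity. Your approach instead tries to identify the limit $W_0$ of the span family $W_\eps = \Span\{L_1(\eps)^d,\dots,L_r(\eps)^d\}$ directly via Taylor jets and divided differences. Unfortunately, the key intermediate claim of your proof is false, so there is a genuine gap.

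The claim in question is that $W_0 \subseteq \Span\{P_{k,j}: 1\le k\le r,\ 0\le j\le r-1\}$. Take $r=3$ with $L_1=x_1$, $L_2 = x_1+\eps x_2$, $L_3 = x_1+\eps x_2 + \eps^{10} x_3$, all local at $[x_1]$. Every Taylor jet $P_{k,j}$ with $j\le r-1=2$ lies in $\bbC[x_1,x_2]$, since $x_3$ first appears at order $\eps^{10}$ in $L_3$. However, $\tfrac{1}{\eps^{10}}\bigl(L_3(\eps)^d - L_2(\eps)^d\bigr) \to d\,x_1^{d-1}x_3$, so $x_1^{d-1}x_3 \in W_0$, and this vector is not in $\Span\{P_{k,j}: j\le 2\}$. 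The conclusion $W_0 \subseteq x_1^{d-2}\bbC[\x]_2$ is still correct here (the vector $x_1^{d-1}x_3$ does land in the osculating space), but it is not captured by the jets with $j\le r-1$. The underlying issue is that the normalizing power of $\eps$ arising in a divided-difference step is not bounded by $r-1$; it can be arbitrarily large. What is true (and nontrivial) is that each such step nonetheless only advances one level in the filtration $x_1^{d-j}\bbC[\x]_j$, but this requires a separate argument — essentially a re-derivation of the Hilbert-function regularity bound for zero-dimensional schemes (\cite[Thm.~1.69]{Iarrobino-Kanev}), which is where $d\ge r-1$ genuinely enters. Your stated reason for needing $d\ge r-1$ (keeping $j\le d$ in the jet range) is therefore also off the mark, since the counterexample has $d$ arbitrarily large yet involves jets with $j=10 > r-1$.

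To repair the limit-of-spans approach one would need to replace the jet bookkeeping with a direct argument that $W_0$ is contained in the $(r-1)$-th osculating space, for instance by analyzing the Hilbert function of the apolar ideal of $W_0$ as a limit of ideals of $r$ points supported near $[x_1]$ (this is the scheme-theoretic proof the paper gives in \S 2.4). The paper's elementary inductive proof sidesteps all of this: by passing to a \emph{standard} local decomposition, $\partial f/\partial x_i$ for $i>1$ has a local decomposition with only $r-1$ summands (the term $\eps^{q_1}\gamma x_1$ contributes nothing), while $\partial f/\partial x_1$ has $r$ summands but degree $d-1$; both fall under the inductive hypothesis, and Euler's formula reassembles $f$. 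You should keep your observation that each jet satisfies $P_{k,j}\in x_1^{d-j}\bbC[\x]_j$ — it is correct and a useful warm-up — but the step from there to $W_0$ needs a completely different mechanism.
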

\begin{proof}
After applying a linear change of variables we may assume $\ell = x_1$.
We prove the statement by induction on $r$ and the difference $d - (r - 1)$.

The cases $r = 1$ and $d = r - 1$ are trivial.

If $d > r - 1$, then by the previous Lemma there exists a standard local border decomposition
\[
f = \lim_{\eps \to 0} \sum_{k = 1}^r \ell_k(\eps)^d.
\]
Write $\ell_k = \sum_{i = 1}^n \alpha_{ki} x_i$ for some $\alpha_{ki} \in \bbC(\eps)$. Since the decomposition is standard, $\alpha_{1i} = 0$ for $i > 1$.
For the derivatives of $f$ we have the following border decompositions
\[
\frac{\partial f}{\partial x_1} = \lim_{\eps \to 0} \sum_{k = 1}^r d \cdot \alpha_{k 1}(\eps) \ell_k(\eps)^{d-1},
\]
and
\[
\frac{\partial f}{\partial x_i} = \lim_{\eps \to 0} \sum_{k = 2}^r d \cdot \alpha_{ki}(\eps) \ell_k(\eps)^{d-1}.
\]
for $i \neq 1$.
These decompositions involve the same linear forms $\ell_k$ with multiplicative coefficients, so they are local with the same base $[x_1]$.
%\gjc{Do we need to justify why the above decomposition are still local? Also the power should be $d-1$ instead of $d$?}\vl{Done.}
By inductive hypothesis $\frac{\partial f}{\partial x_1} = x_1^{d - r} g_1$ and $\frac{\partial f}{\partial x_i} = x_1^{d - r + 1} g_i$ for some homogeneous polynomials $g_1, \dots, g_n$ of appropriate degrees.
To get an analogous expression for $f$, combine these expressions using Euler's formula for homogeneous polynomials as follows
\[
f = \frac{1}{d} \sum_{i = 1}^n x_i \frac{\partial f}{\partial x_i} = \frac{1}{d} \left( x_1 \cdot x_1^{d - r} g_1 + \sum_{i = 2}^n x_i x_1^{d - r + 1} g_i \right) = \frac{1}{d} x_1^{d - r + 1} \left(g_1 + \sum_{i = 2}^n x_i g_i\right)\;.
\qedhere\]
\end{proof}

We will now extend this result to non-local border Waring rank decompositions.
As long as the degree of the approximated polynomial is high enough, every border rank decomposition can be divided into local parts and transformed into a sum of terms of the form $\ell^{d - r + 1} g$.

\begin{definition}
A \emph{generalized additive decomposition} of $f$ is a decomposition of the form
\[
f = \sum_{k = 1}^m \ell_k^{d - r_k + 1} g_k\;,
\]
where $\ell_k$ are linear forms such that $\ell_i$ is not proportional to $\ell_j$ when $i \neq j$, and $g_k$ are homogeneous polynomials of degrees $\deg g_k = r_k - 1$.
\end{definition}

To show that there are no cancellations between different local parts, we need the following lemma, which in the case of $2$ variables goes back to Jordan~\cite[Lem.~1.35]{Iarrobino-Kanev}. This lemma can be seen as a generalization of a well-known fact that $m$ pairwise non-proportional linear forms $\ell_1, \dots, \ell_m$ have linearly independent powers $\ell_1^d, \dots, \ell_m^d$ for $d \geq m - 1$.

\begin{lemma}\label{lem:jordan-lemma}
Let $\ell_1, \dots, \ell_m \in \bbC[\x]_1$ be linear forms such that $\ell_i$ is not proportional to~$\ell_j$ when $i \neq j$.
Let $g_1, \dots, g_m$ be homogeneous polynomials of degrees $r_1 - 1, \dots, r_m - 1$ respectively.
If
\[
\sum_{k = 1}^m \ell_k^{d - r_k + 1} g_k \;=\; 0\;,
\]
and $d \geq \sum_{k = 1}^m r_i - 1$, then all $g_k$ are zero.
\end{lemma}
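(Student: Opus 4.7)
I would prove this by strong induction on $N = \sum_{k=1}^m r_k$. The base case $N = m$ corresponds to every $r_k = 1$, so each $g_k$ is a scalar $c_k$ and the identity reduces to $\sum_k c_k \ell_k^d = 0$; since the $\ell_k$ are pairwise non-proportional and $d \geq m - 1$, the classical fact that $d$-th powers of pairwise non-proportional linear forms are linearly independent (itself a short induction via the $\partial_h$ trick below) forces every $c_k = 0$.

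For the inductive step, with $N > m$, after reordering assume $r_m \geq 2$. I would pick a derivation direction $h$ with $h(\ell_m) = 0$ and $h(\ell_k) \neq 0$ for every $k < m$, which exists because the hyperplane $\ell_m^{\perp}$ is not contained in any other $\ell_k^{\perp}$. Applying $\partial_h$ annihilates $\ell_m^{d-r_m+1}$ in the $m$-th term, leaving $\ell_m^{d-r_m+1}\partial_h(g_m)$; by Leibniz, each $k < m$ summand becomes $\ell_k^{(d-1)-r_k+1}\tilde g_k$ with $\deg \tilde g_k = r_k - 1$. The resulting identity is a GAD in degree $d-1$ with rank sum $N-1$, and since $d - 1 \geq (N-1) - 1$ the inductive hypothesis applies; in particular, $\partial_h(g_m) = 0$.

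Now I would vary $h$ over the Zariski-dense open subset of $\ell_m^{\perp}$ on which the above non-vanishing conditions hold. The map $h \mapsto \partial_h(g_m)$ is linear, so its vanishing on a dense subset forces $\partial_h(g_m) = 0$ for every $h \in \ell_m^{\perp}$. In coordinates where $\ell_m$ is itself a coordinate variable, this says that $g_m$ depends only on $\ell_m$, so by homogeneity $g_m = c_m \ell_m^{r_m - 1}$ for some scalar $c_m$. Substituting back, the identity becomes
\[
\sum_{k<m} \ell_k^{d-r_k+1} g_k + c_m \ell_m^d = 0,
\]
another GAD with rank sum $N - r_m + 1 < N$; a second application of the inductive hypothesis forces $g_k = 0$ for $k < m$ and $c_m = 0$, completing the induction.

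The principal obstacle is the structural step that extracts $g_m = c_m \ell_m^{r_m - 1}$ from the single-$h$ conclusion $\partial_h(g_m) = 0$: the rest of the argument is straightforward bookkeeping about degrees and rank sums. The promotion relies on noticing that the inductive step works uniformly for every $h$ in a Zariski-dense open subset of $\ell_m^{\perp}$, and that linearity in $h$ then spreads the vanishing to the entire hyperplane of directions tangent to $\{\ell_m = 0\}$, which is exactly what identifies $g_m$ as a univariate polynomial in $\ell_m$.
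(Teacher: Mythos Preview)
Your argument is correct. The paper, however, takes a different route: it first reduces to the two-variable case by composing with a generic linear map $A\colon\bbC^2\to\bbC^n$ (chosen so that the $\ell_k\circ A$ remain pairwise non-proportional), and then in two variables inducts on the number of summands $m$ rather than on $N=\sum r_k$. In two variables, after arranging $\ell_1=y_1$, the paper applies $\partial_{y_2}^{r_1}$ to annihilate the first summand completely, invokes the inductive hypothesis on the remaining $m-1$ summands, and finishes with a divisibility argument ($y_1^{d-r_1+1}$ would have to divide $g_k$, contradicting $\deg g_k<d-r_1+1$).

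Your version stays in $n$ variables throughout, differentiates only once per step, and replaces the divisibility argument by the density/linearity trick that forces $g_m=c_m\ell_m^{r_m-1}$; the price is two invocations of the inductive hypothesis per step and a slightly more delicate bookkeeping of the rank sum. The paper's proof is shorter once one is in two variables but needs the separate dimension-reduction step; yours is more uniform and avoids that reduction entirely. Both are elementary and of comparable length.
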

\begin{proof}
We first prove the statement for polynomials in $2$ variables $y_1, y_2$ by induction on the number of summands $m$; this part of the proof closely follows~\cite[Appx.III]{grace_young_2010}.

The case $m = 1$ with one summand is clear.
Consider the case $m > 2$.
We can assume $\ell_1 = y_1$ by applying a linear change of variables if required.
Note two simple facts about partial derivatives.
First, for a homogeneous polynomial $f \in \bbC[y_1, y_2]_d$ we have $\partial_2^r f = 0$ if and only if $f = y_1^{d - r + 1} g$ (here $\partial_2 := \frac{\partial}{\partial y_2}$).
Second, differentiating $r$ times a homogeneous polynomial of the form $\ell^{d - s + 1} g$, we obtain a polynomial of the form $\ell^{d - r - s + 1} h$.

Suppose
\[
y_1^{d - r_1 + 1} g_1 + \sum_{k = 2}^m \ell_k^{d - r_k + 1} g_k = 0.
\]
Differentiating $r_1$ times with respect to $y_2$, we obtain
\[
\sum_{k = 2}^m \ell_k^{d - r_1 - r_k + 1} h_k = 0,
\]
where $\ell_k^{d - r_1 - r_k + 1} h_k = \partial_2^{r_1} (\ell_k^{d - r_k + 1} g_k)$.
The degree condition $d - r_1 \geq \sum_{k = 2}^m r_k - 1$ holds for this new expression.
Therefore, by induction hypothesis we have $h_k = 0$ and thus $\partial_2^{r_1} (\ell_k^{d - r_k + 1} g_k) = 0$.
It follows that $\ell_k^{d - r_k + 1} g_k = y_1^{d - r_1 + 1} \hat{g}_k$ for some homogeneous polynomial~$\hat{g}_k$.
This implies that $y_1^{d - r_1 + 1}$ divides $g_k$, which is impossible since $d - r_1 + 1 \geq \sum_{k = 2}^m r_k \geq r_k > \deg g_k$.

Consider now the general case where the number of variables $n \geq 2$ (the case $n = 1$ is trivial).
Suppose $\sum_{k = 1}^m \ell_k^{d - r_k + 1} g_k = 0$.
The set of linear maps $A \colon (y_1, y_2) \mapsto (x_1, \dots, x_n)$ such that $\ell_i \circ A$ and $\ell_j \circ A$ are not proportional to each other is a nonempty Zariski open set given by the condition $\rank(\ell_i \circ A, \ell_j \circ A) > 1$.
Hence for a nonempty Zariski open (and therefore dense) set of linear maps $A$ the linear forms $\ell_k \circ A$ are pairwise non-proportional.
From the binary case above we have $g_k \circ A = 0$ if $A$ lies in this open set.
By continuity this implies $g_k \circ A = 0$ for all $A$.
Since every point lies in the image of some linear map $A$ we have $g_k = 0$.
\end{proof}

\begin{restatable}{lemma}{mainlemma}\label{lem:borderrank-gad}
Let $f \in \bbC[\x]_d$ be such that $\bwr(f) = r$.
If $d \geq r - 1$, then there exists a partition $r = r_1 + \dots + r_m$ such that $f$ has a generalized additive decomposition
\[
f = \sum_{k = 1}^m \ell_k^{d - r_k + 1} g_k,
\]
and moreover $\bwr(\ell_k^{d - r_k + 1} g_k) \leq r_k$.
\end{restatable}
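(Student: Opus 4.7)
The plan is to extract a generalized additive decomposition of $f$ by starting from a border Waring rank decomposition $f = \lim_{\eps \to 0} \sum_{j=1}^r L_j(\eps)^d$ and grouping the summands by their projective limits. Every nonzero $L_j \in \bbC(\eps)[\x]_1$ admits a unique projective limit $[\ell_{k(j)}] \in \bbP\bbC[\x]_1$, so $\{1, \dots, r\}$ partitions into groups $G_1, \dots, G_m$ of sizes $r_k := |G_k|$ with pairwise non-proportional base forms $\ell_1, \dots, \ell_m$. Define $\phi_k(\eps) := \sum_{j \in G_k} L_j(\eps)^d$. If each $\phi_k$ converges as $\eps \to 0$, then $\phi_k$ is a local border decomposition of $f_k := \lim_{\eps \to 0} \phi_k$ based at $[\ell_k]$ with $r_k$ summands, so \cref{lem:local-limit} (using $d \geq r - 1 \geq r_k - 1$) gives $f_k = \ell_k^{d - r_k + 1} g_k$ for some $g_k \in \bbC[\x]_{r_k - 1}$. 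Then $f = \sum_k f_k$ is the sought-after decomposition, and $\bwr(\ell_k^{d - r_k + 1} g_k) \leq r_k$ is evident from $\phi_k$ itself.

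The crux is showing that no $\phi_k$ has a pole at $\eps = 0$. I would argue by contradiction. As a preliminary, the substitution $\eps \mapsto \eps^d$ yields another border decomposition $f = \lim_{\eps \to 0} \sum_j L_j(\eps^d)^d$ of the same size, in which every $\eps$-exponent appearing in the resulting $\phi_k$ is a multiple of $d$; the pole-free conclusion for this modified $\phi_k$ clearly transfers back. Now suppose some $\phi_k$ (in the modified decomposition) has a pole and let $N = -dn$, with $n > 0$, be the most negative $\eps$-order appearing in any $\phi_k$. Since $\sum_k \phi_k$ converges to $f \in \bbC[\x]_d$, the coefficient of $\eps^N$ in the sum vanishes:
\[
    \sum_{k} P_{k, N} \;=\; 0, \qquad P_{k, N} := [\eps^N] \phi_k.
\]
I claim that each nonzero $P_{k, N}$ has the form $\ell_k^{d - r_k + 1} h_k$ for some $h_k \in \bbC[\x]_{r_k - 1}$. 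Indeed,
\[
    \eps^{-N} \phi_k(\eps) \;=\; \eps^{dn} \sum_{j \in G_k} L_j(\eps)^d \;=\; \sum_{j \in G_k} \bigl(\eps^n L_j(\eps)\bigr)^d,
\]
which is pole-free at $\eps = 0$ with limit $P_{k, N}$. Each $\eps^n L_j(\eps) \in \bbC(\eps)[\x]_1$ still has projective limit $[\ell_k]$ for $j \in G_k$, so the right-hand side exhibits $P_{k, N}$ as the limit of a local border decomposition based at $[\ell_k]$ with $r_k$ summands; \cref{lem:local-limit} then yields the claimed form.

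With the claim established, we have $\sum_{k : P_{k, N} \neq 0} \ell_k^{d - r_k + 1} h_k = 0$ with pairwise non-proportional $\ell_k$, and since $\sum_{k : P_{k, N} \neq 0} r_k \leq r$, the hypothesis $d \geq r - 1$ is enough to invoke Jordan's Lemma (\cref{lem:jordan-lemma}), forcing every $h_k = 0$. This contradicts the assumption that some $P_{k, N}$ is nonzero, so every $\phi_k$ in fact converges, and the construction from the first paragraph produces the desired generalized additive decomposition.

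The main obstacle is the structural claim on $P_{k, N}$. The preliminary substitution $\eps \mapsto \eps^d$ is what makes it usable: without it, the pole order $N$ need not be divisible by $d$, and the rescaling $\eps^{-N/d} L_j$ that would convert $\phi_k$ into a local decomposition of $P_{k, N}$ leaves the ring $\bbC(\eps)$; after the substitution the rescaling becomes integral and \cref{lem:local-limit} applies directly to $P_{k, N}$.
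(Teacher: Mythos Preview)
Your argument is correct and mirrors the paper's proof: group the summands of a minimal border decomposition by their projective limits, use the substitution $\eps \mapsto \eps^d$ so that rescaling by a power of $\eps$ exhibits the lowest-order coefficient of each group as the limit of a local decomposition (hence of the form $\ell_k^{d-r_k+1} h_k$ by \cref{lem:local-limit}), and then apply \cref{lem:jordan-lemma} to force those coefficients to vanish, contradicting the existence of a pole. The only point the paper makes explicit that you leave implicit is that each limit $f_k = \lim_{\eps\to 0}\phi_k$ is \emph{nonzero} (needed so that the $g_k$ genuinely have degree $r_k-1$ and the $r_k$ partition $r$); this follows from the hypothesis $\bwr(f)=r$, since if some $\phi_k \to 0$ the summands indexed by $G_k$ could be dropped to give a smaller border decomposition.
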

\begin{proof}
Consider a border Waring rank decomposition
\[
f = \lim_{\eps \to 0} \sum_{k = 1}^r \ell_k^{d}
\]
Divide the summands between several local decompositions as follows.
Define an equivalence relation $\sim$ on the set of indices $\{1,2,\dots,r\}$ as $i \sim j \Leftrightarrow \lim_{\eps \to 0} [\ell_i] = \lim_{\eps \to 0} [\ell_j]$ and let $I_1, \dots, I_m$ be the equivalence classes with respect to this relation.
Further, let $r_k = |I_k|$ and let $[L_k] = \lim_{\eps \to 0} [\ell_i]$ for $i \in I_k$.

Consider the sum of all summands with indices in $I_k$.
Let $q_k$ be the power of $\eps$ in the lowest order term, that is,
\[
\sum_{i \in I_k} \ell_i^d \;=\; \eps^{q_k} f_k + \sum_{j = q_k + 1}^{\infty} \eps^j f_{k, j}\;,
\] with $f_k \in \bbC[\mathbf{x}]_d$ nonzero.
This expression can be transformed into a local border decomposition
\[
f_k = \lim_{\eps \to 0} \sum_{i \in I_k} \left(\frac{\ell_i(\eps^d)}{\eps^{q_k}}\right)^d.
\]
based at $[L_k]$.
By~\Cref{lem:local-limit} we have $f_k = L_k^{d - r_k + 1} g_k$ for some homogeneous polynomial $g_k$ of degree $r_k - 1$. The decomposition also gives $\bwr(f_k) \leq r_k$.

Note that $q_k \leq 0$ since otherwise the summands $\ell_i$ with $i \in I_k$ can be removed from the original border rank decomposition of $f$ without changing the limit.
Let $q = \min\{q_1, \dots, q_m\}$.
Note that if $q < 0$, then, comparing the terms before $\eps^q$ in the left and right hand sides of the equality
\[
f + O(\eps) = \sum_{k = 1}^m \sum_{i \in I_k} \ell_i^d
\]
we get 
\[
0 = \sum_{k\colon q_k = q} f_k = \sum_{k\colon q_k = q} L_k^{d - r_k + 1} g_k.
\]
From~\Cref{lem:jordan-lemma} we obtain $g_k = 0$ and $f_k = 0$, in contradiction with the definition of $f_k$.

We conclude that $q = 0$ and 
\[
f = \sum_{k = 1}^m f_k = \sum_{k = 1}^m L_k^{d - r_k + 1} g_k,
\]
obtaining the required generalized additive decomposition.
\end{proof}

We will now take a brief detour to define a function $M(r)$ which we use to upper bound the Waring rank of generalized additive decomposition.

\begin{definition}
Let $\maxr(n, d)$ denote the \emph{maximum Waring rank} of a degree $d$ homogeneous polynomial in $n$ variables, that is $\maxr(n, d) = \max\{\WR(f)\mid f \in \bbC[x_1,\dots, x_n]_d\}$.
Define the \emph{partition-maxrank function} as
\[
M(r) = \max_{r_1 + \dots + r_m = r} \sum_{k = 1}^m \maxr(r_k, r_k - 1).
\]
\end{definition}

Since every homogeneous polynomial has finite Waring rank, the space $\bbC[x_1, \dots, x_n]_d$ is spanned by powers of linear forms. This implies a trivial upper bound on the maximum Waring rank: $\maxr(n, d) \leq \dim \bbC[x_1, \dots, x_n]_d = \binom{n + d - 1}{d}$.
Improved upper bounds were proven in~\cite{BlekTeit:MaximumTypicalGenericRanks,Jelisiejew-maxrank}.

\begin{proposition}
$\maxr(n, d_1) \leq \maxr(n, d_2)$ when $d_1 \leq d_2$.
\end{proposition}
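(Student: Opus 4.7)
The plan is to show that for every homogeneous $f \in \bbC[x_1,\dots,x_n]_{d_1}$ there exists a homogeneous $g \in \bbC[x_1,\dots,x_n]_{d_2}$ with $\WR(g) \geq \WR(f)$; taking $f$ to achieve the maximum rank in degree $d_1$ will then yield the inequality. The construction relies on two elementary observations: (i) taking a linear partial derivative does not increase Waring rank, and (ii) every form of degree $d_1$ is the $(d_2-d_1)$-fold partial derivative of some form of degree $d_2$.

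First I would verify (i). If $h = \sum_{i=1}^{r} \ell_i^{d}$ and $\partial$ denotes differentiation with respect to one of the variables, then
\[
\partial h \;=\; d \sum_{i=1}^{r} (\partial \ell_i)\,\ell_i^{d-1}
\]
is a sum of $r$ scalar multiples of $(d-1)$-th powers of linear forms. Working over $\bbC$, each scalar can be absorbed into its linear form by extracting a $(d-1)$-th root (the degenerate case $d = 1$ is trivial), so $\WR(\partial h) \leq r$. Iterating yields $\WR(\partial_{x_1}^{k} h) \leq \WR(h)$ for every $k \geq 0$.

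Next I would verify (ii) by an explicit antiderivative. Set $k = d_2 - d_1$ and expand $f = \sum_{\alpha} c_\alpha x^{\alpha}$ in multi-index notation, with $|\alpha| = d_1$. Define
\[
g \;:=\; \sum_{\alpha} c_\alpha \,\frac{\alpha_1!}{(\alpha_1 + k)!}\, x_1^{\alpha_1+k} x_2^{\alpha_2} \cdots x_n^{\alpha_n}.
\]
Each summand has total degree $d_1 + k = d_2$, so $g$ is homogeneous of degree $d_2$, and a direct computation shows $\partial_{x_1}^{k} g = f$. Combining with (i) gives $\WR(f) = \WR(\partial_{x_1}^{k} g) \leq \WR(g) \leq \maxr(n, d_2)$, and maximizing over $f$ completes the argument.

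I do not anticipate a genuine obstacle: both ingredients are one-line manipulations, the antiderivative is a closed-form coefficient rescaling, and the scalar-absorption step merely exploits that $\bbC$ is algebraically closed, which is the standing hypothesis of the paper.
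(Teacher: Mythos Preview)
Your proposal is correct and follows exactly the same approach as the paper: represent an arbitrary degree-$d_1$ form $f$ as a (repeated) partial derivative of a degree-$d_2$ form $g$, then differentiate a Waring decomposition of $g$ to bound $\WR(f)\le\WR(g)\le\maxr(n,d_2)$. You simply spell out the two ingredients (derivatives do not increase rank; explicit antiderivative via coefficient rescaling) that the paper leaves implicit.
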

\begin{proof}
Every form $f$ of degree $d_1$ can be represented as a partial derivative of some form $g$ of degree $d_2$. By differentiating a Waring rank decomposition of $g$ we obtain a Waring rank decomposition of $f$, thus $\WR(f) \leq \WR(g) \leq \maxr(n, d_2)$. Since $f$ is arbitrary, $\maxr(n, d_1) \leq \maxr(n, d_2)$.
\end{proof}

We are now ready to prove a debordering theorem for Waring rank.

\begin{theorem}\label{thm:waring-to-maxwaring}
Let $f \in \bbC[\x]_d$ be such that $\bwr(f) = r$. Then 
 \[
  \WR(f) \leq M(r) \cdot d.
 \]
\end{theorem}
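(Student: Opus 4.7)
The plan is to combine the generalized additive decomposition from Lemma~\ref{lem:borderrank-gad} with the classical two-factor Waring rank bound $\WR(\ell_1^a \ell_2^b) \leq \max(a+1, b+1)$, splitting the argument according to whether $d \geq r$ or $d \leq r - 1$.

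The easy regime is $d \leq r - 1$. Here I would invoke Lemma~\ref{thm:essvar} to get $\Ness(f) \leq r$, so that after a linear change of coordinates $f$ becomes a polynomial in at most $r$ variables. Consequently $\WR(f) \leq \maxr(r, d) \leq \maxr(r, r - 1)$ by the monotonicity proposition, and the trivial one-part partition $r = r$ already shows $\maxr(r, r - 1) \leq M(r)$. Since $d \geq 1$, this immediately gives $\WR(f) \leq M(r) \cdot d$.

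The substantive case is $d \geq r$. Here I would apply Lemma~\ref{lem:borderrank-gad} to write $f = \sum_{k=1}^m \ell_k^{d - r_k + 1} g_k$ with $r_1 + \cdots + r_m = r$, $\deg g_k = r_k - 1$, and $\bwr(\ell_k^{d - r_k + 1} g_k) \leq r_k$. Lemma~\ref{thm:essvar} then caps the essential variables of each summand by $r_k$, and since $\ell_k$ divides the summand one can pick coordinates that simultaneously reduce $\ell_k$ and $g_k$ to those $r_k$ variables, yielding $\WR(g_k) \leq \maxr(r_k, r_k - 1)$. Expanding $g_k$ as a sum of $s_k \leq \maxr(r_k, r_k - 1)$ pure powers of linear forms splits $\ell_k^{d - r_k + 1} g_k$ into at most $\maxr(r_k, r_k - 1)$ two-factor terms of the shape $\ell_k^{d - r_k + 1} m^{r_k - 1}$, each of Waring rank at most $\max(d - r_k + 2, r_k) \leq d$ (the last inequality uses $1 \leq r_k \leq r \leq d$, with the boundary $r_k = 1$ being the trivial $\WR(\ell_k^d) = 1$).

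Summing the bounds over $k$ then gives
\[
\WR(f) \;\leq\; d \cdot \sum_{k=1}^m \maxr(r_k, r_k - 1) \;\leq\; M(r) \cdot d,
\]
as required. The step I expect to be most subtle is the reduction $\WR(g_k) \leq \maxr(r_k, r_k - 1)$: one must argue that the $r_k$ essential variables of the full product $\ell_k^{d - r_k + 1} g_k$ can be chosen to contain both $\ell_k$ and $g_k$, which relies on $\bbC[\x]$ being a UFD and $\ell_k^{d - r_k + 1}$ being a high power of a linear form that cannot be absorbed by $g_k$. Once that is justified, all remaining steps are routine bookkeeping over the GAD and the classical monomial Waring rank bound.
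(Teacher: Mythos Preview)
Your proposal is correct and follows essentially the same argument as the paper's proof: split into a low-degree regime handled by the essential-variables bound and $\maxr$, and a high-degree regime handled by the generalized additive decomposition from Lemma~\ref{lem:borderrank-gad}, followed by expanding each $g_k$ into powers and applying the two-variable monomial bound $\WR(y_1^a y_2^b) = \max\{a+1,b+1\}$. The only cosmetic difference is the placement of the boundary: you split at $d \leq r-1$ versus $d \geq r$, whereas the paper splits at $d < r-1$ versus $d \geq r-1$; your choice has the mild advantage that $r_k \leq r \leq d$ makes the inequality $\max(d-r_k+2,\,r_k) \leq d$ hold without an edge case. You also explicitly flag the step $\Ness(g_k) \leq r_k$ as the delicate one and sketch the UFD justification; the paper asserts this inequality in one line without elaboration, so you are if anything being more careful here than the original.
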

\begin{proof}
We consider two cases depending on relation of degree $d$ and border Waring rank $r$.

\emph{Case} $d < r - 1$. Since $\bwr(f) = r$, the number of essential variables of $f$ is at most $r$.
Taking the maximum Waring rank as an upper bound, we obtain 
\[
\WR(f) \leq \maxr(r, d) < \maxr(r, r - 1) \leq M(r) \leq M(r) \cdot d.
\]

\emph{Case} $d \geq r - 1$. By~\Cref{lem:borderrank-gad} $f$ has a generalized additive decomposition
\[
f = \sum_{k = 1}^m \ell_k^{d - r_k + 1} g_k
\]
with $r_1 + \dots + r_m = r$, $\deg g_k = r_k - 1$ and $\bwr(\ell_k^{d - r_k + 1} g_k) \leq r_k$.
Since $\bwr(\ell_k^{d - r_k + 1} g_k) \leq r_k$, the number of essential variables $\Ness(g_k) \leq r_k$.
If $r_k = 1$, then 
\[
\WR(\ell_k^{d - r_k + 1} g_k) = \WR(\ell_k^d) = 1 \leq d.
\]
If $r_k \geq 2$, then we upper bound $\WR(g_k)$ by $\maxr(\Ness(g_k), \deg g_k) = \maxr(r_k, r_k - 1)$.
Taking a Waring rank decomposition $g_k = \sum_{i = 1}^{\WR(g_k)} L_i^{r_k - 1}$ and multiplying it by $\ell_k^{d - r_k + 1}$, we obtain a decomposition 
\[
\ell_k^{d - r_k + 1} g_k = \sum_{i = 1}^{\WR(g_k)} \ell_k^{d - r_k + 1} \cdot L_i^{r_k - 1}.
\]
It is known that $\WR(y_1^a y_2^b) = \max\{a, b\} + 1$ (this is a classical fact known at least to Oldenburger~\cite{Oldenburger}, see also~\cite{BBT-monomials})\footnote{it is easy to see that for $a \geq b$ the monomial $y_1^a y_2^b$ is proportional to $\sum_{k = 0}^{a} \zeta^k (\zeta^k y_1 + y_2)^{a + b}$ where $\zeta$ is a primitive root of unity of order $a + 1$.}.
It follows that
\[
\WR(\ell_k^{d - r_k + 1} L_i^{r_k - 1}) \leq \WR(y_1^{d - r_k + 1} y_2^{r_k - 1}) = \max\{d - r_k + 2, r_k\} \leq d.
\]
%\pdc{Can't it happen that one of the $L_i$ is multiple of $\ell_k$? Then, WR=1. Hence, shouldn't we use $\le$?}\vl{Yes, fixed}
Hence we have $\WR(\ell_k^{d - r_k + 1} g_k) \leq d \cdot \WR(g_k) \leq d \cdot \maxr(r_k - 1, r_k)$. 

Combining all parts of the decomposition together, we get 
\[
\WR(f) \leq d \sum_{k = 1}^m \maxr(r-k - 1, r_k) \leq M(r) \cdot d.
\qedhere\]
\end{proof}

A more explicit upper bound is provided by the following immediate corollary.
\begin{theorem}
\label{thm:WR-deborder}
Let $f \in \bbC[x_1, \dots, x_n]_d$
and let $\bwr(f) = r$. Then 
 \[
  \WR(f) \leq \binom{2r - 2}{r - 1} \cdot d.
 \]
\end{theorem}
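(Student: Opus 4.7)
The plan is to deduce the corollary directly from \cref{thm:waring-to-maxwaring}, which asserts $\WR(f) \le M(r)\cdot d$. It therefore suffices to establish the purely combinatorial bound
\[
M(r) \;\le\; \binom{2r-2}{r-1}.
\]
Unwinding the definition of $M(r)$, this requires that for every partition $r_1+\cdots+r_m = r$ with $r_k \ge 1$,
\[
\sum_{k=1}^m \maxr(r_k, r_k - 1) \;\le\; \binom{2r-2}{r-1}.
\]

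My first step would be to estimate each summand using the trivial upper bound $\maxr(n,d) \le \dim \bbC[x_1,\dots,x_n]_d = \binom{n+d-1}{d}$ mentioned right after the definition of $\maxr$. Applied with $n = d + 1 = r_k$, this yields $\maxr(r_k, r_k-1) \le \binom{2r_k-2}{r_k-1}$. Writing $C_s := \binom{2s-2}{s-1}$, the task reduces to showing $C_{r_1}+\cdots+C_{r_m} \le C_r$ for every composition of $r$.

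By an easy induction on $m$, this follows once I have the two-term super-additivity $C_a + C_b \le C_{a+b}$ for all integers $a,b \ge 1$. For this I would invoke Vandermonde's convolution with parameters $(2a-2, 2b)$ and summation index $a+b-1$,
\[
C_{a+b} \;=\; \binom{2a+2b-2}{a+b-1} \;=\; \sum_{j} \binom{2a-2}{j}\binom{2b}{a+b-1-j},
\]
and isolate the single term $j = a-1$ to deduce $C_{a+b} \ge C_a\cdot \binom{2b}{b}$. Since $\binom{2b}{b}\ge 2$ for all $b\ge 1$, this gives $C_{a+b} \ge 2C_a$, and symmetrically $C_{a+b}\ge 2C_b$. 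Assuming without loss of generality that $a\ge b$ (whence $C_a\ge C_b$ by monotonicity of $C_s$), one concludes $C_a + C_b \le 2C_a \le C_{a+b}$, as required.

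The only non-routine step is the combinatorial super-additivity $C_a + C_b \le C_{a+b}$, and the Vandermonde route above handles it cleanly; no further obstacle is anticipated, since everything else is a direct substitution into the preceding theorem.
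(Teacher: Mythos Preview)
Your proposal is correct and follows essentially the same approach as the paper: both invoke \cref{thm:waring-to-maxwaring}, bound $\maxr(r_k,r_k-1)$ by the dimension $\binom{2r_k-2}{r_k-1}$, and then reduce $M(r)\le \binom{2r-2}{r-1}$ to the two-term super-additivity $\binom{2p-2}{p-1}+\binom{2q-2}{q-1}\le \binom{2(p+q)-2}{p+q-1}$. The only cosmetic difference is that the paper verifies this last inequality by exhibiting a direct-sum embedding of the two smaller polynomial spaces inside $\bbC[x_1,\dots,x_r]_{r-1}$, whereas you use Vandermonde's convolution; either route is entirely adequate for this elementary step.
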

\begin{proof}
The space of homogeneous polynomials of degree $r - 1$ in $r$ variables has dimension $\binom{2r - 2}{r - 1}$ and is spanned by powers of linear forms.
Therefore, $\maxr(r - 1, r) \leq \binom{2r - 2}{r - 1}$.
Note that if $r = p + q$ with $p, q \neq 0$, then the space $\bbC[x_1, \dots, x_r]_{r - 1}$ contains a direct sum of $x_1^{q} \cdot\bbC[x_1, \dots, x_p]_{p - 1}$ and $x_1^{p + 1} \cdot \bbC[x_{p+1}, \dots, x_r]_{q - 1}$.
Taking the dimensions of these spaces, we obtain
$\binom{2r - 2}{r - 1} \geq \binom{2p - 2}{p - 1} + \binom{2q - 2}{q - 1}$.
It follows that $M(r) \leq \binom{2r - 2}{r - 1}$.
\end{proof}

Using the Blekherman--Teitler bound on the maximum rank~\cite{BlekTeit:MaximumTypicalGenericRanks}, we can get a slightly better bound. The proof is essentially the same as for the previous theorem.

\begin{corollary}
Let $f \in \bbC[x_1, \dots, x_n]_d$
and let $\bwr(f) = r$. Then 
 \[
  \WR(f) \leq 2 \left\lceil \frac{1}{r} \binom{2r - 2}{r - 1} \right\rceil \cdot d.
 \]
\end{corollary}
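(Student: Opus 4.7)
The plan is to mirror the proof of Theorem~\ref{thm:WR-deborder}, substituting its crude dimension bound $\maxr(r, r-1) \leq \binom{2r-2}{r-1}$ by the sharper Blekherman--Teitler bound on the maximum Waring rank. By Theorem~\ref{thm:waring-to-maxwaring} we already have $\WR(f) \leq M(r) \cdot d$, so it suffices to prove
\[
M(r) \;\leq\; 2\left\lceil \tfrac{1}{r}\binom{2r-2}{r-1}\right\rceil.
\]

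The Blekherman--Teitler theorem~\cite{BlekTeit:MaximumTypicalGenericRanks} states that $\maxr(n, d) \leq 2\lceil\tfrac{1}{n}\binom{n+d-1}{d}\rceil$, i.e.\ the maximum Waring rank is at most twice the generic Waring rank. Applying this with $(n, d) = (r_k, r_k - 1)$ for each part of a partition $r = r_1 + \cdots + r_m$ of $r$, each term in the definition of $M(r)$ is bounded by $2 C_{r_k - 1}$, where $C_n = \tfrac{1}{n+1}\binom{2n}{n}$ is the $n$-th Catalan number. One then needs to bound the resulting partition sum by $2 C_{r-1} = 2 \cdot \tfrac{1}{r}\binom{2r-2}{r-1}$; this is the analogue of the super-additivity inequality $\binom{2r-2}{r-1} \geq \sum_k \binom{2r_k-2}{r_k-1}$ used in the proof of Theorem~\ref{thm:WR-deborder}.

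The main obstacle is precisely this super-additivity for the Catalan numbers. For partitions into parts $r_k \geq 3$ it follows from Segner's recurrence $C_n = \sum_{i=0}^{n-1} C_i C_{n-1-i}$: isolating the terms indexed by $i = p-1$ and $i = q-1$ for $r = p+q$ yields $C_{r-1} \geq C_{p-1} C_{q-1} \geq C_{p-1} + C_{q-1}$ whenever $p, q \geq 3$, and one iterates. Parts of size $1$ or $2$ must be handled separately because the Blekherman--Teitler estimate is loose there: in fact $\maxr(1, 0) = \maxr(2, 1) = 1$, rather than the bound $2 C_0 = 2 C_1 = 2$. Using these exact values for small parts and then checking by a short case analysis (or a direct induction on the number of parts of size $1$ or $2$) that the resulting combined sum is still bounded by $2 C_{r-1}$ yields $M(r) \leq 2\lceil \tfrac{1}{r}\binom{2r-2}{r-1}\rceil$. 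Multiplying by $d$ gives the stated bound.
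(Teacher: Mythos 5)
Your proposal is correct in structure, and it correctly identifies a subtlety that the paper's one-line proof (``essentially the same as for the previous theorem'') glosses over. In the theorem, the superadditivity $\binom{2p-2}{p-1}+\binom{2q-2}{q-1}\leq\binom{2(p+q)-2}{p+q-1}$ is clean and holds for all $p,q\geq 1$ via the direct-sum-of-subspaces argument. The analogous statement for the Blekherman--Teitler bound, namely $2C_{p-1}+2C_{q-1}\leq 2C_{p+q-1}$ with $C_n=\frac{1}{n+1}\binom{2n}{n}$, genuinely fails when $p$ or $q$ is small (for instance $p=q=1$ gives $4\not\leq 2$, and $p=q=2$ gives $4\not\leq 4$ only with equality). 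So the proof cannot be a verbatim repetition of the theorem's argument, and you are right that the small parts must be treated using the exact values $\maxr(1,0)=\maxr(2,1)=1$ rather than the BT estimate.

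Your route through Segner's recurrence is correct: $C_{r-1}=\sum_{i=0}^{r-2}C_iC_{r-2-i}\geq C_{p-1}C_{q-1}$ for $r=p+q$, and $C_{p-1}C_{q-1}\geq C_{p-1}+C_{q-1}$ whenever both factors are $\geq 2$, i.e.\ $p,q\geq 3$; iterating merges all large parts. What you leave to ``a short case analysis'' does close: writing the partition as $a$ ones, $b$ twos, and a merged block $S\geq 3$, one needs $a+b+2C_{S-1}\leq 2C_{a+2b+S-1}$. Since $C_{n+1}/C_n\geq 2$ for $n\geq 1$, one has $C_{a+2b+S-1}\geq 2^{a+2b}C_{S-1}$, and $2C_{S-1}(2^{a+2b}-1)\geq 4(2^{a+2b}-1)\geq a+2b\geq a+b$, using $C_{S-1}\geq C_2=2$; the degenerate case with no large part reduces to $a+b\leq a+2b=r\leq 2C_{r-1}$, which holds for all $r\geq 1$. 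This is a more combinatorial route than the direct-sum dimension count in the theorem, and it has the virtue of making explicit why parts of size $1$ and $2$ are the only problematic ones; it would be worth spelling out the last step rather than deferring to an unspecified case analysis, precisely because it is the point where the paper's ``same proof'' claim needs the extra care.
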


\subsection{Scheme-theoretic proof}

In this section we give a proof of~\cref{lem:borderrank-gad} based on the theory of $0$-dimensional schemes and apolarity.
This short section assumes familiarity with these topics, we review them in more details in~\cref{sec:apolarity}.

\mainlemma*
\begin{proof}[Alternative proof]
Denote by $V$ the space of linear forms $\bbC[\x]_1$.

Since $d \geq r - 1$, the border Waring rank of $f$ is equal to its \emph{smoothable rank} $\SR(f)$~\cite{BB-wild}, that is, there exists a $0$-dimensional scheme $Z \subset \bbP V$ of degree $r$ which is smoothable (obtained as a flat limit of the family of $r$-point subsets of $\bbP V$) and $f$ is apolar to $Z$. 
Let $I$ be the ideal of $Z$ and let $I = I^{(1)} \cap \dots \cap I^{(m)}$ be the primary decomposition of this ideal.
The primary ideals $I^{(j)}$ correspond to irreducible components $Z_j$ of the scheme $Z$.

Since $f$ is apolar to $I$, we have $f \in I_d^{\perp} = (I^{(1)}_d)^{\perp} + \dots + (I^{(m)}_d)^{\perp}$.
In particular, there exist $f_j \in (I^{(j)}_d)^{\perp}$ such that $f = f_1 + \cdots + f_m$.
Let~$r_j$ be the degree of $Z_j$. By the definition of degree, $r = r_1 + \cdots + r_m$.
If $Z_j$ is supported at the point $[\ell_j] \in \bbP V$, then for the ideal $I^{(j)}$ we have $(\ell_j^{\perp})^{r_j} \subset I^{(j)} \subset \ell_j^{\perp}$ and $(I^{(j)}_d)^{\perp} \subset \ell_j^{d - r_j + 1} \cdot \bbC[\x]_{r_j - 1}$.
Therefore the polynomials $f_j$ have the form $\ell_j^{d - r_j + 1} g_j$ for some $g_j$ of degree $\deg g_j = r_j - 1$.

Additionally, all irreducible components of a smoothable scheme $Z$ are smoothable~\cite[Thm. 1.1]{MR3724212}, and since $f_j$ is apolar to $Z_j$, we have $\bwr(f_j) \leq \SR(f_j) \leq r_j$.
\end{proof}

\bibliography{stacs2024.bib}

\begin{thebibliography}{10}

\bibitem{alder1984}
Alexander Alder.
\newblock {\em Grenzrang und Grenzkomplexit{\"a}t aus algebraischer und
  topologischer Sicht}.
\newblock PhD thesis, Universität Zürich, 1984.

\bibitem{AlHir:Poly_interpolation_in_several_variables}
J.~Alexander and A.~Hirschowitz.
\newblock {Polynomial interpolation in several variables}.
\newblock {\em J. Alg. Geom.}, 4(2):201--222, 1995.

\bibitem{DBLP:conf/coco/AlexeevFT11}
Boris Alexeev, Michael~A. Forbes, and Jacob Tsimerman.
\newblock Tensor rank: Some lower and upper bounds.
\newblock In {\em {CCC} 2011}, pages 283--291. {IEEE} Computer Society, 2011.
\newblock \href {https://doi.org/10.1109/CCC.2011.28}
  {\path{doi:10.1109/CCC.2011.28}}.

\bibitem{DBLP:journals/cc/AllenderW16}
Eric Allender and Fengming Wang.
\newblock On the power of algebraic branching programs of width two.
\newblock {\em Comput. Complex.}, 25(1):217--253, 2016.
\newblock \href {https://doi.org/10.1007/s00037-015-0114-7}
  {\path{doi:10.1007/s00037-015-0114-7}}.

\bibitem{ballico2019ranks}
Edoardo Ballico.
\newblock On the ranks of homogeneous polynomials of degree at least 9 and
  border rank 5.
\newblock {\em Note di Matematica}, 38(2):55--92, 2019.
\newblock \href {https://doi.org/10.1285/i15900932v38n2p55}
  {\path{doi:10.1285/i15900932v38n2p55}}.

\bibitem{ballicobernardi13}
Edoardo Ballico and Alessandra Bernardi.
\newblock Stratification of the fourth secant variety of {V}eronese varieties
  via the symmetric rank.
\newblock {\em Adv. Pure Appl. Math.}, 4(2):215--250, 2013.
\newblock \href {https://doi.org/10.1515/apam-2013-0015}
  {\path{doi:10.1515/apam-2013-0015}}.

\bibitem{ballicobernardi2012}
Edoardo Ballico and Alessandra Bernardi.
\newblock Curvilinear schemes and maximum rank of forms.
\newblock {\em Le Matematiche}, 72(1):137--144, 2017.
\newblock \href {https://doi.org/10.4418/2017.72.1.10}
  {\path{doi:10.4418/2017.72.1.10}}.

\bibitem{BBM-ranks}
Alessandra Bernardi, J\'{e}r\^{o}me Brachat, and Bernard Mourrain.
\newblock A comparison of different notions of ranks of symmetric tensors.
\newblock {\em Linear Algebra Appl.}, 460:205--230, 2014.
\newblock \href {https://doi.org/10.1016/j.laa.2014.07.036}
  {\path{doi:10.1016/j.laa.2014.07.036}}.

\bibitem{Bernardi2018}
Alessandra Bernardi, Joachim Jelisiejew, Pedro Macias~Marques, and Kristian
  Ranestad.
\newblock On polynomials with given {H}ilbert function and applications.
\newblock {\em Collect. Math.}, 69(1):39--64, 2018.
\newblock \href {https://doi.org/10.1007/s13348-016-0190-2}
  {\path{doi:10.1007/s13348-016-0190-2}}.

\bibitem{BerOneTau:SchemesEvincedGAD}
Alessandra Bernardi, Alessandro Oneto, and Daniele Taufer.
\newblock {On schemes evinced by generalized additive decompositions and their
  regularity}, 2023.
\newblock \href {http://arxiv.org/abs/2309.12961} {\path{arXiv:2309.12961}}.

\bibitem{Bin80}
Dario Bini.
\newblock Relations between exact and approximate bilinear algorithms.
  {A}pplications.
\newblock {\em Calcolo}, 17(1):87--97, 1980.
\newblock \href {https://doi.org/10.1007/BF02575865}
  {\path{doi:10.1007/BF02575865}}.

\bibitem{BCRL79}
Dario Bini, Milvio Capovani, Francesco Romani, and Grazia Lotti.
\newblock {$O(n^{2.7799})$} complexity for $n \times n$ approximate matrix
  multiplication.
\newblock {\em Inf. Process. Lett.}, 8(5):234--235, 1979.
\newblock \href {https://doi.org/10.1016/0020-0190(79)90113-3}
  {\path{doi:10.1016/0020-0190(79)90113-3}}.

\bibitem{blaser2020complexity}
Markus Bl{\"{a}}ser, Julian D{\"{o}}rfler, and Christian Ikenmeyer.
\newblock On the complexity of evaluating highest weight vectors.
\newblock In {\em 36th Computational Complexity Conference ({CCC} 2021)},
  volume 200 of {\em LIPIcs}, pages 29:1--29:36. Dagstuhl, 2021.
\newblock \href {https://doi.org/10.4230/LIPIcs.CCC.2021.29}
  {\path{doi:10.4230/LIPIcs.CCC.2021.29}}.

\bibitem{BlekTeit:MaximumTypicalGenericRanks}
Grigoriy Blekherman and Zach Teitler.
\newblock On maximum, typical and generic ranks.
\newblock {\em Math. Ann.}, 362(3-4):1021--1031, 2015.
\newblock \href {https://doi.org/10.1007/s00208-014-1150-3}
  {\path{doi:10.1007/s00208-014-1150-3}}.

\bibitem{BIZ18}
Karl Bringmann, Christian Ikenmeyer, and Jeroen Zuiddam.
\newblock On algebraic branching programs of small width.
\newblock {\em J. {ACM}}, 65(5):32:1--32:29, 2018.
\newblock \href {https://doi.org/10.1145/3209663} {\path{doi:10.1145/3209663}}.

\bibitem{BuczBucz:SecantVarsHighDegVeroneseReembeddingsCataMatAndGorSchemes}
Weronika Buczy\'{n}ska and Jaros\l{}aw Buczy\'{n}ski.
\newblock Secant varieties to high degree {V}eronese reembeddings,
  catalecticant matrices and smoothable {G}orenstein schemes.
\newblock {\em J. Algebraic Geom.}, 23(1):63--90, 2014.
\newblock \href {https://doi.org/10.1090/S1056-3911-2013-00595-0}
  {\path{doi:10.1090/S1056-3911-2013-00595-0}}.

\bibitem{BB-wild}
Weronika Buczy\'{n}ska and Jaros\l{}aw Buczy\'{n}ski.
\newblock On differences between the border rank and the smoothable rank of a
  polynomial.
\newblock {\em Glasg. Math. J.}, 57(2):401--413, 2015.
\newblock \href {https://doi.org/10.1017/S0017089514000378}
  {\path{doi:10.1017/S0017089514000378}}.

\bibitem{MR3724212}
Jaros\l~aw Buczy\'{n}ski and Joachim Jelisiejew.
\newblock Finite schemes and secant varieties over arbitrary characteristic.
\newblock {\em Differential Geom. Appl.}, 55:13--67, 2017.
\newblock \href {https://doi.org/10.1016/j.difgeo.2017.08.004}
  {\path{doi:10.1016/j.difgeo.2017.08.004}}.

\bibitem{BB-apolarity}
Weronika Buczyńska and Jarosław Buczyński.
\newblock Apolarity, border rank, and multigraded {Hilbert} scheme.
\newblock {\em Duke Math. J.}, 170(16):3659--3702, 2021.
\newblock \href {https://doi.org/10.1215/00127094-2021-0048}
  {\path{doi:10.1215/00127094-2021-0048}}.

\bibitem{BBT-monomials}
Weronika Buczyńska, Jarosław Buczyński, and Zach Teitler.
\newblock Waring decompositions of monomials.
\newblock {\em Journal of Algebra}, 378:45--57, 2013.
\newblock \href {https://doi.org/10.1016/j.jalgebra.2012.12.011}
  {\path{doi:10.1016/j.jalgebra.2012.12.011}}.

\bibitem{Bur04}
Peter B{\"{u}}rgisser.
\newblock The complexity of factors of multivariate polynomials.
\newblock {\em Found. Comput. Math.}, 4(4):369--396, 2004.
\newblock \href {https://doi.org/10.1007/s10208-002-0059-5}
  {\path{doi:10.1007/s10208-002-0059-5}}.

\bibitem{bcs97}
Peter B{\"u}rgisser, Michael Clausen, and Mohammad~Amin Shokrollahi.
\newblock {\em Algebraic Complexity Theory}, volume 315 of {\em Grundlehren der
  mathematischen Wissenschaften}.
\newblock Springer-Verlag, 1997.
\newblock \href {https://doi.org/10.1007/978-3-662-03338-8}
  {\path{doi:10.1007/978-3-662-03338-8}}.

\bibitem{carlini2006reducing}
Enrico Carlini.
\newblock Reducing the number of variables of a polynomial.
\newblock In {\em Algebraic geometry and geometric modeling}, pages 237--247.
  Springer, Berlin, 2006.
\newblock \href {https://doi.org/10.1007/978-3-540-33275-6_15}
  {\path{doi:10.1007/978-3-540-33275-6_15}}.

\bibitem{Cay:TheoryLinTransformations}
Arthur Cayley.
\newblock {On the theory of linear transformations}.
\newblock {\em Cambridge Math. J.}, IV:193--209, 1845.
\newblock Reprinted in A. Cayley, \emph{The Collected Mathematical Papers I},
  80--94, Cambridge University Press, 1889.
\newblock \href {https://doi.org/10.1017/CBO9780511703676.014}
  {\path{doi:10.1017/CBO9780511703676.014}}.

\bibitem{CHIL18}
Luca Chiantini, Jonathan~D. Hauenstein, Christian Ikenmeyer, Joseph~M.
  Landsberg, and Giorgio Ottaviani.
\newblock Polynomials and the exponent of matrix multiplication.
\newblock {\em Bull. LMS}, 50(3):369--389, 2018.
\newblock \href {https://doi.org/10.1112/blms.12147}
  {\path{doi:10.1112/blms.12147}}.

\bibitem{Cleb:TheorieFlachen}
Alfred Clebsch.
\newblock {Zur Theorie der algebraischen Flächen}.
\newblock {\em J. Reine Angew. Math.}, 58:93--108, 1861.
\newblock \href {https://doi.org/10.1515/crll.1861.58.93}
  {\path{doi:10.1515/crll.1861.58.93}}.

\bibitem{duttademystifying}
Pranjal Dutta, Prateek Dwivedi, and Nitin Saxena.
\newblock Demystifying the border of depth-3 algebraic circuits.
\newblock In {\em {FOCS} 2021}, pages 92--103. {IEEE}, 2021.
\newblock \href {https://doi.org/10.1109/FOCS52979.2021.00018}
  {\path{doi:10.1109/FOCS52979.2021.00018}}.

\bibitem{EGOW18}
Klim Efremenko, Ankit Garg, Rafael~Mendes de~Oliveira, and Avi Wigderson.
\newblock Barriers for rank methods in arithmetic complexity.
\newblock In {\em 9th Innovations in Theoretical Computer Science Conference
  ({ITCS} 2018)}, volume~94 of {\em LIPIcs}, pages 1:1--1:19. Dagstuhl, 2018.
\newblock \href {https://doi.org/10.4230/LIPIcs.ITCS.2018.1}
  {\path{doi:10.4230/LIPIcs.ITCS.2018.1}}.

\bibitem{Forbes16}
Michael Forbes.
\newblock Some concrete questions on the border complexity of polynomials.
\newblock Presentation given at the Workshop on Algebraic Complexity Theory
  ({WACT} 2016) in {T}el {A}viv, 2016.
\newblock URL: \url{https://www.youtube.com/watch?v=1HMogQIHT6Q}.

\bibitem{Galazka:VectorBundlesGiveEqnsForCactus}
Maciej Ga{\l}\k{a}zka.
\newblock {Vector bundles give equations of cactus varieties}.
\newblock {\em Lin. Alg. Appl.}, 521:254–262, 2017.
\newblock \href {https://doi.org/10.1016/j.laa.2016.12.005}
  {\path{doi:10.1016/j.laa.2016.12.005}}.

\bibitem{DBLP:conf/focs/GargMOW19}
Ankit Garg, Visu Makam, Rafael~Mendes de~Oliveira, and Avi Wigderson.
\newblock More barriers for rank methods, via a "numeric to symbolic" transfer.
\newblock In {\em {FOCS} 2019}, pages 824--844. {IEEE}, 2019.
\newblock \href {https://doi.org/10.1109/FOCS.2019.00054}
  {\path{doi:10.1109/FOCS.2019.00054}}.

\bibitem{grace_young_2010}
John~Hilton Grace and Alfred Young.
\newblock {\em The Algebra of Invariants}.
\newblock Cambridge Library Collection --- Mathematics. Cambridge University
  Press, 2010.
\newblock \href {https://doi.org/10.1017/CBO9780511708534}
  {\path{doi:10.1017/CBO9780511708534}}.

\bibitem{grothendieck1962techniques}
Alexander Grothendieck.
\newblock Techniques de construction et th\'eor\`emes d'existence en
  g\'eom\'etrie alg\'ebrique {IV} : les sch\'emas de {Hilbert}.
\newblock In {\em S\'eminaire Bourbaki : ann\'ees 1960/61, expos\'es 205-222},
  number~6 in S\'eminaire Bourbaki. Soci\'et\'e math\'ematique de France, 1961.
\newblock URL: \url{http://www.numdam.org/item/SB_1960-1961__6__249_0/}.

\bibitem{Gurvits-permanent}
Leonid Gurvits.
\newblock Ryser (or polarization) formula for the permanent is essentially
  optimal: the {Waring} rank approach.
\newblock Technical Report LA-UR08-06583, Los Alamos National Laboratory, 2008.

\bibitem{haiman2004multigraded}
Mark Haiman and Bernd Sturmfels.
\newblock Multigraded {Hilbert} schemes.
\newblock {\em Journal of Algebraic Geometry}, 13(4):725--769, 2004.
\newblock \href {https://doi.org/10.1090/S1056-3911-04-00373-X}
  {\path{doi:10.1090/S1056-3911-04-00373-X}}.

\bibitem{Hartshorne}
Robin Hartshorne.
\newblock {\em Algebraic geometry}.
\newblock Springer, New York, 1977.
\newblock Graduate Texts in Mathematics, No. 52.
\newblock \href {https://doi.org/10.1007/978-1-4757-3849-0}
  {\path{doi:10.1007/978-1-4757-3849-0}}.

\bibitem{Iarrobino_1995}
Anthony Iarrobino.
\newblock Inverse system of a symbolic power {II}. the {W}aring problem for
  forms.
\newblock {\em Journal of Algebra}, 174(3):1091--1110, 1995.
\newblock \href {https://doi.org/10.1006/jabr.1995.1169}
  {\path{doi:10.1006/jabr.1995.1169}}.

\bibitem{Iarrobino-Kanev}
Anthony Iarrobino and Vassil Kanev.
\newblock {\em Power sums, {G}orenstein algebras, and determinantal loci},
  volume 1721 of {\em Lecture Notes in Mathematics}.
\newblock Springer-Verlag, Berlin, 1999.
\newblock \href {https://doi.org/10.1007/BFb0093426}
  {\path{doi:10.1007/BFb0093426}}.

\bibitem{IS22}
Christian Ikenmeyer and Abhiroop Sanyal.
\newblock A note on {VNP}-completeness and border complexity.
\newblock {\em Inf. Process. Lett.}, 176:106243, 2022.
\newblock \href {https://doi.org/10.1016/j.ipl.2021.106243}
  {\path{doi:10.1016/j.ipl.2021.106243}}.

\bibitem{Jelisiejew-maxrank}
Joachim Jelisiejew.
\newblock An upper bound for the {W}aring rank of a form.
\newblock {\em Arch. Math. (Basel)}, 102(4):329--336, 2014.
\newblock \href {https://doi.org/10.1007/s00013-014-0632-6}
  {\path{doi:10.1007/s00013-014-0632-6}}.

\bibitem{Jeli:Pathologies}
Joachim Jelisiejew.
\newblock {Pathologies on the Hilbert scheme of points}.
\newblock {\em Inventiones mathematicae}, 220(2):581–610, 2020.
\newblock \href {https://doi.org/10.1007/s00222-019-00939-5}
  {\path{doi:10.1007/s00222-019-00939-5}}.

\bibitem{JelMan:LimitsSaturatedIdeals}
Joachim Jelisiejew and Tomasz Mańdziuk.
\newblock Limits of saturated ideals, 2022.
\newblock \href {http://arxiv.org/abs/2210.13579} {\path{arXiv:2210.13579}}.

\bibitem{kayal2007polynomial}
Neeraj Kayal and Nitin Saxena.
\newblock Polynomial identity testing for depth 3 circuits.
\newblock {\em comput. complex.}, 16(2):115--138, 2007.
\newblock \href {https://doi.org/10.1007/s00037-007-0226-9}
  {\path{doi:10.1007/s00037-007-0226-9}}.

\bibitem{Kra85}
Hanspeter Kraft.
\newblock {\em Geometrische {M}ethoden in der {I}nvariantentheorie}.
\newblock Springer, Berlin, 2 edition, 1985.
\newblock \href {https://doi.org/10.1007/978-3-663-10143-7}
  {\path{doi:10.1007/978-3-663-10143-7}}.

\bibitem{kum20}
Mrinal Kumar.
\newblock On the power of border of depth-3 arithmetic circuits.
\newblock {\em {ACM} Trans. Comput. Theory}, 12(1):5:1--5:8, 2020.
\newblock \href {https://doi.org/10.1145/3371506} {\path{doi:10.1145/3371506}}.

\bibitem{landsberg2010}
Joseph~M. Landsberg and Zach Teitler.
\newblock On the ranks and border ranks of symmetric tensors.
\newblock {\em Found. Comput. Math.}, 10(3):339--366, 2010.
\newblock \href {https://doi.org/10.1007/s10208-009-9055-3}
  {\path{doi:10.1007/s10208-009-9055-3}}.

\bibitem{lehmkuhl1989order}
Thomas Lehmkuhl and Thomas Lickteig.
\newblock On the order of approximation in approximative triadic decompositions
  of tensors.
\newblock {\em Theor. Comput. Sci.}, 66(1):1--14, 1989.
\newblock \href {https://doi.org/10.1016/0304-3975(89)90141-2}
  {\path{doi:10.1016/0304-3975(89)90141-2}}.

\bibitem{MS01}
Ketan Mulmuley and Milind~A. Sohoni.
\newblock Geometric complexity theory {I:} an approach to the {P} vs. {NP} and
  related problems.
\newblock {\em {SIAM} J. Comput.}, 31(2):496--526, 2001.
\newblock \href {https://doi.org/10.1137/S009753970038715X}
  {\path{doi:10.1137/S009753970038715X}}.

\bibitem{nis91}
Noam Nisan.
\newblock Lower bounds for non-commutative computation (extended abstract).
\newblock In {\em STOC'91: Proceedings of the 23rd Annual {ACM} Symposium on
  Theory of Computing}, pages 410--418. {ACM}, 1991.
\newblock \href {https://doi.org/10.1145/103418.103462}
  {\path{doi:10.1145/103418.103462}}.

\bibitem{Oldenburger}
Rufus Oldenburger.
\newblock Polynomials in several variables.
\newblock {\em Annals of Mathematics}, 41(3):694--710, 1940.
\newblock \href {https://doi.org/10.2307/1968741} {\path{doi:10.2307/1968741}}.

\bibitem{Palatini:SuperficieAlg}
Francesco Palatini.
\newblock {Sulle superficie algebriche i cui $S_h$ $(h+1)$-seganti non
  riempiono lo spazio ambiente}.
\newblock {\em Atti della R. Acc. delle Scienze di Torino}, 41:634–640, 1906.
\newblock URL:
  \url{https://archive.org/details/attidellarealeac41real/page/634}.

\bibitem{sax08}
Nitin Saxena.
\newblock Diagonal circuit identity testing and lower bounds.
\newblock In {\em Automata, Languages and Programming, 35th International
  Colloquium, {ICALP} 2008, Part {I}}, volume 5125 of {\em Lecture Notes in
  Computer Science}, pages 60--71. Springer, 2008.
\newblock \href {https://doi.org/10.1007/978-3-540-70575-8_6}
  {\path{doi:10.1007/978-3-540-70575-8_6}}.

\bibitem{Sylv:PrinciplesCalculusForms}
James~Joseph Sylvester.
\newblock {On the principles of the calculus of forms}.
\newblock {\em J. Cambridge and Dublin Math.}, 7:52--97, 1852.
\newblock Reprinted in J.\,J.\,Sylvester, \emph{The Collected Mathematical
  Papers I}, 284--327, Cambridge University Press, 1904.

\bibitem{Terr:seganti}
Alessandro Terracini.
\newblock {Sulle $V_k$ per cui la varietà degli $S_h$ $(h+1)$-seganti ha
  dimensione minore dell'ordinario}.
\newblock {\em Rend. Circ. Mat.}, 31:392–396, 1911.
\newblock \href {https://doi.org/10.1007/BF03018812}
  {\path{doi:10.1007/BF03018812}}.

\bibitem{wact-open-problems}
{WACT 2016}.
\newblock Some accessible open problems.
\newblock URL:
  \url{https://www.cs.tau.ac.il/~shpilka/wact2016/concreteOpenProblems/openprobs.pdf}.

\end{thebibliography}

\appendix

\section{Behind the scenes: generalized additive decompositions and schemes}\label{sec:apolarity}

We will now discuss how the results similar to these of the~\cref{sec:deborderingWR} can be obtained from apolarity theory and the study of $0$-dimensional schemes in projective space.
The connection between variations of Waring rank, apolar schemes and generalized additive decompositions is explored in detail by Bernardi, Brachat and Mourrain in~\cite{BBM-ranks} (they use a subtly different notion of generalized affine decomposition).
In particular, there exists a much stronger version of~\Cref{lem:borderrank-gad} (although with a slightly worse restriction on degree), which tightly relates generalized additive decompositions of a homogeneous polynomial $f$ to its \emph{cactus rank} $\CR(f)$, a variation of Waring rank arising in apolarity theory defined in terms of $0$-dimensional schemes in the space of linear forms in place of finite sets of linear forms.
We will formally define the notions of cactus rank and size of a generalized additive decomposition later, for not let us state the theorem, which is based on~\cite[Thm.~3.5]{BBM-ranks}.

\begin{theorem}\label{thm:cactusrank}
If $\deg f \geq 2 \cdot \CR(f) - 1$, then the cactus rank of a homogeneous polynomial $f$ is equal to the minimal possible size of a generalized additive decomposition for $f$.
\end{theorem}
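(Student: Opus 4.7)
The plan is to prove the equality $\CR(f) = $ (minimum size of a GAD for $f$) by establishing both inequalities separately, mirroring the strategy of the alternative proof of \cref{lem:borderrank-gad}, which is founded on primary decomposition of ideals and apolarity (Macaulay duality).

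For the direction $\CR(f) \geq$ (min GAD size), I would start from a $0$-dimensional scheme $Z \subset \bbP V$ of degree $\CR(f)$ apolar to $f$. Decomposing $Z = Z_1 \sqcup \cdots \sqcup Z_m$ into its local components supported at distinct points $[\ell_j]$ of lengths $r_j = \deg Z_j$, consider the primary decomposition $I(Z) = I^{(1)} \cap \cdots \cap I^{(m)}$. The Castelnuovo--Mumford regularity of a $0$-dimensional scheme of degree $s$ is at most $s$, so the hypothesis $d \geq 2\CR(f) - 1$ comfortably exceeds the regularity of $Z$ and ensures, via Chinese Remainder in degree $d$, the identity $(I(Z)_d)^\perp = \sum_j (I^{(j)}_d)^\perp$. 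Exactly as in the alternative proof of \cref{lem:borderrank-gad}, the containment $(\ell_j^\perp)^{r_j} \subset I^{(j)} \subset \ell_j^\perp$ translates by duality into $(I^{(j)}_d)^\perp \subset \ell_j^{d - r_j + 1} \cdot \bbC[\x]_{r_j - 1}$, so decomposing $f = \sum_j f_j$ with $f_j \in (I^{(j)}_d)^\perp$ yields a GAD of size $\sum r_j = \CR(f)$.

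For the reverse inequality $\CR(f) \leq$ (min GAD size), given a minimal GAD $f = \sum_k \ell_k^{d - r_k + 1} g_k$ of size $r = \sum r_k$, the plan is to reconstruct an apolar scheme of degree at most $r$ one summand at a time. For each $k$, the classical Macaulay inverse system construction, applied to the polynomial $g_k$ in local affine coordinates centered at $[\ell_k]$, produces a local Artinian scheme $Z_k$ supported at $[\ell_k]$ of length at most $r_k$ whose ideal lies inside the annihilator of $\ell_k^{d - r_k + 1} g_k$ in degree $d$. Since the base points $[\ell_k]$ are pairwise distinct, the schemes $Z_k$ have disjoint supports, so the disjoint union $Z = Z_1 \sqcup \cdots \sqcup Z_m$ is a $0$-dimensional scheme of degree at most $r$. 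Apolarity $I(Z) \subset \Ann(f)$ follows since any element of $I(Z)_d$ lies in each $I(Z_k)_d$ and therefore annihilates each summand of $f$ individually. Minimality of the GAD together with a non-degeneracy argument then forces $\deg Z = r$, giving $\CR(f) \leq r$.

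The hardest step will be the careful use of the degree hypothesis in both directions. In the first direction, decomposing $(I(Z)_d)^\perp$ as the sum of its local pieces requires $d$ to exceed the regularity of every primary component, not only that of $I(Z)$ itself, and the bound $d \geq 2\CR(f) - 1$ is tuned precisely for this: a primary ideal at a point of multiplicity $r_j$ can have regularity $r_j$, and the overall regularity of $Z$ is bounded by $\deg Z$. In the second direction, certifying that the constructed local schemes $Z_k$ have the prescribed length $r_k$ rather than something smaller (contradicting minimality) requires ruling out cancellations between different local parts, for which an apolarity-theoretic refinement of \cref{lem:jordan-lemma} --- again relying on $d$ being large compared to $r$ --- will be the key tool.
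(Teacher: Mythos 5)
Your overall strategy — proving both inequalities via primary decomposition and apolarity, with the generalized additive decomposition read off from the duals of the primary components — is the same as the paper's. But you omit the two auxiliary lemmas the paper proves on the way (\cref{lem:gad-to-scheme} and \cref{lem:scheme-to-gad}), and this leaves a genuine gap together with a misreading of where the degree hypothesis acts.

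The core gap: you never control the paper's notion of \emph{size}. By the paper's definition, the size of a GAD $f = \sum_k \ell_k^{d-r_k+1}g_k$ is \emph{not} $\sum_k r_k$ but the sum of the \emph{compressed sizes} of the summands, i.e.\ the dimensions of the partial-derivative spaces of the compressions $f_{(\partial,\ell_k)}$. When you write $f = \sum_j f_j$ with $f_j\in(I^{(j)}_d)^\perp \subset \ell_j^{d-r_j+1}\cdot\bbC[\x]_{r_j-1}$ and conclude the resulting GAD ``has size $\sum r_j$'', you have only identified the exponents, not the compressed sizes. Bounding the compressed size of $f_j$ by $r_j$ is exactly \cref{lem:scheme-to-gad}, and this is the one place the degree hypothesis is actually used: one needs $d - m_j \geq r_j - 1$, where $(\ell_j^\perp)^{m_j}\subset I^{(j)}$, so that \cref{thm:dim0-regularity} forces $\dim (I^{(j)}_{d-m_j})^\perp = r_j$, into which the partial-derivative space of the compression injects. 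Without the hypothesis that $f_j$ is apolar to a scheme of length $r_j$, a summand $\ell^{d-r+1}g$ with $g$ generic of degree $r-1$ in many variables has compressed size far larger than $r$, so this step cannot be skipped.

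You also misattribute the degree hypothesis elsewhere. The identity $I_d^\perp = \sum_j (I^{(j)}_d)^\perp$ is pure linear algebra (orthogonal complement turns intersection into sum) and needs no regularity bound, no Chinese Remainder, and no direct-sum condition — the argument only needs a sum, not a direct sum. And the reverse inequality $\CR(f)\leq\,$(min GAD size) uses no degree hypothesis at all: it is \cref{lem:gad-to-scheme}, which from each summand builds, by homogenizing the annihilator of its compression, a saturated primary ideal apolar to that summand of length \emph{equal to} the compressed size; intersecting these gives a saturated ideal apolar to $f$ with Hilbert polynomial bounded by the total size. Your worry that the constructed $Z_k$ might have length smaller than $r_k$, ``contradicting minimality,'' is a non-issue: a shorter scheme only strengthens the bound $\CR(f)\leq\deg Z$, and there is no minimality constraint on the scheme, only on the GAD. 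No \cref{lem:jordan-lemma}-style independence argument is needed in this direction.
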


To connect cactus rank to border rank we need and intermediate notion of \emph{smoothable rank} $\SR(f)$.
Smoothable rank is an upper bound on cactus rank, and it coincides with border rank for polynomials of high enough degree.
\begin{theorem}[\cite{BB-wild}]\label{thm:smoothrank}
If $\deg f \geq \bwr(f) - 1$, then $\bwr(f) = \SR(f)$.
\end{theorem}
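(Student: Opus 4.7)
The plan is to prove the two inequalities separately, using the apolarity characterizations of $\bwr$ and $\SR$ together with a careful analysis of how apolar subspaces behave in flat families of $0$-dimensional subschemes of $\bbP V$, where $V = \bbC[\x]_1$. Both ranks concern polynomials $f$ of degree $d$ apolar to a scheme $Z$ of degree $r$, that is, $f \in (I(Z))_d^\perp$ where $I(Z)$ is the homogeneous ideal of $Z$, and the pivotal technical fact is that this apolar subspace of $\bbC[\x]_d$ has dimension exactly $r$ for every $0$-dimensional degree-$r$ scheme $Z$ as soon as $d \geq r - 1$.

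For the direction $\SR(f) \leq \bwr(f)$, I would start from a border Waring rank decomposition $f = \lim_{\eps \to 0} \sum_{i=1}^r \ell_i(\eps)^d$ achieving $r = \bwr(f)$, with the forms $\ell_i(\eps) \in \bbC(\eps)[\x]_1$ chosen pairwise non-proportional as rational families (by combining any globally proportional summands). The tuple $Z_\eps = \{[\ell_1(\eps)], \dots, [\ell_r(\eps)]\} \subset \bbP V$ defines a morphism from a punctured disk to the Hilbert scheme of length-$r$ subschemes of $\bbP V$, which extends to $\eps = 0$ by properness; the limit $Z_0$ is smoothable of degree $r$ by construction. The partial sums $f_\eps := \sum_i \ell_i(\eps)^d$ lie in the apolar subspaces $(I(Z_\eps))_d^\perp$, which have constant dimension $r$ along the family under the hypothesis $d \geq r - 1$, and therefore form the fibers of a rank-$r$ subbundle of the trivial bundle with fiber $\bbC[\x]_d$. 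The limit $f = \lim_{\eps \to 0} f_\eps$ accordingly lies in the limit fiber $(I(Z_0))_d^\perp$, witnessing $\SR(f) \leq r$.

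For the converse $\bwr(f) \leq \SR(f)$, I would take a smoothable scheme $Z_0$ of degree $r = \SR(f)$ with $f \in (I(Z_0))_d^\perp$ and a smoothing $Z_\eps$ of $Z_0$ by reduced $r$-point configurations. Using the same constant-rank property of the apolar subbundle, one lifts $f$ to a continuous family of sections $f_\eps \in (I(Z_\eps))_d^\perp$ with $f_0 = f$; each $f_\eps$ is apolar to a reduced union of $r$ distinct points, so $\WR(f_\eps) \leq r$, and the approximation $f = \lim_{\eps \to 0} f_\eps$ exhibits $\bwr(f) \leq r$.

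The main technical obstacle in both directions is the uniform dimension statement $\dim (I(Z))_d^\perp = r$ for every $0$-dimensional degree-$r$ subscheme $Z$ of $\bbP V$ whenever $d \geq r - 1$. For reduced configurations this is the classical independence-of-conditions lemma: given $r$ distinct points, for any chosen one a product of $r - 1$ linear forms vanishing on the others but not on it shows that the evaluation map $\bbC[\x]_d \to \bbC^r$ is surjective. Transferring this to non-reduced smoothable schemes requires upper semicontinuity of the Hilbert function along a smoothing family together with the universal bound $\dim (I(Z))_d^\perp \leq r$ coming from $\deg Z = r$: the generic equality at reduced fibers and the uniform upper bound force equality at the special fiber $Z_0$. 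This uniform dimension bound is exactly what the hypothesis $\deg f \geq \bwr(f) - 1$ is designed to supply.
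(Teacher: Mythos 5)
Your plan---realizing the apolar spaces $(I(Z_\eps))_d^\perp$ as a subbundle over a smoothing family in the Hilbert scheme and transferring $f$ across the limit $\eps\to 0$---is sound in outline and parallels the paper's argument via the multigraded Hilbert scheme and saturation, since both ultimately hinge on showing $\dim (I(Z_0))_d^\perp = r$ for the limit scheme $Z_0$. The problem is your justification of that equality. You invoke ``upper semicontinuity of the Hilbert function'' together with the universal bound $\dim (I(Z))_d^\perp \leq r$, but the Hilbert function $h_{I(Z)}(d)=\dim (I(Z))_d^\perp$ is in fact \emph{lower} semicontinuous in flat families: it is the rank of the restriction map $\bbC[\x]_d \to H^0(\mathcal{O}_Z(d))\cong\bbC^r$, and the rank of a flatly varying family of linear maps can only drop at special fibers (equivalently, $h^0(\mathcal{I}_Z(d))$ is upper semicontinuous, so $h_{I(Z)}(d)$ is lower semicontinuous). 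Hence semicontinuity yields only $\dim(I(Z_0))_d^\perp\leq r$, which is the same direction as your ``universal bound'' and does not close the argument. That the dimension can genuinely drop for small $d$ is seen with three points in $\bbP^2$ at $d=1$: a non-collinear configuration has $h_I(1)=3$, and this drops to $2$ when the points degenerate to a collinear configuration.

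What is actually true---and what the hypothesis $d\geq r-1$ buys---is a regularity theorem for saturated ideals of Krull dimension one: their Hilbert function is nondecreasing and equals $r$ for all $p\geq r-1$ (the paper's Theorem~\ref{thm:dim0-regularity}, from Iarrobino--Kanev). This is a substantive algebraic input, not a formal consequence of semicontinuity, and it is precisely what the paper's proof uses to show that the multigraded limit ideal and its saturation agree in degree~$d$. Once you replace your semicontinuity claim with a citation of this regularity theorem, the remainder of your scheme (constant-rank subbundle, continuity of the induced map to the Grassmannian, lifting $f$ along the family) goes through and gives a clean geometric repackaging of the paper's proof. Finally, the direction $\bwr(f)\leq\SR(f)$ holds unconditionally---it is part of Theorem~\ref{thm:bbm-ranks}, by containment of the relevant classes of ideals---so the degree hypothesis is not needed there.
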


The goal of this section is to review the basic notions of apolarity theory, define cactus rank, smoothable rank and size of a generalized additive decomposition, and explain the ideas behind the proof of~\cref{thm:cactusrank} stated above.

\subsection{Some notation}
Let us fix the notation.
Let $S = \bbC[x_1, \dots, x_n]$ be the algebra of polynomials and $T = \bbC[\partial_1, \dots, \partial_n]$ be the algebra of polynomial differential operators with constant coefficients (referred to as \emph{diffoperators} in what follows), which acts on $S$ in the standard way.

Denote by $V$ the space of linear forms $S_1$.
We identify $T_1$ with the dual space $V^*$.
More generally, the action of $T$ on $S$ gives rise to a nondegenerate pairing between the homogeneous parts $S_d$ and $T_d$ for every $d$.
We use orthogonality with respect to this pairing,
that is, for a subset $F \subset S_d$ we denote $F^{\perp} = \{ \alpha \in T_d \mid \alpha \cdot f = 0 \text{ for all $f \in F$}\}$, and vice versa, for a subset $D \subset T_d$ we let $D^{\perp} = \{ f \in S_d \mid \alpha \cdot f = 0 \text{ for all $\alpha \in D$}\}$

\subsection{Projective geometry}

The algebra $T$ is isomorphic to $\bbC[V]$, the algebra of polynomials in the coefficients of linear forms.
The isomorphism maps a homogeneous element $\alpha \in T_d$ to $\bar{\alpha} \in \bbC[V]_d$ defined as $\bar{\alpha}(\ell) = \alpha \cdot \frac{\ell^d}{d!}$.

Recall that a homogeneous ideals in $T \cong \bbC[V]$ are in correspondence with subsets of the projective space $\bbP V$.
More specifically, projective varieties are subsets of $\bbP V$ defined by vanishing of some set of polynomials.
The set of all polynomials vanishing on a projective variety $Z$ is a homogeneous ideal $I$, which is saturated ($\alpha T_1 \subset I \Rightarrow \alpha \in I$) and radical ($\alpha^n \in I \Rightarrow \alpha \in I$).
If we consider an ideal $I$ which is saturated but not radical, it defines a projective scheme, which coincides with the variety defined by $I$ as a topological space, but has additional structure which distinguishes it from this variety.

If $I \subset T$ is a homogeneous ideal, then the function $h_I(p) = \dim (T_p / I_p)$ is called the \emph{Hilbert function} of $I$.
The Hilbert function of a homogeneous ideal $I$ always coincides with some polynomial $H_I(p)$ for $p$ large enough.
This polynomial is called the \emph{Hilbert polynomial} of $I$.
Many topological and geometric properties of a projective variety or a scheme can be deduced from its Hilbert polynomial, in particular, its dimension and degree~\cite[\S I.7]{Hartshorne}.
We are specifically interested in ideals with constant Hilbert polynomials.
These ideals corresponds to schemes of dimension $0$.
This means that a variety with Hilbert polynomial~$r$ is a set of $r$ distinct points in $\bbP V$.
In algebra, ideals with constant Hilbert polynomial $H_I = r$ are referred to as \emph{ideals of Krull dimension 1} (the mismatch with the dimension of a scheme is because in algebra dimension is counted in affine space).
The number $r$ is referred to as the \emph{length} of the ideal or the \emph{degree} of the corresponding $0$-dimensional scheme.

\subsection{Apolarity theory}

The connection between Waring rank and algebraic geometry is provided by the apolarity theory, which has its source in the works of Sylvester and Macaulay.

\begin{definition}
The \emph{apolar ideal} of a polynomial $f \in S$ is an ideal in $T$ defined as $\Ann(f) = \{ \alpha \in T \mid \alpha \cdot f = 0 \}$.
The \emph{apolar algebra} of $f$ is $A(f) = T / \Ann(f)$.
An ideal $I \subset T$ is said to be \emph{apolar} to $f$ if it lies in $\Ann(f)$.
A scheme $Z \subset \bbP V$ is \emph{apolar} to $f$ if its defining ideal is.
\end{definition}
Note that as a vector space, $A(f)$ is isomorphic to the space of all partial derivatives $T \cdot f$ via $(\alpha + \Ann(f)) \mapsto \alpha \cdot f$.

To relate apolarity to Waring rank, we also define an ideal associated with a set of linear forms.
Given $r$ linear forms $\ell_1, \dots, \ell_r$, consider the sequences of subspaces $E_p = \Span(\{\ell_1^{p}, \dots, \ell_r^{p}\}) \subset S_p$ and $I_p = E_p^{\perp} \subset T_p$.
An important fact is that $I = \bigoplus_{p = 0}^{\infty} I_p$ is a homogeneous ideal in $T$.
From the geometric point of view it can be described as the vanishing ideal of the set $Z = \{ [\ell_1], \dots, [\ell_r] \}$ in the projective space $\bbP V$.
Algebraically, the fact that $I$ is a homogeneous ideal follows from the following useful proposition.

\begin{proposition}
A sequence of subspaces $E_p \subset S_p$ satisfies the property $T_1 \cdot E_{p + 1} \subset E_p$ if and only if $I = \bigoplus_{p = 0}^{\infty} E_p^{\perp}$ is a homogeneous ideal.
If this is the case, then $h_I(p) = \dim E_p$.
\end{proposition}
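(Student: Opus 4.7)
The plan is to reduce both directions of the equivalence to a single adjunction identity for the pairing, and then read off the Hilbert function statement from nondegeneracy. Since $T = \bbC[\partial_1,\dots,\partial_n]$ is generated in degree one, a graded subspace $\bigoplus_p I_p \subset T$ is an ideal if and only if $T_1 \cdot I_p \subseteq I_{p+1}$ for every $p \geq 0$. Setting $I_p := E_p^\perp$, the claim becomes: $T_1 \cdot E_p^\perp \subseteq E_{p+1}^\perp$ for all $p$ if and only if $T_1 \cdot E_{p+1} \subseteq E_p$ for all $p$.

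The key step is the adjunction. Because $T$ is commutative and acts on $S$ as a ring homomorphism $T \to \End(S)$, for $\xi \in T_1$, $\alpha \in T_p$ and $f \in S_{p+1}$ we have $(\xi\alpha)\cdot f = \alpha\cdot(\xi\cdot f)$. In terms of the nondegenerate pairing between $S_p$ and $T_p$, this reads
\[
\langle \xi\alpha, f\rangle \;=\; \langle \alpha, \xi\cdot f\rangle.
\]
Consequently, $\xi\alpha \in E_{p+1}^\perp$ for all $\xi \in T_1$ and $\alpha \in E_p^\perp$ if and only if $\xi\cdot f$ annihilates every $\alpha \in E_p^\perp$ for all $\xi \in T_1$, $f \in E_{p+1}$, i.e.\ $\xi \cdot f \in (E_p^\perp)^\perp$. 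Nondegeneracy of the pairing in each degree gives $(E_p^\perp)^\perp = E_p$, so this last condition is precisely $T_1 \cdot E_{p+1} \subseteq E_p$. This proves the equivalence in both directions at once.

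For the Hilbert function, once $I$ is a homogeneous ideal, nondegeneracy again gives $\dim E_p^\perp = \dim T_p - \dim E_p$, hence
\[
h_I(p) \;=\; \dim(T_p/I_p) \;=\; \dim T_p - \dim E_p^\perp \;=\; \dim E_p.
\]
The only subtlety is the adjunction identity; everything else is bookkeeping with the degree-wise nondegenerate pairing, so I do not expect a real obstacle.
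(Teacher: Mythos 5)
Your proof is correct and takes essentially the same route as the paper's: both hinge on the adjunction identity $(\xi\alpha)\cdot f = \alpha\cdot(\xi\cdot f)$ combined with the degree-wise nondegeneracy of the pairing (giving $(E_p^\perp)^\perp = E_p$) to translate the ideal condition $T_1\cdot I_p \subseteq I_{p+1}$ into $T_1\cdot E_{p+1}\subseteq E_p$, and both compute $h_I(p)=\dim T_p - \dim I_p = \dim E_p$ by the same dimension count. Your writeup is merely a slightly more spelled-out version of the paper's argument; there is no substantive difference.
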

\begin{proof}
Let $I_p = E_p^{\perp}$.
The fact that $I$ is a homogeneous ideal can be written as $I_{p + 1} \supset T_1 \cdot I_p$, which is equivalent to $T_1 \cdot E_{p + 1} \subset E_p$, as both of these statements reduce to
\[
(\alpha \partial) \cdot f = \alpha \cdot (\partial f) = 0 \text{ for all $\alpha \in I_p, \partial \in T_1, f \in E_{p + 1}$.}
\]
For the Hilbert function, note that $\dim (T_p / I_p) = \dim T_p - \dim I_p = \dim I_p^{\perp} = \dim E_p$.
\end{proof}

\begin{lemma}[Apolarity lemma]
$f \in S_d$ is apolar to a homogeneous ideal $I = \bigoplus_{p = 0}^{\infty} E_p^{\perp}$ if and only if $f \in E_d$.
\end{lemma}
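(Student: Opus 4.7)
The plan is to unpack both directions of the biconditional using homogeneity of $f$ and non-degeneracy of the pairing $S_p \times T_p \to \bbC$, combined with the fact that $I$ is a homogeneous ideal. Both directions follow by carefully tracking the top-degree piece $I_d$ together with the ideal property to handle lower-degree pieces $I_p$ with $p < d$.

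For the forward direction, I would assume $f \in E_d$ and check that $\alpha \cdot f = 0$ for every $\alpha \in I$. Since $I = \bigoplus_p I_p$, it suffices to verify this for homogeneous $\alpha \in I_p$. If $p > d$ the vanishing is immediate for degree reasons. If $p = d$ it follows directly from $I_d = E_d^\perp$ and $f \in E_d$. The case $p < d$ is where the ideal structure enters: $\alpha \cdot f$ lies in $S_{d-p}$, and by non-degeneracy of the pairing $S_{d-p} \times T_{d-p} \to \bbC$ it vanishes iff $\beta \cdot (\alpha \cdot f) = 0$ for every $\beta \in T_{d-p}$. Rewriting as $(\beta\alpha) \cdot f = 0$ and noting that $\beta\alpha \in I_d$ because $I$ is an ideal, we conclude using $f \in E_d = I_d^\perp$.

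The converse is almost immediate: if $I \subset \Ann(f)$, then in particular $\alpha \cdot f = 0$ for every $\alpha \in I_d$, which is precisely the statement that $f \in I_d^\perp = (E_d^\perp)^\perp = E_d$, where the last equality uses that the pairing $S_d \times T_d \to \bbC$ is non-degenerate on finite-dimensional spaces so $(E_d^\perp)^\perp = E_d$.

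I do not anticipate any real obstacle. Everything reduces to the definitions together with the two structural facts already established in the excerpt: the homogeneity of $I$ in the form $T_1 \cdot I_p \subset I_{p+1}$ (or equivalently $T_1 \cdot E_{p+1} \subset E_p$), and the non-degenerate duality between $S_p$ and $T_p$ induced by the differential action. The only mildly subtle step is the $p < d$ case, where the ideal property is used to transfer orthogonality in top degree to orthogonality in intermediate degrees.
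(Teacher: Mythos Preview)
Your proposal is correct and follows essentially the same approach as the paper: both directions use non-degeneracy of the $S_p$--$T_p$ pairing together with the ideal property $T_{d-p}\cdot I_p \subset I_d$ to reduce everything to degree $d$. The only cosmetic difference is that the paper handles the $p<d$ case by iterating the step $T_1\alpha \subset \Ann(f) \Rightarrow \alpha \in \Ann(f)$ one degree at a time, whereas you pass directly to $\beta \in T_{d-p}$; these are the same argument.
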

\begin{proof}
If $I$ is apolar to $f$, then $I_d \subset \Ann(f)_d$ and therefore $E_d \supset (\Ann(f)_d)^{\perp} = f^{\perp\perp}\ni f$.

For the other direction, let $f \in E_d$.
Note that $\Ann(f)_p = T_p$ for $p > d$, so we only need to check $I_p \subset \Ann(f)$ for $p \leq d$.
Note that if we have $\alpha \cdot f \neq 0$  for $\alpha \in T_p$ with $p < d$, then there exists $\partial \in T_1$ such that $\partial \alpha \cdot f = \partial \cdot (\alpha \cdot f) \neq 0$.
This can be restated as $T_1 \alpha \subset \Ann(f) \Rightarrow \alpha \in \Ann(f)$ for all $\alpha \in T_p$ with $p < d$.

For $p \leq d$ we have $\alpha \in I_p \Rightarrow T_1^{d - p} \alpha \subset I_d = E_d^{\perp} \Rightarrow T_1^{d - p} \alpha \cdot f = 0 \Rightarrow \alpha \in \Ann(f)$, which proves $I_p \subset \Ann(f)$.
\end{proof}

\begin{corollary}[Apolarity for Waring rank]
$f \in S_d$ has $\WR(f) \leq r$ if and only if $f$ is apolar to the vanishing ideal of $r$ points in~$\bbP V$.
\end{corollary}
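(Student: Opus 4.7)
The plan is to deduce this corollary directly from the Apolarity Lemma just above, once one identifies the vanishing ideal of $r$ points in $\bbP V$ with an ideal of the form $\bigoplus_p E_p^\perp$ appearing in that lemma.

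First, I would verify the explicit description of the vanishing ideal. Let $Z = \{[\ell_1], \ldots, [\ell_r]\} \subset \bbP V$ be a set of $r$ distinct points and set $E_p = \Span\{\ell_1^p, \ldots, \ell_r^p\}$. Via the isomorphism $T_p \cong \bbC[V]_p$ sending $\alpha \mapsto \bar{\alpha}$ with $\bar{\alpha}(\ell) = \alpha \cdot \ell^p/p!$, a diffoperator $\alpha \in T_p$ vanishes at $[\ell_i]$ if and only if $\alpha \cdot \ell_i^p = 0$. Hence its vanishing on all of $Z$ is equivalent to $\alpha \in (\Span\{\ell_1^p, \ldots, \ell_r^p\})^\perp = E_p^\perp$. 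The preceding Proposition then guarantees that $I_Z = \bigoplus_{p \geq 0} E_p^\perp$ is indeed a homogeneous ideal, namely the vanishing ideal of $Z$.

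With this in place, the Apolarity Lemma says that $f \in S_d$ is apolar to $I_Z$ if and only if $f \in E_d = \Span\{\ell_1^d, \ldots, \ell_r^d\}$. The corollary then reduces to the statement that $f$ lies in the span of $r$ $d$-th powers of linear forms if and only if $\WR(f) \leq r$. For one direction, if $f = c_1 \ell_1^d + \cdots + c_r \ell_r^d$, then over $\bbC$ I may choose $d$-th roots $\gamma_i$ with $\gamma_i^d = c_i$ and rewrite $f = (\gamma_1 \ell_1)^d + \cdots + (\gamma_r \ell_r)^d$, yielding $\WR(f) \leq r$. Conversely, a Waring decomposition is already a special linear combination.

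Finally, I would address the minor multiplicity subtlety in the reverse direction: given $f = \sum_{i=1}^r \ell_i^d$ in which some of the points $[\ell_i]$ may coincide, the underlying set has fewer than $r$ elements, which I simply pad with arbitrary additional distinct points of $\bbP V$ until it has cardinality exactly $r$. Enlarging $Z$ only enlarges each $E_p$, so $f$ still lies in $E_d$ and hence remains apolar to the vanishing ideal of the enlarged $r$-point set. I do not anticipate any serious obstacle: once the Apolarity Lemma and the explicit form of the ideal of a finite point set are both in hand, the whole argument is essentially a translation between language of linear combinations of $d$-th powers and that of apolar zero-dimensional reduced subschemes.
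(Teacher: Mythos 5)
Your proof is correct and follows essentially the same route as the paper's: identify the vanishing ideal of the point set $Z = \{[\ell_1],\dots,[\ell_r]\}$ with $\bigoplus_p E_p^\perp$ where $E_p = \Span\{\ell_1^p,\dots,\ell_r^p\}$, invoke the Apolarity Lemma to reduce apolarity to membership $f \in E_d$, and translate that into a Waring decomposition. Your additional care in absorbing the scalar coefficients as $d$-th roots over $\bbC$ and in padding with extra distinct points when some of the $[\ell_i]$ coincide fills in minor details the paper's one-line proof passes over silently, but it is not a genuinely different argument.
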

\begin{proof}
By definition, $\WR(f) \leq r$ if and only if there exists linear forms $\ell_1, \dots, \ell_r$ such that $f \in \Span(\{\ell_1^d, \dots, \ell_r^d\})$.
As was discussed before, the vanishing ideal $I$ of the set $Z = \{[\ell_1], \dots, [\ell_r]\}$ has $E_d = I_d^{\perp} = \Span(\{\ell_1^d, \dots, \ell_r^d\})$, and by apolarity lemma $f \in E_d$ is equivalent to $f$ being apolar to $I$.
\end{proof}

\subsection{Families of subspaces, ideals and their limits}

Before considering border Waring rank, we need to define limits of families of subspaces and families of ideals.

Let $W$ be a vector space. We consider two types of families of subspaces in $W$.
First is a family of subspaces of the form $E(\eps) = \Span(\{w_1(\eps), \dots, w_r(\eps)\})$ where $w_k(\eps)$ are families of vectors in $W$ with coordinates given by rational functions of $\eps$.
We write $w_k \in W(\eps)$ in this case.
The second type is a family $E(\eps) = \{ w \mid y_1(\eps; w) = \dots = y_q(\eps; w) = 0 \}$ of vector spaces defined by linear forms $y_1, \dots, y_q \in W^*(\eps)$ which again depend rationally on the parameter $\eps$.

In both cases we define the limit $\hat{E} = \lim_{\eps \to 0} E(\eps)$ as the subspace containing the limits of all families $w \in W(\eps)$ such that $w(\eps) \in E(\eps)$ for $\eps \neq 0$ (whenever $E(\eps)$ and $w(\eps)$ are defined).

For $E(\eps) = \Span(\{w_1(\eps), \dots, w_r(\eps) \})$ from semicontinuity of rank we have that the maximal possible value of $\dim E(\eps)$ is attained on an open set of values of $\eps$.
The situation is opposite for the family of the second type $E(\eps) = \Span(\{y_1(\eps), \dots, y_q(\eps) \})^{\perp}$.
In both cases the dimension of $\hat{E}$ cannot be higher then the generic dimension.
Indeed, if $\hat{E}$ contains linearly indeoendent vectors $v_1, \dots, v_m$, then there are families $v_1(\eps), \dots, v_m(\eps)$ which have them as limits, and these families will be linearly independent for an open subset of values of $\eps$.
Considering two families $E(\eps) \subset W$ and $E(\eps)^{\perp} \subset W^*$ together, we see that $\dim \hat{E}$ is actually equal to the generic dimension of $E(\eps)$ (maximal dimension for the families of the first type, and minimal --- for the families of the second type).

Alternatively, we may associate with a family of subspaces a family of points in the \emph{Grassmannian} -- the space of all $k$-dimensional subspaces in $W$.
The Grassmannian can be defined as the projective variety in $\bbP \Lambda^k W$ consisting of all points of the form $[w_1 \wedge \dots \wedge w_k]$, which represent $k$-dimensional subspaces spanned by $w_1, \dots, w_k$ respectively.
If $E(\eps)$ is a family with generic dimension $k$ and $v_1(\eps), \dots, v_k(\eps) \in E(\eps)$ are linearly independent for generic values of $\eps$, then we can define a rational map $\eps \mapsto [v_1(\eps) \wedge \dots \wedge v_k(\eps)]$ and take the limit of this map in the Grassmannian.

Suppose $I(\eps)$ is a family of homogeneous ideals in $T$, that is, $I(\eps) = \bigoplus_{p = 0}^{\infty} I_p(\eps)$ for the families of subspaces $I_p(\eps) \subset T_p$ such that $I_{p + 1}(\eps) \supset I_p(\eps) \cdot T_1$.
By continuity of multiplication for the limit subspaces $\hat{I}_p = \lim_{\eps \to 0} I_p(\eps)$ we still have $\hat{I}_{p + 1} \supset \hat{I}_p \cdot T_1$.
Hence $\hat{I}$ is again a homogeneous ideal in $T$.
This notion of limit of ideals corresponds to taking limits in the \emph{multigraded Hilbert scheme}, which is a space of ideals with given Hilbert function, see~\cite{haiman2004multigraded}.
We refer to this limit as the \emph{multigraded limit of a family of ideals}.
The problem is that the limit in the multigraded Hilbert scheme can be non-saturated and thus not correspond to a geometric object in projective space.

For example, consider three families of points $(1:0:1), (-1:0:1), (0:\eps:1)$ in $\bbP^2$.
The family of vanishing ideals is $\left<x_1 x_2, x_2 (x_2 - \eps x_3), \eps(x_1^2 - x_3^2) + x_2 x_3, x_1^3 - x_1x_3^2 \right>$.
Taking $\eps \to 0$ we obtain the ideal $\left<x_1 x_2, x_2^2, x_2 x_3, x_1^3 - x_1x_3^2 \right>$, which is not saturated, since it contains $x_1 x_2, x_2^2, x_2 x_3$ but not $x_2$.
Taking the saturation, we obtain $\left<x_2, x_1^3 - x_1x_3^2\right>$ which corresponds to three points $(1:0:1), (-1:0:1), (0:0:1)$ as expected.

We can take saturation after obtaining the limit ideal.
This notion of limit corresponds to limits in the \emph{Hilbert scheme}, which is the space of ideals with the fixed Hilbert polynomial.
It was defined by Grothendieck~\cite{grothendieck1962techniques}, see also~\cite[Appx.C]{Iarrobino-Kanev}.

\subsection{Border apolarity}

We will now describe the basic idea of the apolarity theory for border Waring rank, which was developed by Buczy\'{n}ska and Buczy\'{n}ski in~\cite{BB-apolarity}.

Let $f = \lim_{\eps \to 0} \sum_{k = 1}^{r} \ell_k^d$ be a border Waring rank decomposition. Consider the families of subspaces $E_p(\eps) = \Span(\{\ell_1(\eps)^p, \dots, \ell_r(\eps)^p \}) \subset S_p$ and the family of homogeneous ideals $I(\eps) = \bigoplus_{p = 0}^{\infty} E_p(\eps)^{\perp}$ in $T$.

As $\eps \to 0$, we obtain a sequence of subspaces $\hat{E}_p = \lim_{\eps \to 0} E_p(\eps) \subset S_p$ and a homogeneous ideal $\hat{I} = \lim_{\eps \to 0} I(\eps)$ (taking the limit in the multigraded Hilbert scheme).
Let $\bar{f} = \sum_{k = 1}^{r} \ell_k^d \in S_d(\eps)$, so that $f = \lim_{\eps \to 0} \bar{f}(\eps)$.
By the Apolarity Lemma the ideal $I(\eps)$ is apolar to $\bar{f}(\eps)$ for $\eps \neq 0$, which means that $\alpha(\eps) \cdot \bar{f}(\eps) = 0$ for every $\alpha(\eps) \in I(\eps)$.
Since the action of $T$ on $S$ is continuous, we obtain from this $(\lim_{\eps \to 0} \alpha(\eps)) \cdot f = 0$, if the limit exists.
Thus $\hat{I}$ is apolar to $f$.

On the other hand, suppose that $f \in S_d$ is apolar to an ideal $\hat{I}$ which is a \emph{limit of ideals of $r$ points}, that is, there exists a family $I(\eps)$ such that $I(\eps)$ is the vanishing ideal of a set of $r$ points in $\bbP V$.
Define $E_d(\eps) = I(\eps)_d^{\perp} \subset S_d$.
For $\eps \neq 0$ the subspace $E_d(\eps)$ is a span of powers of $r$ linear forms, so it consists of polynomials with Waring rank at most $r$.
Since $f$ is orthogonal to $\hat{I}_d$, it lies in the limit $\lim_{\eps \to 0} E_d(\eps)$ and thus has border Waring rank at most $r$.

\begin{theorem}[{Border apolarity, \cite{BB-apolarity}}]
$f \in S_d$ has $\bwr(f) \leq r$ if and only if $f$ is apolar to an ideal $\hat{I}$ which is a limit of ideals of $r$ points. 
\end{theorem}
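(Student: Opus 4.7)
The plan is to establish both implications by exploiting the continuity of the apolarity pairing between $S_d$ and $T_d$, combined with the Apolarity Lemma applied to a one-parameter family of ideals of $r$ points.

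For the forward direction, I would start from a border Waring rank decomposition $f = \lim_{\eps\to 0}\sum_{k=1}^r \ell_k(\eps)^d$ with $\ell_k \in V(\eps)$. After a generic perturbation in a second parameter if necessary, I may assume that for an open dense set of $\eps$ the points $[\ell_k(\eps)] \in \bbP V$ are pairwise distinct. Define $E_p(\eps) = \Span\{\ell_1(\eps)^p, \dots, \ell_r(\eps)^p\} \subset S_p$ and $I_p(\eps) = E_p(\eps)^\perp \subset T_p$. By the Apolarity Lemma, $I(\eps) = \bigoplus_p I_p(\eps)$ is the vanishing ideal of the $r$ points $[\ell_k(\eps)]$ for generic $\eps$. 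Taking the multigraded limit gives subspaces $\hat{I}_p = \lim_{\eps\to 0} I_p(\eps)$ in the Grassmannian, which assemble into a homogeneous ideal $\hat{I}$ (the containment $\hat{I}_p \cdot T_1 \subset \hat{I}_{p+1}$ is inherited in the limit from continuity of multiplication). Since $I(\eps) \subset \Ann(\sum_k \ell_k(\eps)^d)$ for generic $\eps$ and the action $T \times S \to S$ is continuous in each graded piece, any $\alpha \in \hat{I}_p$ is a limit $\lim_{\eps\to 0}\alpha(\eps)$ with $\alpha(\eps) \in I_p(\eps)$, hence $\alpha \cdot f = \lim_{\eps\to 0}\alpha(\eps)\cdot\sum_k \ell_k(\eps)^d = 0$, showing that $\hat{I}$ is apolar to $f$.

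For the backward direction, suppose $\hat{I} = \lim_{\eps\to 0} I(\eps)$ (in the multigraded sense) is apolar to $f$, where each $I(\eps)$ is the vanishing ideal of $r$ points $[\ell_1(\eps)],\dots,[\ell_r(\eps)] \in \bbP V$. Set $E_d(\eps) = I(\eps)_d^\perp = \Span\{\ell_1(\eps)^d,\dots,\ell_r(\eps)^d\}$, so that for generic $\eps$ every element of $E_d(\eps)$ lies in $W_{d,r}$. The key step is duality in the Grassmannian: since $\dim E_d(\eps) + \dim I_d(\eps) = \dim S_d$ is constant and the orthogonality map between subspaces is continuous, the limit $\hat{E}_d := \lim_{\eps \to 0} E_d(\eps)$ equals $\hat{I}_d^\perp$. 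Because $f \perp \hat{I}_d$, we have $f \in \hat{E}_d$, so there exists a family $\bar{f}(\eps) \in E_d(\eps)$ with $f = \lim_{\eps\to 0}\bar{f}(\eps)$. Writing each $\bar{f}(\eps)$ as a $\bbC(\eps)$-linear combination of $\ell_1(\eps)^d,\dots,\ell_r(\eps)^d$ and absorbing the scalars into the linear forms yields the desired border decomposition of $f$ with $r$ summands.

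The main obstacle is not either implication individually but the careful handling of limits in the multigraded Hilbert scheme: verifying that $\hat{I}$ is a genuine homogeneous ideal, that the dimensions of $E_p(\eps)$ and $I_p(\eps)$ stabilize generically (so Grassmannian limits make sense), and that the duality $\hat{E}_d = \hat{I}_d^\perp$ is preserved under taking limits. A minor secondary point is justifying the genericity assumption that the $[\ell_k(\eps)]$ are distinct; this is a Zariski-open condition on the parameters of the border decomposition, so a small perturbation in an auxiliary parameter produces a family of decompositions whose $\eps\to 0$ limit still equals $f$ while making the points distinct for generic parameter values.
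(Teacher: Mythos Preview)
Your proposal is correct and follows essentially the same route as the paper: in the forward direction you form the family $I(\eps)=\bigoplus_p E_p(\eps)^\perp$ from the border decomposition and pass to the multigraded limit using continuity of the $T$-action, and in the backward direction you use the duality $\hat{E}_d=\hat{I}_d^\perp$ to place $f$ in the limit of spans of $d$-th powers. Your extra care about distinctness of the points $[\ell_k(\eps)]$ and about why the Grassmannian limit commutes with taking orthogonals is a welcome elaboration of details the paper leaves implicit, but the overall argument is the same.
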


\subsection{Various ranks via apolarity.}

The apolarity lemma provides a template for defining different notions of rank for homogeneous polynomials by varying the class of ideals apolar to $f$.

\begin{definition}
Let $\mathcal{C}$ be a class of ideals of Krull dimension $1$.
If $f \in S_d$ is a homogeneous polynomial, we define the $\mathcal{C}$-rank of $f$ as the minimal $r$ such that there exists an ideal $I \subset \mathcal{C}$ apolar to $f$ with length $r$.
\end{definition}

As we have seen, Waring rank and border Waring rank are special cases of this definition corresponding to ideals of points and their limits.

We are now ready to define cactus rank and smoothable rank.
The cactus rank $\CR(f)$ is obtained from the template definition above if we consider the class of all saturated ideals with constant Hilbert polynomial, that is, ideals of $0$-dimensional schemes.
The smoothable rank $\SR(f)$ corresponds to saturated limits of ideals of points.
In addition, the border cactus rank $\bCR(f)$ is defined by considering limits of saturated ideals.

\medskip

\begin{tabular}{|l|l|l|}
\hline
Class of ideals  & Rank & Notation \\
\hline
Ideals of points (radical saturated ideals)  & Waring rank & $\WR(f)$ \\
Limits of ideals of points & Border Waring rank 
& $\bwr(f)$ \\
Smoothable ideals (saturated limits of ideals of points) & Smoothable rank & $\SR(f)$ \\
Saturated ideals & Cactus rank & $\CR(f)$ \\
Saturable ideals (limits of saturated ideals) & Border cactus rank & $\bCR(f)$ \\
\hline
\end{tabular}

\medskip

The unified definition allows us to determine relations between these different ranks.

\begin{theorem}[{\cite{BBM-ranks}}]\label{thm:bbm-ranks}
The following inequalities hold: $\bCR(f) \leq \CR(f) \leq \SR(f) \leq \WR(f)$ and $\bCR(f) \leq \bwr(f) \leq \SR(f) \leq \WR(f)$.
\end{theorem}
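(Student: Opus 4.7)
The plan is to observe that all five notions of rank defined via apolarity follow a single template: the $\mathcal{C}$-rank of $f$ is the minimal length of an ideal of Krull dimension $1$ in class $\mathcal{C}$ that is apolar to $f$. Consequently, whenever $\mathcal{C}_1 \subseteq \mathcal{C}_2$ (as classes of such ideals, with lengths counted the same way), every witness ideal for $\mathcal{C}_1$-rank also witnesses $\mathcal{C}_2$-rank, giving $\mathcal{C}_2\text{-rank}(f) \leq \mathcal{C}_1\text{-rank}(f)$. Thus the six claimed inequalities reduce to checking inclusions among the five classes listed in the table.

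Four of these inclusions are essentially immediate from the definitions recorded in the table: an ideal of points equals the constant limit of itself, so it is both smoothable and a multigraded limit of ideals of points, yielding $\SR(f)\leq \WR(f)$ and $\bwr(f)\leq \WR(f)$; smoothable ideals are saturated by definition, yielding $\CR(f)\leq \SR(f)$; a saturated ideal is a limit of itself, hence saturable, yielding $\bCR(f)\leq \CR(f)$; and since ideals of points are already saturated, any multigraded limit of them is a limit of saturated ideals, hence saturable, yielding $\bCR(f)\leq \bwr(f)$.

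The only inequality requiring genuine work is $\bwr(f)\leq \SR(f)$, because a smoothable ideal is a \emph{saturated} limit of ideals of points in the Hilbert scheme, while border Waring rank is controlled by the possibly smaller multigraded Hilbert scheme limit. The three-point example in \cref{sec:apolarity} shows that a multigraded limit can be properly contained in its saturation, so these classes are genuinely different. To handle this, I will take a family $I(\eps)$ of ideals of $r$ points whose Hilbert scheme limit $\tilde{I}$ realises $\SR(f) = r$ and is apolar to $f$, and let $I' := \lim_{\eps \to 0} I(\eps)$ be the limit of the same family in the multigraded Hilbert scheme. By the general framework for limits of ideals discussed earlier, $I'$ is a homogeneous ideal and $\tilde{I}$ is its saturation, so $I' \subseteq \tilde{I} \subseteq \Ann(f)$, meaning $I'$ is apolar to $f$. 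Moreover $I'$ is, by construction, a multigraded limit of ideals of points, so the Border Apolarity theorem gives $\bwr(f) \leq \text{length}(I')$.

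The main obstacle is then to argue that $\text{length}(I') = r$, i.e., that the multigraded limit preserves the Hilbert polynomial. This relies on the semicontinuity of Hilbert functions discussed in the section on families of subspaces: for each degree $p$, the dimension $\dim(T_p/I(\eps)_p)$ is constant and equal to $h_{I(\eps)}(p)$ on an open set of $\eps$, and this stable value is attained by $I'$ at the limit. Since the common Hilbert polynomial of the fibers $I(\eps)$ is the constant $r$, the Hilbert polynomial of $I'$ is also $r$, so $I'$ has length $r$ and we conclude $\bwr(f) \leq r = \SR(f)$. Every other step reduces to unpacking the definitions.
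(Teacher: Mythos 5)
Your proposal is correct and follows essentially the same route as the paper's proof: the containments between the five classes of ideals immediately yield all inequalities except $\bwr(f) \leq \SR(f)$, and that one follows from the observation that if the saturation $I^{\sat}$ of a multigraded limit $I$ of ideals of points is apolar to $f$, then so is $I$ itself (since $I \subseteq I^{\sat} \subseteq \Ann(f)$). Your additional care about the length of $I'$ being preserved under multigraded limits is sound but slightly more than needed, since the Border Apolarity theorem as stated is already phrased in terms of the number $r$ of points in the approximating family rather than the length of the limit ideal.
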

\begin{proof}
The inequality $\bwr(f) \leq \SR(f)$ follows from the fact that if the saturation $I^{\sat} \supset I$ is apolar to~$f$, then $I$ is also apolar to $f$.
Other inequalities follow from the containments between corresponding classes of ideals.
\end{proof}

In addition, for high enough degree the border Waring rank coincides with smoothable rank, and border cactus rank coincides with cactus rank. To prove this, we will need the following property of ideals of Krull dimension $1$.

\begin{theorem}[{\cite[Thm.~1.69]{Iarrobino-Kanev}}]\label{thm:dim0-regularity}
If $I$ is a saturated ideal with $H_I = r$, then $h_I$ is nondecreasing and $h_I(p) = r$ for $p \geq r - 1$.
\end{theorem}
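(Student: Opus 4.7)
\medskip

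\textbf{Proof plan.}
The plan is to produce a linear form in $T_1$ that is a non-zerodivisor modulo $I$, use it to establish monotonicity, and then upgrade monotonicity to strict growth until stabilization at $r$.

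First I would choose a linear form $\partial \in T_1$ which does not vanish at any of the (finitely many) points in the support of the $0$-dimensional scheme $Z = V(I) \subset \bbP V$ defined by $I$. Such $\partial$ exists because each support point imposes a single linear condition on $T_1 \cong V^*$ and there are only finitely many of them. Because $I$ is saturated, the irrelevant ideal $T_+$ is not an associated prime of $T/I$, and since $Z$ is $0$-dimensional there are no other ``higher'' associated primes; the associated primes of $T/I$ are precisely the homogeneous primes corresponding to the support points of $Z$. A linear form avoiding all of them is therefore a non-zerodivisor on $T/I$. This immediately gives injectivity of the multiplication map $\mu_p \colon T_p/I_p \to T_{p+1}/I_{p+1}$, $\alpha + I_p \mapsto \partial \alpha + I_{p+1}$, so $h_I(p) \leq h_I(p+1)$ for all $p \geq 0$. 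This proves the monotonicity claim.

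Next I would show that $\mu_p$ being a bijection for some $p$ forces $h_I$ to stabilize from $p$ onwards. The key observation is a one-step lifting: if $\mu_p$ is surjective then $T_{p+1} = \partial \cdot T_p + I_{p+1}$, and then every $\alpha \in T_{p+2}$, written as $\alpha = \sum_i \partial_i \beta_i$ with $\beta_i \in T_{p+1}$, decomposes as $\alpha = \partial \cdot (\sum_i \partial_i \gamma_i) + \sum_i \partial_i \delta_i$ where $\beta_i = \partial \gamma_i + \delta_i$ with $\delta_i \in I_{p+1}$. Since $I$ is an ideal, $\sum_i \partial_i \delta_i \in I_{p+2}$, so $\mu_{p+1}$ is surjective; combined with injectivity from Step~1, $\mu_{p+1}$ is a bijection. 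Iterating, $h_I$ is constant from $p$ onwards.

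Finally I combine these facts. Saturation and properness of $I$ give $h_I(0) = 1$, and $h_I(p)$ equals the Hilbert polynomial $r$ for $p$ sufficiently large. Since $h_I$ is bounded above by its eventual value $r$ (monotonicity plus stabilization), and strictly increases at every step before stabilizing (by the contrapositive of the previous paragraph), starting at $1$ it must reach $r$ after at most $r-1$ strict increments of size at least $1$. Hence $h_I(r-1) = r$, as required.

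The main obstacle I anticipate is justifying the non-zerodivisor property cleanly in the saturated case: one needs to invoke (or prove from scratch) that saturation rules out $T_+$ as an associated prime and that in Krull dimension $1$ there are no other troublesome associated primes. Everything else is bookkeeping with the Euler-style relation $T_{p+2} = T_1 \cdot T_{p+1}$ and elementary counting.
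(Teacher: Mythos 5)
The paper does not prove this theorem; it states it purely as a citation to Iarrobino--Kanev, so there is no in-paper proof to compare against. Your argument is nonetheless a correct and standard one, and it is essentially the classical proof. The core step --- exhibiting a linear non-zerodivisor $\partial$ on $T/I$ --- is justified correctly: the minimal primes of $T/I$ are the ideals of the support points of the $0$-dimensional scheme $Z$, any embedded prime would have to be a homogeneous prime strictly containing one of these (and the only such prime is $T_+$), and $T_+$ is excluded precisely because $I$ is saturated; a linear form $\partial$ avoiding the finitely many hyperplanes $\ell^\perp \subset T_1$ attached to the support points $[\ell]$ is therefore a non-zerodivisor. Injectivity of $\mu_p \colon T_p/I_p \to T_{p+1}/I_{p+1}$ then gives monotonicity. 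Your lifting step is also correct and can be compressed: if $T_{p+1} \subseteq \partial T_p + I_{p+1}$, then $T_{p+2} = T_1 T_{p+1} \subseteq \partial T_{p+1} + T_1 I_{p+1} \subseteq \partial T_{p+1} + I_{p+2}$, so once $\mu_p$ is onto all later $\mu_q$ are too, and combined with injectivity the Hilbert function is constant from $p$ onward. Since $h_I(0) = 1$, $h_I$ is bounded above by its eventual value $r$, and any non-strict step forces immediate stabilization at $r$, one gets $h_I(p) \geq \min(p+1,r)$ and hence $h_I(r-1) = r$. Two very small presentational points: your phrase ``imposes a single linear condition'' is slightly off --- the relevant condition is that $\partial$ must \emph{avoid} the hyperplane $\ell^\perp$ for each support point, and finitely many hyperplanes cannot cover $T_1$ over $\bbC$; and it is worth saying explicitly that the stabilized value must equal $r$ because the Hilbert function agrees with the Hilbert polynomial for large $p$. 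Neither affects correctness.
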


Note that this applying this to the ideal of $r$ points $\{[\ell_1], \dots, [\ell_r]\}$ we get the well-known property that $p$-th powers of pairwise non-proportional linear forms are linearly independent for $p \geq r - 1$.

\begin{theorem}[{\cite[Prop. 2.5]{BuczBucz:SecantVarsHighDegVeroneseReembeddingsCataMatAndGorSchemes}}]
If $\deg f \geq \bwr(f) - 1$, then $\bwr(f) = \SR(f)$. Similarly, if $\deg f \geq \bCR(f) - 1$, then $\bCR(f) = \CR(f)$.
\end{theorem}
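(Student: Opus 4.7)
The plan is to prove both equalities by exhibiting, for each ideal coming from the border decomposition, its saturation as an ideal of the requisite type still apolar to $f$. Since \Cref{thm:bbm-ranks} already gives $\bwr(f) \leq \SR(f)$ and $\bCR(f) \leq \CR(f)$, we only need the reverse inequalities under the degree assumptions.

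For the first equality, assume $\bwr(f) = r$ and $d = \deg f \geq r - 1$. By border apolarity, $f$ is apolar to a homogeneous ideal $\hat{I} = \lim_{\eps \to 0} I(\eps)$ (multigraded limit) where, for $\eps$ in an open dense set, $I(\eps)$ is the vanishing ideal of $r$ pairwise non-proportional points $[\ell_1(\eps)], \dots, [\ell_r(\eps)] \in \bbP V$. The saturation $\hat{I}^{\sat}$ is then a limit in the Hilbert scheme of these ideals of points, hence a smoothable saturated ideal of length $r$. The key step is to show $\hat{I}^{\sat}$ remains apolar to $f$. For any $p \geq r - 1$, the subspace $E_p(\eps) = \Span\{\ell_1(\eps)^p, \dots, \ell_r(\eps)^p\} = I(\eps)_p^{\perp}$ has dimension exactly $r$, because $p$-th powers of pairwise non-proportional linear forms are linearly independent when $p \geq r - 1$. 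Since multigraded limits preserve the generic dimension of each graded piece, $h_{\hat{I}}(p) = \dim \hat{E}_p = r$ for all $p \geq r - 1$. On the other hand, $\hat{I}^{\sat}$ is saturated with Hilbert polynomial $r$, so by \Cref{thm:dim0-regularity}, $h_{\hat{I}^{\sat}}(p) = r$ for $p \geq r - 1$. Combined with $\hat{I} \subseteq \hat{I}^{\sat}$ this forces $\hat{I}_p = \hat{I}^{\sat}_p$ for every $p \geq r - 1$. Since $f$ is apolar to $\hat{I}$ in degree $d \geq r - 1$ and $\Ann(f)_p = T_p$ for $p > d$, we conclude that $\hat{I}^{\sat}$ is apolar to $f$; thus $\SR(f) \leq r$.

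The second equality follows by the same argument, replacing ``limit of ideals of points'' by ``limit of saturated ideals of length $r$'': assume $\bCR(f) = r$ with $d \geq r - 1$, write $\hat{I} = \lim_{\eps \to 0} I(\eps)$ with each $I(\eps)$ saturated of length $r$. By \Cref{thm:dim0-regularity} each $I(\eps)$ already satisfies $h_{I(\eps)}(p) = r$ for $p \geq r - 1$, so the multigraded limit satisfies $h_{\hat{I}}(p) = r$ for $p \geq r - 1$, and comparing with $\hat{I}^{\sat}$ as above gives $\hat{I}_d = \hat{I}^{\sat}_d$. Thus $\hat{I}^{\sat}$ is a saturated ideal of length $r$ apolar to $f$, yielding $\CR(f) \leq r$.

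The heart of the argument is the observation that the saturation modifies $\hat{I}$ only in low degrees, while the apolarity condition on $f$ only sees degree $d$; the degree assumption $d \geq r - 1$ is exactly what is needed to place $d$ in the range where the regularity result \Cref{thm:dim0-regularity} pins down $h_{\hat{I}^{\sat}}(d) = r = h_{\hat{I}}(d)$. The main subtlety lies in properly interpreting the multigraded limit so that $\dim \hat{I}_p$ equals the generic value $\dim T_p - r$ in every graded piece above the regularity threshold; this is precisely the well-definedness of multigraded Hilbert scheme limits reviewed earlier.
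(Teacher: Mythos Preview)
Your proof is correct and follows essentially the same route as the paper's: pass from the border-apolarity ideal $\hat{I}$ to its saturation, compare Hilbert functions in degree $d$ using \Cref{thm:dim0-regularity} to force $\hat{I}_d = \hat{I}^{\sat}_d$, and conclude that the saturated (smoothable, resp.\ arbitrary) ideal is still apolar to $f$. The paper phrases the Hilbert-function comparison slightly differently---it squeezes $r = h_{\hat{I}}(d) \leq h_{\hat{I}^{\sat}}(d) \leq r$ using monotonicity plus $H_{\hat{I}^{\sat}} = r$ rather than invoking the full $h_{\hat{I}^{\sat}}(p) = r$ for $p \geq r-1$---but this is the same content. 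One small remark: your aside about $\Ann(f)_p = T_p$ for $p > d$ is unnecessary and leaves the degrees $p < r-1$ unaddressed; the clean way (which the paper also uses implicitly) is to note that by the Apolarity Lemma, apolarity of a homogeneous ideal to $f$ is detected purely in degree $d$, so $\hat{I}^{\sat}_d = \hat{I}_d$ alone suffices.
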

\begin{proof}
Suppose $\bwr(f) = r$. Let $I$ be the ideal apolar to $f$ which is a limit of ideals of $r$ points, and let $J = I^{\sat}$ be its saturation.
We have $h_I(p) = r$ for $p \geq r - 1$, because this property is true for ideals of points and $I$ is a limit of ideals of points.
In particular, $h_I(d) = r$ for $d = \deg f$.
It is known that $J_p = I_p$ for $p$ large enough, thus $H_{J} = H_I = r$.
We have $r = h_I(d) \leq h_{J}(d) \leq r$, where the first inequality follows from $I \subset I^{\sat} = J$, and the second one from the monotonicity of $h_{J}$.
Therefore $I_d = J_d$, and thus $f \in J_d^{\perp} = I_d^{\perp}$, that is, $f$ is apolar to $J$, the saturated limit of ideals of points.
We obtain $\SR(f) \leq r = \bwr(f)$. Together with the opposite inequality from~\cref{thm:bbm-ranks}, this gives us the required equality.

The proof of the second statement is the same, with ideal of $r$ points replaces by saturated ideals of length $r$.
\end{proof}

\subsection{Size of generalized additive decompositions}
We now describe how to measure the size of a generalized additive decomposition in a way compatible with various notions of rank.

\begin{definition}
The \emph{partial derivative space} of a polynomial $f \in S$ (not necessarily homogeneous) is the vector space $T \cdot f$ spanned by $f$ and all its partial derivatives of all orders.
\end{definition}
\begin{definition}
Let $\ell$ be a linear form and let $\partial \in T_1$ be a partial derivative such that $\partial \ell = 1$.
Denote $T' = \bbC[\ell^{\perp}] \subset T$ and $S' = \bbC[\partial^{\perp}] \subset S$.
We define the \emph{compression} $f_{(\partial, \ell)}$ of a homogeneous polynomial $f \in S_d$ with respect to $\ell$ and $\partial$ as follows.
Write
\[
f = \sum_{i = 0}^d \frac{\ell^i}{i!} f_i.
\]
with $f_i \in S'_{d - i}$.
Then $f_{(\partial, \ell)} = \sum_{i = 0}^d f_i$.
The \emph{compressed size} of $f$ with respect to $(\partial, \ell)$ is the dimension of the partial derivative space $T' \cdot f_{(\partial, \ell)}$
\end{definition}
One can check that the compressed does not depend on the choice of $\partial$ as long as $\partial \ell = 1$; this can be proved by hand, and it is obtained in \cite{Bernardi2018} in a more intrinsic way.

\begin{definition}
The \emph{size} of a generalized additive decomposition
\[
f = \sum_{k = 1}^m \ell_k^{d - r_k + 1} g_k
\]
is defined as the sum of compressed sizes of the summands $\ell_k^{d - r_k + 1} g_k$ with respect to the corresponding linear forms $\ell_k$.
\end{definition}

\subsection{Schemes and generalized additive decompositions}

We will now prove several lemmas which connect generalized additive decompositions and apolar ideals, finishing the proof of~\cref{thm:cactusrank}.

\begin{lemma}[\cite{Bernardi2018}]\label{lem:gad-to-scheme}
Let $\ell$ be a linear form and let $f \in S_d$ be a homogeneous polynomial.
There exists a homogeneous ideal $I$ apolar to $f$ with length equal to the compressed size of $f$ with respect to $\ell$.
\end{lemma}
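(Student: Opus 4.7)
The plan is to construct $I$ as the homogenization of the classical Macaulay inverse-system ideal of the compression $f_{(\partial,\ell)}$. Fix coordinates so that $\ell = x_1$ and $\partial = \partial_1$, so that $T' = \bbC[\partial_2, \ldots, \partial_n]$ and $S' = \bbC[x_2, \ldots, x_n]$; writing $f = \sum_{i=0}^d \frac{x_1^i}{i!} f_i$ with $f_i \in S'_{d-i}$ gives $f_{(\partial,\ell)} = \sum_{i=0}^d f_i \in S'$. Let $K = \Ann_{T'}(f_{(\partial,\ell)}) \subset T'$; this is a (generally non-homogeneous) ideal, and the standard duality $\alpha \mapsto \alpha \cdot f_{(\partial,\ell)}$ yields $\dim_{\bbC}(T'/K) = \dim_{\bbC}(T' \cdot f_{(\partial,\ell)})$, which is the compressed size. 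Define $I \subset T$ as the homogenization of $K$ with respect to $\partial_1$, namely the ideal generated by $\alpha^h = \sum_p \partial_1^{d_\alpha - p}\alpha_p$ for $\alpha = \sum_p \alpha_p \in K$, where $\alpha_p \in T'_p$ and $d_\alpha = \max\{p : \alpha_p \neq 0\}$.

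The main technical step is to verify that $I$ is apolar to $f$. For $\alpha \in K$, expand $\alpha^h \cdot f$ using $\partial_1^{d_\alpha - p}(x_1^i/i!) = x_1^{i-d_\alpha+p}/(i-d_\alpha+p)!$ for $i \geq d_\alpha - p$ and reindex by $j = i - d_\alpha + p$. One finds
\[
\alpha^h \cdot f \;=\; \sum_{j \geq 0} \frac{x_1^j}{j!}\, \sum_p \alpha_p \cdot f_{j + d_\alpha - p},
\]
and the inner sum is exactly the degree-$(d - j - d_\alpha)$ homogeneous component of $\alpha \cdot f_{(\partial,\ell)} = \sum_{p,i}\alpha_p \cdot f_i$. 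This vanishes because $\alpha \in K$, so $\alpha^h \cdot f = 0$. Since $I$ is generated by such $\alpha^h$, every element of $I$ annihilates $f$.

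Finally I would show that the length of $I$ equals $\dim_{\bbC}(T'/K)$. Because $T'/K$ is Artinian local at the origin, $K \subset (\partial_2, \ldots, \partial_n)$, so every $\alpha^h$ lies in $(\partial_2, \ldots, \partial_n) \cdot T$, which is the homogeneous vanishing ideal of $[\ell] = [x_1] \in \bbP V$. Hence the projective scheme cut out by $I$ is concentrated at the single point $[\ell]$. Dehomogenizing $I$ by setting $\partial_1 = 1$ recovers $K$ (this is the standard inverse of homogenization for ideals meeting the chart $\partial_1 \neq 0$), so the local ring of the scheme at $[\ell]$ is $T'/K$, whose $\bbC$-dimension is the compressed size; since the scheme is supported only at $[\ell]$, this local length equals the total length of $I$, i.e.\ the value of its Hilbert polynomial.

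The main obstacle is the index-bookkeeping in the direct computation of $\alpha^h \cdot f$ together with a clean treatment of the dehomogenization step (in particular, verifying that replacing $I$ by its saturation does not change its Hilbert polynomial, which is standard for Krull-dimension-$1$ ideals in $T$). Beyond the Macaulay correspondence and elementary properties of ideal homogenization, no further algebro-geometric machinery is needed.
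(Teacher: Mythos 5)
Your construction of $I$ agrees with the paper's: both take the (non-homogeneous) apolar ideal $K=\Ann_{T'}(f_{(\partial,\ell)})$ and homogenize with respect to $\partial$; one can check that the paper's degreewise description $I_p = h_p(K_{\leq p})$ (where $h_p\colon T'_{\leq p}\to T_p$ is the linear homogenization isomorphism) coincides with the ideal you generate from the $\alpha^h$. Your apolarity verification is also the same direct computation as in the paper, just phrased as a calculation of $\alpha^h\cdot f$ rather than as the equivalence $\alpha'\cdot f_{(\partial,\ell)}=0\iff h_p(\alpha')\cdot f=0$; and the extension from generators to all of $I$ is immediate since $\Ann(f)$ is an ideal.

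Where you diverge is the length computation. The paper exploits that $h_p$ is a linear bijection $T'_{\leq p}\xrightarrow{\sim} T_p$ carrying $K_{\leq p}$ onto $I_p$, so $\dim T_p/I_p = \dim T'_{\leq p}/K_{\leq p}$, and since $K\supset T'_q$ for $q>\deg f_{(\partial,\ell)}$ this stabilizes to $\dim T'/K$ for $p>d$. That is a purely linear-algebraic argument which never leaves the multigraded picture and in particular never needs to know anything about saturation or the geometry of the scheme. You instead go through the geometry: observe $I\subset(\partial_2,\dots,\partial_n)$ so the support is $\{[\ell]\}$, dehomogenize to recover $K$ as the local ideal, and identify length with the dimension of the local ring. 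This is correct, but it quietly invokes two standard facts you should make explicit: (i) the Hilbert polynomial of a Krull-dimension-one homogeneous ideal is unchanged by saturation, and (ii) the length (value of the Hilbert polynomial) of a zero-dimensional subscheme equals the sum of local $\bbC$-dimensions, here just $\dim_{\bbC} T'/K$. You flag (i) but treat both as background; the paper's bijection $h_p$ makes them unnecessary. Your version also needs $T'/K$ to be Artinian local with maximal ideal $(\partial_2,\dots,\partial_n)/K$, which holds because $f_{(\partial,\ell)}$ has bounded degree so $K$ contains $T'_{q}$ for large $q$ (assuming $f\neq 0$), but this is worth a sentence. So: same construction and same apolarity argument, a somewhat heavier but valid route for the length count.
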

\begin{proof}
Let $\partial \in T_1$ be such that $\partial \ell = 1$.
Denote $S' = \bbC[\partial^{\perp}]$ and $T' = \bbC[\ell^{\perp}]$ as above.
The rings $S'$ and $T'$ are in the same dual relationship as $S$ and $T$, and $T$ is generated by $T'$ and~$\partial$.

We start from the ideal $J = \Ann(f_{(\partial, \ell)}) \subset T'$ and homogenize it using $\partial$.
That is, define the homogenization map from $T'_{\leq p} = \bigoplus_{j = 0}^p T'_j$ to $T_p$ sending $\alpha = \sum_{j = 0}^p \alpha_j$ to $\sum_{j = 0}^p \partial^{p - j} \alpha_j$.
The homogeneous part $I_p$ of the ideal $I$ is then the image of $J_{\leq p}$ under this homogenization map.

To show that the ideal $I$ is apolar to $f$,
write $f = \sum_{i = 0}^d \frac{\ell^{i}}{i!} f_{d - i}$ with $f_i \in S'_i$.
Then $f_{(\partial, \ell)} = \sum_{i = 0}^d f_i$.
If $\alpha' = \sum_{j = 0}^p \alpha_j \in T'$ and $\alpha = \sum_{j = 0}^p \partial^{p - j} \alpha_j$ is the $\alpha'$ homogenized, then the statement $\alpha' \cdot f_{(\partial, \ell)} = 0$ is equivalent to $\alpha \cdot f = 0$, since they are both equivalent to $\sum_{j = 0}^p \alpha_j f_{j + e} = 0$ for all $e \geq 0$.
Since $J$ is apolar to $f_{(\partial, \ell)}$, $I$ is apolar to $f$.

Since $f_{(\partial, \ell)}$ has degree at most $d$, $J$ contains $T'_{p}$ for $p > d$.
Hence 
\[T' \cdot f_{(\partial, \ell)} \cong A(f_{(\partial, \ell)}) = T'/J \cong T'_{\leq d}/J_{\leq d}
\] as vector spaces, and for $p > d$ we have $T_p / I_p \cong T'_{\leq p}/J_{\leq p} \cong T'_{\leq d}/J_{\leq d}$.
Therefore, for $p$ large enough $\dim T_p / I_p = \dim T' \cdot f_{(\partial, \ell)}$, and $H_I = \dim T' \cdot f_{(\partial, \ell)}$ as required.
\end{proof}

\begin{lemma}\label{lem:scheme-to-gad}
Suppose $f \in S_d$ is apolar to a saturated primary homogeneous ideal $I$ with length $r$.
If $d \geq 2r - 1$, then $f$ has a one-summand generalized additive decomposition of size at most $r$.
\end{lemma}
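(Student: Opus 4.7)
The plan is to construct the one-summand GAD directly from $I$ and then bound its size through a dehomogenization argument. Since $I$ is primary, its radical $\sqrt{I} = \mathfrak{m}$ is the ideal of the unique support point $[\ell] \in \bbP V$ of $V(I)$. After a linear change of coordinates I may assume $\ell = x_1$ and $\mathfrak{m} = \langle \partial_2, \dots, \partial_n \rangle$; set $\partial = \partial_1$, $T' = \bbC[\partial_2, \dots, \partial_n]$ and $S' = \bbC[x_2, \dots, x_n]$. The Artinian local ring $(T/I)_\mathfrak{m}$ has length $r$, so its maximal ideal satisfies $\mathfrak{m}^r \cdot (T/I)_\mathfrak{m} = 0$; using that $I$ is primary at $\mathfrak{m}$, this descends to $\mathfrak{m}^r \subseteq I$. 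For $d \geq r$, the subspace $I_d$ therefore contains every monomial $\partial^b \in T_d$ with $b_1 \leq d - r$, and its orthogonal complement in $S_d$ is $x_1^{d - r + 1} \cdot S_{r - 1}$. By the Apolarity Lemma, $f \in I_d^{\perp} \subseteq x_1^{d - r + 1} \cdot S_{r - 1}$, so $f = x_1^{d - r + 1} g$ for some $g \in S_{r - 1}$; this is the required one-summand GAD.

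The size of this GAD equals the compressed size of $f$ with respect to $\ell$, namely $\dim T'/\Ann_{T'}(f_{(\partial_1, x_1)})$. To bound it, let $\phi: T \to T'$ be the dehomogenization with $\phi(\partial_1) = 1$ and $\phi(\partial_i) = \partial_i$ for $i \geq 2$. The quotient $T/I$ is a graded $\bbC[\partial_1]$-module on which $\partial_1$ acts as a non-zero-divisor (since $\partial_1 \notin \sqrt{I}$), and its Hilbert polynomial is the constant $r$, so the structure theorem for graded modules over a PID exhibits $T/I$ as free over $\bbC[\partial_1]$ of rank $r$. Consequently $T'/\phi(I) \cong (T/I)/(\partial_1 - 1)(T/I)$ has dimension $r$. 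It therefore suffices to prove $\phi(I) \subseteq \Ann_{T'}(f_{(\partial_1, x_1)})$, since this forces $\dim T'/\Ann_{T'}(f_{(\partial_1, x_1)}) \leq r$.

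To establish the containment I adapt the calculation used in the proof of Lemma~\ref{lem:gad-to-scheme}. Writing a homogeneous $\alpha \in T_p$ as $\alpha = \sum_{j = 0}^p \partial_1^{p - j} \alpha_j$ with $\alpha_j \in T'_j$, both $\alpha \cdot f = 0$ and $\phi(\alpha) \cdot f_{(\partial_1, x_1)} = 0$ unfold into the same family of scalar equations $\sum_j \alpha_j f_{j + e} = 0$, ranging over $e \in \{0, \dots, d - p\}$ in the former case and over $e \geq 0$ in the latter. Since $f = x_1^{d - r + 1} g$ forces $f_j = 0$ for $j > r - 1$, only the equations with $e \leq r - 1$ are nontrivial, and so whenever $p \leq d - r + 1$ the two conditions coincide and apolarity of $\alpha$ implies apolarity of $\phi(\alpha)$ on the compression. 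Invoking the standard fact that the saturated ideal of a $0$-dimensional scheme of length $r$ is generated in degrees at most $r$, the hypothesis $d \geq 2r - 1$ (equivalent to $r \leq d - r + 1$) ensures that every generator of $I$ falls in this range, and hence $\phi(I) \subseteq \Ann_{T'}(f_{(\partial_1, x_1)})$ as desired. The main technical point to nail down is the degree-of-generation bound on $I$, which follows from classical regularity results for $0$-dimensional subschemes of projective space.
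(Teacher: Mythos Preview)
Your argument is correct, but it follows a genuinely different route from the paper's proof. The paper argues directly on the side of the polynomial: it observes that the dehomogenization/homogenization correspondence $\alpha' \leftrightarrow \alpha$ between $T'_{\leq m}$ and $T_m$ (where $\mathfrak{m}^m \subseteq I$) induces an identification of $T' \cdot f_{(\partial,\ell)}$ with $T_m \cdot f$, and then notes that $T_m \cdot f \subseteq I_{d-m}^{\perp} = E_{d-m}$, which has dimension $r$ by the Hilbert-function stabilization of \cref{thm:dim0-regularity}. You instead work on the side of the ideal: you pass to the dehomogenized ideal $\phi(I) \subset T'$, show $\dim_{\bbC} T'/\phi(I) = r$ via freeness of $T/I$ over $\bbC[\partial_1]$, and prove $\phi(I) \subseteq \Ann_{T'}(f_{(\partial,\ell)})$ by checking it on generators.

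Each approach has a cost. The paper's proof uses only the elementary Hilbert-function bound $h_I(p)=r$ for $p\geq r-1$; your proof requires the stronger input that a saturated ideal of a length-$r$ zero-dimensional scheme is generated in degrees $\leq r$, i.e.\ Castelnuovo--Mumford regularity $\leq r$. This is indeed classical (it follows from the vanishing $H^1(\mathcal{I}_Z(r-1))=0$ implied by \cref{thm:dim0-regularity} together with the automatic higher vanishing for zero-dimensional schemes), but it is a heavier hammer than what the paper invokes. In exchange, your argument is more structural: the identification $T'/\phi(I)\cong (T/I)/(\partial_1-1)$ makes transparent that the compressed apolar algebra is bounded by the affine coordinate ring of the scheme, and the freeness over $\bbC[\partial_1]$ is a clean way to read off the length. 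One small quibble: the equations $\sum_j \alpha_j f_{j+e}=0$ are identities in $S'_e$ rather than scalars, but this does not affect the logic.
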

\begin{proof}
If $I$ is an ideal of Krull dimension $1$, then it defines a $0$-dimensional scheme, and if it is primary, then this scheme is supported at one point $[\ell] \in \mathbb{P}V$.
The ideal corresponding to this point is $J = \Ann(\ell) = \left<\ell^{\perp}\right>$ and we have $J^m \subset I \subset J$ for some $m \leq r$.
For the corresponding dual space $E_p$ with $p \geq m$ we have $E_p \subset (J^m)_p^{\perp} = \{\ell^{p - m} g \mid g \in S_{\leq m} \}$.
Since $f \in E_d$, it has a one-summand generalized additive decomposition $f = \ell^{d - m} g$.

Choose $\partial \in T_1$ such that $\partial \ell = 1$.
Write $f = \sum_{i = 0}^m \frac{\ell^{d - i}}{(d - i)!} f_i$ with $f_i \in \bbC[\partial^{\perp}]_i$.
Then $f_{(\partial, \ell)} = \sum_{i = 0}^m f_i$ has degree at most $m$.
For every $\alpha' = \sum_{j = 0}^m \in \bbC[\ell^{\perp}]_{\leq m}$ and the corresponding homogeneous $\alpha = \sum_{j = 0}^m \partial^{m - j} \alpha_j$ we have 
\[
\alpha' \cdot f_{(\partial, \ell)} = \sum_{j \geq i} \alpha_j \cdot f_i
\]
and 
\[
\alpha \cdot f = \sum_{j \leq i} \frac{\ell^{d - m + j - i}}{(d - m + j - i)!} \alpha_j \cdot f_i.
\]
Therefore, there is an isomorphism between $\partial^* f_{(\partial, \ell)}$ and $T^m \cdot f \subset E_{d - m}$.
Note that $d - m \geq r$.
By \cref{thm:dim0-regularity} we have $r = H_I = \dim E_{d - m} \geq \dim \partial^* f_{(\partial, \ell)}$.
\end{proof}

\begin{proof}[Proof of~\Cref{thm:cactusrank}]
If $\CR(f) \leq r$, then there exists a saturated homogeneous ideal $I$ apolar to $f$ with Hilbert polynomial $r$.
This ideal corresponds to a $0$-dimensional scheme $Z$, which consists of several points.
Each point corresponds to a primary ideal in the primary decomposition $I = I^{(1)} \cap \dots \cap I^{(m)}$, and for the Hilbert polynomials it is true that $H_I = H_{I^{(1)}} + \dots + H_{I^{(m)}}$.
Defining $E_d = I_d^{\perp}$ and $E^{(k)}_d = (I^{(k)}_d)^{\perp}$ we have $E_d = E^{(1)}_d + \dots + E^{(m)}_d$.
Therefore, $f = f^{(1)} + \dots + f^{(m)}$ where $f^{(k)} \in E^{(k)}_d$.
By~\cref{lem:scheme-to-gad} each $f^{(k)}$ contributes one summand to the generalized additive decomposition.
The sizes of these summands are bounded by $H_{I^{(k)}}$, and the total size is bounded by $r$.

Conversely, if $f$ has a generalized additive decomposition of size $r$, then from each summand we can construct an ideal using \cref{lem:gad-to-scheme} and take the intersection of these ideals to get an ideal apolar to $f$ with Hilbert function at most $r$.
\end{proof}

\section{Characterizing small border Waring rank}

The results on generalized additive decompositions from~\S\ref{sec:borderrank} can be used to describe the polynomials of border rank $2$ and $3$, reproving the results of Landsberg and Teitler~\cite[Sec.~10]{landsberg2010}.

\begin{theorem}\label{lem:borderwaring2}
  A polynomial $f$ with $\bwr(f) = 2$ must have the form $\ell_1^d + \ell_2^d$ or $\ell_1^{d - 1} \ell_2$ where $\ell_1$ and $\ell_2$ are linear forms.

  In the first case, every border rank decomposition for $f$ has the form
  \[
    f = (\ell_1 + \eps \hat{\ell}_1)^d + (\ell_2 + \eps \hat{\ell}_2)^d
  \]
  for some $\hat{\ell}_1, \hat{\ell}_2 \in \bbC[[\eps]][\x]_1$.
  
  In the second case, every border rank decomposition for $f$ has the form
  \[
    f = \frac{1}{\eps^M} \left(a \ell_1 + \eps \hat{\ell}_1 + \eps^M (\frac{1}{a^{d - 1}d}\ell_2 + \ell_3)\right)^d - \frac{1}{\eps^M} \left(a \ell_1 + \eps \hat{\ell}_1 + \eps^M (\ell_3 + \eps \hat{\ell}_2)\right)^d 
  \]
  for some $a \in \bbC$, $\ell_3 \in \bbC[\x]_1$ and $\hat{\ell}_1, \hat{\ell}_2 \in \bbC[[\eps]][\x]_1$.
\end{theorem}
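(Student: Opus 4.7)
The plan is to reduce to the main Lemma~\ref{lem:borderrank-gad} with $r = 2$ and then read off the two possible shapes of $f$ and, for each, the shape of its border rank decompositions.

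\textbf{Classifying $f$.} Since $d \geq 1 = r - 1$ holds for any nonzero homogeneous polynomial, Lemma~\ref{lem:borderrank-gad} produces a generalized additive decomposition indexed by a partition of $r = 2$. The partition $2 = 1 + 1$ forces $f = \ell_1^d + \ell_2^d$, where $\ell_1$ and $\ell_2$ must be non-proportional (else $\bwr(f) = 1$). The partition $2 = 2$ gives $f = \ell_1^{d - 1} g$ for a linear form $g$, which after renaming is $f = \ell_1^{d-1}\ell_2$. This establishes the dichotomy.

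\textbf{Non-local decompositions.} Given any border rank decomposition $f = \lim_{\eps \to 0}(m_1^d + m_2^d)$, I would follow the proof of Lemma~\ref{lem:borderrank-gad}: partition the indices $\{1, 2\}$ by the equivalence $\lim_{\eps\to 0}[m_1] = \lim_{\eps\to 0}[m_2]$. In the \emph{non-local} case these projective limits differ, so the proof of Lemma~\ref{lem:borderrank-gad} shows that the two lowest-order exponents $q_k$ both vanish, that is, each $m_k$ has a genuine limit $\ell_k \in \bbC[\x]_1$. Expanding $m_k = \ell_k + \eps\hat{\ell}_k$ with $\hat{\ell}_k \in \bbC[[\eps]][\x]_1$ matches the first stated form and shows that $f = \ell_1^d + \ell_2^d$. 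Conversely, one should also note that such decompositions cannot exist when $f = \ell_1^{d - 1}\ell_2$ with $d \geq 3$, because then $f$ would have Waring rank at most $2$, contradicting $\WR(\ell_1^{d - 1}\ell_2) = d$.

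\textbf{Local decompositions.} If $\lim[m_1] = \lim[m_2] = [\ell_1]$, then by Lemma~\ref{lem:local-limit} we have $f = \ell_1^{d - 1} g$, placing $f$ in the second case. To extract the explicit formula, apply the standard-form reduction of Lemma~2.7 so that $m_1 = \eps^{-M}\bigl(a\ell_1 + \eps\hat{\ell}_1\bigr)$ for some $a \in \bbC^\times$, $M \in \bbZ_{\geq 0}$ and $\hat{\ell}_1 \in \bbC[[\eps]][\x]_1$. Because the sum of the leading-order $\eps^{-M d}$ terms of $m_1^d$ and $m_2^d$ must vanish (else no finite limit exists), write $m_2 = \eps^{-M}\bigl(a\ell_1 + \eps\hat{\ell}_1 + \eps^M u\bigr)$ where $u$ is the first-order deviation from $m_1$; matching the next nontrivial order of the expansion $m_1^d + m_2^d$ with $\ell_1^{d - 1}\ell_2$ and invoking the binomial theorem determines $u$ modulo $\eps$ up to a linear form $\ell_3$ that lies in the common leading part, giving $u \equiv \frac{1}{a^{d - 1} d}\ell_2 + \ell_3 \pmod{\eps}$. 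Rearranging the two summands puts the decomposition in the displayed form.

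\textbf{Main obstacle.} I expect the technical crux to be the $\eps$-expansion in the last step: controlling the powers of $\eps$ that appear in $m_1^d + m_2^d$, verifying that cancellation up to order $\eps^{M d}$ is forced by the existence of a finite limit, and then pinning down the coefficient $1/(a^{d - 1} d)$ via the binomial expansion. The remaining freedom in choosing $a$, $\hat{\ell}_1$, $\hat{\ell}_2 \in \bbC[[\eps]][\x]_1$, and $\ell_3 \in \bbC[\x]_1$ reflects the gauge symmetry of local border decompositions and accounts for \emph{all} such decompositions.
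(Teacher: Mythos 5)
Your proposal is correct and follows essentially the same route as the paper: apply Lemma~\ref{lem:borderrank-gad} with $r = 2$ to get the dichotomy $\ell_1^d + \ell_2^d$ versus $\ell_1^{d-1}\ell_2$, then trace through the proof of that lemma to see that a non-local decomposition splits into two rank-one local parts each converging on its own (giving the first displayed form), while a local decomposition forces the leading $\eps$-powers of the two summands to cancel with a residual difference of $\frac{1}{a^{d-1}d}\ell_2$ at the next order (giving the second displayed form). The paper leaves the binomial bookkeeping in the local case at roughly the same level of sketchiness that you flag as your ``main obstacle,'' so no essential idea is missing.
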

\begin{proof}
By~\Cref{lem:borderrank-gad} $f$ has a generalized additive decomposition
\[
f = \sum_{i = 1}^m \ell_i^{d - r_i + 1} g_i
\]
with $\sum_{i = 1}^m r_i = \bwr(f) = 2$, $\deg g_i = r_i - 1$
There are only two possible partitions $\sum r_i = 2$.
In the case $m = 2, r_1 = r_2 = 1$ the generalized additive decomposition is actually a Waring rank decomposition $f = \ell_1^d + \ell_2^d$.
In the case $m = 1, r_1 = 2$ the polynomial $g_1$ is a linear form, renaming it we have $f = \ell_1^{d - 1} \ell_2$.

From the proof of~\Cref{lem:borderrank-gad} it is clear that in the first case the decomposition must be a sum of two local decompositions of rank $1$, and
a local decomposition of rank $1$ is just a power of $\ell + \eps \hat{\ell}$ for some $\hat{\ell} \in \bbC[[\eps]][\x]_1$.

In the second case the decomposition must be local, which means that both summands in the decomposition have the form $\eps^{-M} (a \ell_1 + \eps \hat{\ell})$.
To obtain $\ell_1^{d - 1} \ell_2$ in the limit, the first $M$ terms in each summand must cancel,
and the terms in $\eps^M$ must differ by $\frac{1}{a^{d - 1}d} \ell_2$.
\end{proof}

\begin{theorem}\label{thm:bwr3}
A polynomial with $\bwr(f) = 3$ must have one of the three normal forms: $\ell_1^d + \ell_2^d + \ell_3^d$ or $\ell_1^{d} +  \ell_2^{d - 1}\ell_3^d$ or $\ell_1^{d - 1} \ell_2 + \ell_1^{d - 2} \ell_3^2$ (with $\ell_1^{d - 2} \ell_3^2$ as a special case).
\end{theorem}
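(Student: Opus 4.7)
The plan is to apply~\Cref{lem:borderrank-gad} (valid since $d \geq \bwr(f) - 1 = 2$) to produce a generalized additive decomposition
\[
f \;=\; \sum_{k=1}^{m} \ell_k^{\,d - r_k + 1}\, g_k, \qquad r_1 + \cdots + r_m = 3, \quad \deg g_k = r_k - 1,
\]
and then to carry out a case analysis on the partition of~$3$. There are three partitions, $1+1+1$, $2+1$, and $3$, which will be matched one-to-one with the three normal forms.

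For $3 = 1 + 1 + 1$ each $g_k$ is a nonzero scalar, so absorbing the scalar into $\ell_k$ yields the first normal form $f = \ell_1^d + \ell_2^d + \ell_3^d$. For $3 = 2 + 1$, the rank-two block contributes $\ell_1^{d-1} \ell_2$ (the linear factor $g_1$ renamed to $\ell_2$) and the rank-one block contributes a $d$-th power, so $f = \ell_1^{d-1} \ell_2 + \ell_3^d$, the second normal form.

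The substantive case is $3 = 3$, where $f = \ell_1^{d-2}\, g_1$ for some quadratic form $g_1$. The plan is to classify $g_1$ by the rank of its reduction $\overline{g_1} := g_1 \bmod \ell_1$, a quadratic form in $n - 1$ complementary variables. If $\operatorname{rank} \overline{g_1} = 0$ then $g_1 = \ell_1 \cdot L$ and $f = \ell_1^{d-1} L$ is already a border-Waring-rank-$\leq 2$ expression by~\Cref{lem:borderwaring2}, contradicting $\bwr(f) = 3$. If $\operatorname{rank} \overline{g_1} = 1$ then $\overline{g_1} = \ell_3^2$ up to rescaling, and completing the square in $\ell_1$ lets us write $g_1 = \ell_3^2 + \ell_1 \ell_2$, giving $f = \ell_1^{d-1} \ell_2 + \ell_1^{d-2} \ell_3^2$, the third normal form (with the special case $\ell_2 = 0$ producing $\ell_1^{d-2} \ell_3^2$).

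The main obstacle is to exclude the last sub-case $\operatorname{rank} \overline{g_1} = 2$. Here $\overline{g_1} = \ell_2 \ell_3$ with $\ell_1, \ell_2, \ell_3$ linearly independent, so $f = \ell_1^{d-2}(\ell_2 \ell_3 + \ell_1 L)$ for some linear form~$L$. The plan is to exhibit four linearly independent elements in the image of the order $(d-2)$ catalecticant $T_{d-2} \cdot f$. Direct computation of $\partial_1^{d-2} f,\ \partial_1^{d-3}\partial_2 f,\ \partial_1^{d-3}\partial_3 f,\ \partial_1^{d-4}\partial_2\partial_3 f$ (valid when $d \geq 4$) produces linear combinations of the monomials $\ell_2 \ell_3,\ \ell_1 \ell_3,\ \ell_1 \ell_2,\ \ell_1^2$, which are linearly independent, so $\dim (T_{d-2} \cdot f) \geq 4$. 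Semicontinuity of the catalecticant rank, which underlies the standard apolarity lower bound, then forces $\bwr(f) \geq 4$, a contradiction. The very small degrees $d = 2$ (quadratic forms, trivial) and $d = 3$ (where the catalecticants saturate at three and one instead invokes the classical fact $\bwr(\ell_1 \ell_2 \ell_3) = 4$) are treated separately.
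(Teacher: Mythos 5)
Your proof follows the same overall strategy as the paper's: invoke \Cref{lem:borderrank-gad} to obtain a generalized additive decomposition with $r_1 + \cdots + r_m = 3$, then do a case analysis on the partition. The partitions $1+1+1$ and $2+1$ are handled exactly as in the paper. For the partition $3$, where $f = \ell_1^{d-2} g_1$ with $g_1$ a quadratic, you classify $g_1$ by the rank of its image modulo $\ell_1$; the paper instead lists five explicit normal forms for $g_1$ ($a\ell_1^2$, $\ell_2^2$, $\ell_1\ell_2$, $\ell_1\ell_2 + \ell_3^2$, $a\ell_1^2 + \ell_2\ell_3$). The two classifications are equivalent (your rank $0$ subsumes $a\ell_1^2$ and $\ell_1\ell_2$, rank $1$ subsumes $\ell_2^2$ and $\ell_1\ell_2 + \ell_3^2$, rank $2$ is the fifth case), and your version is a tidier way to organize the argument. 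Your lower bound via the order-$(d-2)$ catalecticant is the mirror image of the paper's second-order derivative bound, so the substantive content is identical.

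However, there is a genuine gap in your treatment of $d = 3$, and it is worth being precise about why. In the rank-$2$ subcase at $d=3$, after absorbing the cross-terms $\ell_1\ell_2$ and $\ell_1\ell_3$ into $\ell_2, \ell_3$, the form becomes $f = \ell_1\hat\ell_2\hat\ell_3 + b\,\ell_1^3$. Your appeal to the classical fact $\bwr(\ell_1\ell_2\ell_3) = 4$ only covers $b = 0$; for $b \neq 0$ the cubic is projectively inequivalent to $\ell_1\ell_2\ell_3$ (it is a line plus a smooth conic meeting transversally, not a triangle), so you cannot simply cite that fact. One has to argue $\bwr\bigl(\ell_1(\ell_2\ell_3 + b\ell_1^2)\bigr) \geq 4$ for all $b$; this can be done, for instance, by a border-apolarity argument showing that no ideal of constant Hilbert function $3$ sits inside the annihilator, but it is not a one-line reduction. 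To be fair, the paper's own proof has the same gap: it asserts for $d > 2$ that the second-order partial derivative space has dimension $\geq 4$, but for a cubic in three essential variables the second-order derivatives are linear forms, so that space has dimension at most $3$; the catalecticant argument genuinely requires $d \geq 4$. You deserve credit for flagging $d = 3$ explicitly as a separate case, but the resolution you sketch is not yet a proof.
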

\begin{proof}
By~\Cref{lem:borderrank-gad} $f$ has a generalized additive decomposition
\[
f = \sum_{i = 1}^m \ell_i^{d - r_i + 1} g_i
\]
with $\sum_{i = 1}^m r_i = \bwr(f) = 3$, $\deg g_i = r_i - 1$, and $\bwr(\ell_i^{d - r_i + 1} g_i) \leq r_i$.

In the case $m = 3$, $r_1 = r_2 = r_3 = 1$ this is a Waring rank decomposition $f = \ell_1^d + \ell_2^d + \ell_3^d$.

In the case $m = 2$, we can assume $r_1 = 1$, $r_2 = 2$. The generalized additive decomposition becomes $\ell_1^d + \ell_2^{d - 1} \ell_3$, where $\ell_3 = g_2$ is a linear form.

In the case $m = 1$, $r_1 = 3$ we have $f = \ell_1^{d - 2} g_1$ where $g_1$ is a quadratic form, and $\ell_1^{d - 2} g_1$ has at most three-dimensional space of essential variables.
The quadratic form $g_1$ can be presented in one of the following forms: $a \ell_1^2$, $\ell_2^2$, $\ell_1 \ell_2$, $\ell_1 \ell_2 + \ell_3^2$, or $a \ell_1^2 + \ell_2 \ell_3$ for some linear forms $\ell_2, \ell_3$ linearly independent from $\ell_1$.

If $d > 2$, then all cases except the last are covered by the given normal forms, and in the last case the border rank of $\ell_1^{d - 2} g_1$ is at least $4$ if $d > 2$, which can be seen by computing the dimension of the second order partial derivative space, so it cannot appear.
If $d = 2$ then all forms have rank $3$ and are covered by the case $\ell_1^d + \ell_2^d + \ell_3^d$. \end{proof}

\end{document}